\newcommand{\tA}{\tilde{A}}
\newcommand{\tB}{\tilde{B}}
\newcommand{\px}{P_X}
\newcommand{\py}{P_Y}
\newcommand{\alphabet}[1] {{\mathsf #1}}
\newcommand{\pygx}{P_{Y|X}}
\newcommand{\OTTcost}{\mathcal{L}}
\newtheorem{theorem}{Theorem}[section]
\newtheorem{lemma}[theorem]{Lemma}
\newtheorem{definition}[theorem]{Definition}
\newtheorem{corollary}[theorem]{Corollary}
\newtheorem{remark}[theorem]{Remark}
\newtheorem{example}[theorem]{Example}
\newtheorem{assumption}[theorem]{Assumption}
\newcommand{\BlackBox}{\rule{1.5ex}{1.5ex}}  %
\newenvironment{proof}{\par\noindent{\bf Proof:\
}}{\hfill\BlackBox\\[2mm]}
\newcommand{\encpolicy}{\Pi}
\newcommand{\bnu}{\bar{\nu}}
\newcommand{\tail}{\mathcal{T}}
\newcommand{\cU}{\alphabet{U}}
\newcommand{\parenth}[1] {\left(#1\right)}
\newcommand{\abs}[1] {\left|#1\right|}
\newcommand{\braces}[1] {\left\{#1\right\}}
\newcommand{\brackets}[1] {\left[#1\right]}
\newcommand{\reals}{\mathbb{R}}
\newcommand{\R}{\reals}
\newcommand{\indicatorvbl}[1] {1_{\braces{#1}}}
\newcommand{\deq}{\equiv}
\renewcommand{\P}{\mathbb{P}}
\newcommand{\E}{ {\mathbb E}}
\newcommand{\prob}[1] {\P\parenth{#1}}
\newcommand{\probSimplex}[1]{ \mathcal{P}\parenth{\alphabet{#1}}}
\newcommand{\cF}{ {\cal F}}
\newcommand{\cG}{ {\cal G}}
\newcommand{\cH}{ {\cal H}}
\newcommand{\cB} {{\cal B}} %
\newcommand{\cV}{ {\alphabet{V}}}
\newcommand{\kldist}[2] {D \parenth{#1\|#2}}
\newcommand{\capacityCostFnVal}[1]{C\parenth{\statecostfn,\PDMC,#1}}
\newcommand{\capacityCostFn}{\capacityCostFnVal{\statecostval}}
\newcommand{\PDMC}{P_{\Out|\State}}
\newcommand{\norm}[1]{\left\|#1\right\|}
\newcommand{\Src}{W}
\newcommand{\cSrc}{\alphabet{\Src}}
\newcommand{\State}{X}
\newcommand{\cState}{\alphabet{\State}}
\newcommand{\cY}{\alphabet{Y}}%
\newcommand{\cX} {\alphabet{X}}%
\newcommand{\Out}{Y}
\newcommand{\bpi}{\bar{\pi}}
\newcommand{\statecostfn}{\eta}
\newcommand{\statecostval}{L}
\newcommand{\bitm}{\begin{itemize}}
\newcommand{\eitm}{\end{itemize}}
\newcommand{\benum}{\begin{enumerate}}
\newcommand{\eenum}{\end{enumerate}}
\newcommand{\beqa}{\begin{eqnarray}}
\newcommand{\eeqa}{\end{eqnarray}}
\newcommand{\beqas}{\begin{eqnarray*}}
\newcommand{\eeqas}{\end{eqnarray*}}
\newcommand{\baln}{\begin{align}}
\newcommand{\ealn}{\end{align}}
\newcommand{\balns}{\begin{align*}}
\newcommand{\ealns}{\end{align*}}
\newcommand{\cW}{ \cSrc}
\renewcommand{\P}{\mathbb{P}}
\newcommand{\bP}{\mathbb{\overline{P}}}
\newcommand{\bprob}[1] {\bP\parenth{#1}}
\newcommand{\bE}{ {\overline{\mathbb E}}}
\newcommand{\tW}{{\tilde{W}}}
\newcommand{\tw}{\tilde{w}}
\newcommand{\tu}{\tilde{u}}
\newcommand{\tv}{\tilde{v}}
\newcommand{\tS}{\tilde{S}}
\newcommand{\tP}{\tilde{\P}}
\newcommand{\cFYinf}{\cF^Y_{1:\infty}}
\newcommand{\cFYn}{\cF^Y_{1:n}}
\newcommand{\cFYnb}{\cF^Y_{1:n-1}}
\newcommand{\itt}[1]{\textit{#1}} 
\newcommand{\post}{\psi}
\renewcommand{\encpolicy}{\Psi}
\newcommand{\bpost}{\bar{\post}}
\title{Construction and Analysis of Posterior Matching in Arbitrary Dimensions via Optimal Transport}
\author{Diego A. Mesa \thanks{
D. A. Mesa is  a postdoctoral fellow in the Departments of Biomedical Informatics, and Electrical Engineering and Computer Science at Vanderbilt University and was supported by an NSF Graduate Research Fellowship. } \and Rui Ma \thanks{R. Ma
is currently a senior staff associate at Dexcomm, Inc and was supported by a 
postdoctoral fellowship from the NSF Center for Science of Information.  }  \and
Siva Gorantla \thanks{S. Gorantla is now a senior software engineer at Google, Inc.} \and  Todd P. Coleman \thanks{T. P. Coleman a Professor in the Department of Bioengineering, University of California San Diego and is supported by NSF grants CCF-1065022 and CCF-0939370, and Army Research Office grants ARO-62793-RT-REP, and ARO MURI W911NF-15-1-0479.}}
\begin{document}

\maketitle

\begin{abstract}
The posterior matching scheme, for feedback encoding of a message point lying on the unit interval over memoryless channels, maximizes mutual information for an arbitrary number of channel uses.  However, it in general does not always achieve any positive rate; so far, elaborate analyses have been required to show that it achieves any positive rate below capacity. More recent efforts have introduced a random ``dither'' shared by the encoder and decoder to the problem formulation, to simplify analyses and guarantee that the randomized scheme achieves any rate below capacity.  Motivated by applications (e.g. human-computer interfaces) where (a) common randomness shared by the encoder and decoder may not be feasible and (b) the message point lies in a higher dimensional space, we focus here on the original formulation without common randomness, and use optimal transport theory to generalize the scheme for a message point in a higher dimensional space.  By defining a stricter, almost sure, notion of message decoding, we use classical probabilistic techniques (e.g. change of measure and martingale convergence) to establish succinct necessary and sufficient conditions on when the message point can be recovered from infinite observations: Birkhoff ergodicity of a random process sequentially generated by the encoder.  We also show a surprising ``all or nothing'' result: the same ergodicity condition is necessary and sufficient to achieve any rate below capacity.  We provide applications of this message point framework in human-computer interfaces and multi-antenna communications.  %
 \end{abstract}

\begin{IEEEkeywords}
feedback communication, posterior matching, optimal transport theory, Markov chains, ergodicity, Bayesian inference
\end{IEEEkeywords}

\section{Introduction}
Consider the communication problem shown in Figure~\ref{fig:channelCodingFB}: a message $W$ is
signaled sequentially with feedback across a noisy channel.  At time step $n$,
an encoder uses the message $W$ and previous channel outputs $Y_1, \ldots,
Y_{n-1}$ to specify a signal $X_n$, which is then transmitted across the channel.

Motivated by emerging applications in human-computer interaction and the internet of things, we model the problem as $W \in \cW \subset \reals^d$ and consider optimizing over encoding strategies that map message point $W$ and previous channel outputs $Y_1, \ldots, Y_{n-1}$ into the next channel input
$X_n$.  The decoder, with knowledge of the
encoder's strategy, simply performs Bayesian updates to sequentially construct a
posterior belief $\post_n$ about the message point after observations
$Y_1,\ldots,Y_n$.  In this setup, we do not prespecify a block length; we can define \itt{reliability} in terms of
the message point $W$ being recoverable from all previous observations $Y_1, Y_2,
\ldots$.  This is equivalent to the condition that $\post_n$ tends to a point
mass at $W$.  Secondly, we can define the notion of \itt{achieving a rate} in
terms of the speed of convergence of $\post_n$ towards a point mass.

\begin{figure}[t]
    \centering
    \begin{overpic}[width=\columnwidth]{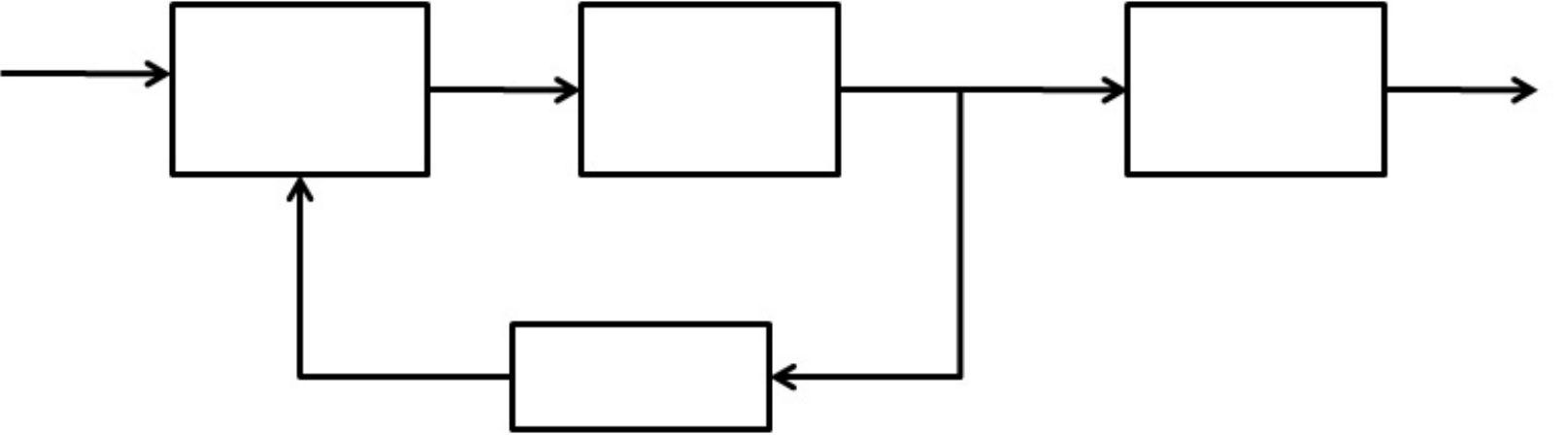}
        \put(12,23){\bf causal}
        \put(12,19){\bf encoder}
        \put(72,23){\bf posterior}
        \put(72,19){\bf update}
        \put(39,23){\bf noisy}
        \put(39,19){\bf channel}
        \put(36,3){\bf delay}
        \put(91,24){\Large $\post_{n}$}
        \put(59,24){\Large $\Out_n$}
        \put(28,24){\Large $\State_n$}
        \put(2,24){\Large $\Src$}
        \put(20,8){\Large $Y^{n-1}$}
        \put(2,8){$\reals^d$}
    \end{overpic}
    \caption{Communication of a message point $\Src$ with causal feedback over a
    memoryless channel. The message point $W$ lies in a continuum $\cW \subset
    \reals^d$ and it is our desire to optimize the encoder towards steering the
    posterior distribution on $W$ given $Y_1,\ldots,Y_n$: $\post_n$, towards a
    point mass at $W$ as rapidly as possible.}
    \label{fig:channelCodingFB}
\end{figure}

The posterior matching (PM) scheme recently developed by Shayevitz and Feder
\cite{Shayevitz2011pm} is a feedback message point encoding scheme of this
flavor when the message is a point on the unit interval, e.g. $\cW=(0,1)$.  It generalizes the scheme by Horstein that was
specific to the binary symmetric channel \cite{horstein1963stu}, and the work
of Schalkwijk and Kailath \cite{schalkwijk1966csaOne} that was specific to the
additive noise Gaussian channel.  The PM scheme has an iterative, time-invariant
state space description, and maximizes mutual information.  This is
desirable for implementation because:
\begin{itemize}
    \item The scheme does not have a pre-specified block length; it operates sequentially and at each time horizon, it maximizes the mutual information between the message point and all causal channel outputs.
    \item There is no forward error correction - it simply adapts on the fly and
    sequentially ``hands the decoder what is missing''.
    \item The scheme admits a simple time-invariant dynamical system structure.
\end{itemize}
It can be interpreted as the optimal solution to an interactive two-agent sequential decision-making 
problem consisting of a Markov source process, a causal encoder with feedback, and a 
causal decoder \cite{GorantlaColemanInformationTheoreticIT2011}.

However, from an information theoretic perspective, maximizing mutual information is necessary  but not sufficient to achieve any rate below capacity.  Indeed, examples have  been shown \cite{shayevitz2009posterior} 
for which no positive rate can be achieved for the PM scheme.  This has led to  `regularity' conditions imposed on the induced joint distribution between channel inputs and outputs to suffice for achieving any rate below capacity \cite{Shayevitz2011pm}.  It is unclear if these conditions, which conceptually guarantee that the channel law is not too sensitive to perturbations on the channel input, are strictly required to achieve any rate below capacity.
Moreover, motivated by human-computer interfaces \cite{omar2010feedback,tantiongloc2016information} and other applications, it would be desirable to have a problem formulation and solution, where the message point is in higher dimensions (e.g.  $\cW \subset \reals^d$), that maintains the same iterative-time-invariant encoding properties as well as sufficient conditions on achieving any rate below capacity.  

In their concluding remarks, Shayevitz and Feder discussed extending the
posterior matching scheme to channels with memory, and in doing so, argued that
a multidimensional message point would be required. They gave a conceptual
example based on a Markovian channel of order $d$, which would require a $d+1$
dimensional message point in order to provide the ``necessary degrees of freedom
in terms of randomness''\cite[Section VIII]{Shayevitz2011pm}. 

In what follows, we formulate a generalization to the message point feedback communication problem for higher dimensions and develop a posterior matching scheme with optimal transport theory that inherits all the desirable properties of the one-dimensional scheme. We use classical probabilistic techniques (e.g. change of measure and martingale convergence) to establish \textit{succinct} necessary and sufficient conditions on achieving capacity: Birkhoff ergodicity of a random process sequentially generated by the encoder.

\subsection{Previous and Related Work} %
\label{sub:previous_work}
Horstein first introduced a problem of this framework for the binary symmetric channel (BSC) \cite{horstein1963stu}, where the message point lies on the $(0,1)$ interval.  In this work, Horstein showed that the median of the posterior distribution is a sufficient statistic for the decoder to provide the encoder, for which the subsequent channel input signals whether the message point is larger or smaller than this threshold.  Subsequently, Schalkwijk and Kailath \cite{schalkwijk1966csaOne}\cite{schalkwijk1966csaTwo} considered signaling a message point on the $(0,1)$ interval over the additive white Gaussian channel (AWGN) with feedback.  There, they showed a close connection with estimation theory, where the minimum mean square error (MMSE) estimate of the message given all the observations plays a key role in the feedback encoder scheme.  It was also shown that not only can capacity be achieved, but also a doubly exponential error exponent.  

Shayevitz and Feder introduced the posterior matching scheme for the message point lying on the $(0,1)$ interval in \cite{Shayevitz2011pm} that is applicable to {\it any} memoryless channel.  It includes the encoding schemes for the BSC and AWGN channel by Horstein and Schalkwijk-Kailath as special cases and provides the first rigorous proof that the Horstein scheme achieves capacity on the BSC.

Applications and variations of this formulation are manyfold.  The Horstein
scheme, combined with arithmetic coding of a sequence of symbols in an ordered symbolic alphabet to represent any sequence with enumerative source encoding \cite{cover1973enumerative} as a message point on the $(0,1)$ line, has been used for brain-computer interfaces to specify a sentence or a smooth path \cite{OmarEtAkBCIIJHJCIsubmittedNov09}, to navigate mobile robots \cite{akce2013brain}, and to remotely teleoperate an unmanned aircraft  \cite{omar2010feedback}.  Many active learning problems borrow principles from posterior matching, but with  different formulations or performance criteria; generalized 20 questions for target search \cite{tsiligkaridis2014collaborative,tsiligkaridis2017decentralized} and generalized binary search for function search \cite{castro2008minimax,nowak2011geometry} serve as examples.   Naghsvar and Tavidi considered variable-length encoding with feedback and utilized principles from stochastic control and posterior matching  to demonstrate non-asymptotic upper bounds on expected code length and deterministic one-phase coding schemes that achieve capacity and attain optimal error exponents \cite{naghshvar2015extrinsic}.   More general variable-length formulations of active hypothesis testing, where the statistics of the measurement may depend on the query or are controllable, have been developed in \cite{naghshvar2013active,kaspi2018searching,chiu2016sequential,lalitha2017measurement}.  Feedback coding channels with memory using principles from posterior matching have been explored in  \cite{bae2010posterior,anastasopoulos2012sequential,wu2016zero,anastasopoulos2017variable}.

Although the PM scheme maximizes mutual information, in some
situations the posterior $\post_n$ never converges to a
point mass at $W$, implying that no positive rate is achievable. \cite[Example 11]{Shayevitz2011pm} shows that the breakdown of reliability or achieving capacity is not solely a property of the channel: for the same channel, variants of the original scheme involving measure-preserving transformations of the input can ameliorate these issues.  Sufficient conditions for
reliability \cite[Lemma 13]{Shayevitz2011pm} and achieving capacity  \cite[Theorem 4]{Shayevitz2011pm} were originally established in \cite{Shayevitz2011pm}, involving `regularity' and uniformly bounded
max-to-min ratios assumptions along with elaborate fix-point analysis.   This has led researchers to slightly alter the problem formulation and encoding schemes to more simply confirm guarantees on subsequent performance.  Li and El Gamal \cite{cheuk2015EfficientFeedback} considered a non-sequential, fixed-rate,
fixed-block-length feedback coding scheme for discrete memoryless channels (DMCs) when $\cW=(0,1)$ and introduced a random dither known to the encoder and decoder to provide a simple proof of achieving capacity.  

Also, Shayevitz and Feder  \cite{shayevitz2016simple} examined a {\it
randomized} variant of the \emph{original} PM scheme with a random dither and
were able to provide a much simpler proof of optimality over general memoryless
channels when $W \in (0,1)$. That is, as compared to the proofs above which are
restricted to non-sequential variants with a fixed number of messages and only
apply to DMCs, Shayevitz and Feder \cite{shayevitz2016simple} used a random
dither to provide a proof of optimality of the sequential, horizon-free,
\emph{randomized} posterior matching scheme over general memoryless channels.
These settings where the encoder and decoder share a common source of randomness
by way of a dither, however, may be undesirable in some situations (e.g. when
considering human involvement as described above).

Recently, we have considered the case when the message point lies in a higher dimensional space (e.g. $\cW \subset \reals^d$ for some arbitrary $d$) and constructed a feedback encoding scheme using optimal transport theory \cite{ma2011generalizing}.  
Inspired by the time-invariant dynamical systems structure of the \itt{original}
PM scheme (without shared randomness at the encoder), and motivated by communication applications where humans play a role, below we develop appropriate notions of reliability and rate achievability in
the multidimensional message point setting involving almost sure convergence.
This allows us to use classical probability tools, including change of measure and martingale convergence, that
have recently been employed by Van Handel to provide rigorous and succinct results pertaining to filter stability in hidden Markov models \cite{van2009stability,chigansky2009intrinsic}, to provide {\it succinct} necessary and
sufficient conditions for the generalized PM scheme to attain optimal
performance.  This provides a clear characterization of optimality of the original PM scheme in arbitrary dimensions in terms of Birkhoff ergodicity of an appropriate random process, without requiring the use of a
dither.

\subsection{Main Contribution}
In this paper, we address two unmet needs:
\bitm
    \item [(a)] We develop a generalization to the PM scheme for arbitrary
    memoryless channels where $\cW \subset \reals^d$ for any $d \geq 1$.
    Specifically, using optimal transport theory \cite{villani2009optimal}, we develop recursive encoding schemes that share  the same
    mutual-information maximizing and iterative, time-invariant properties; moreover, they
    reduce to that of Shayevitz and Feder \cite{Shayevitz2011pm} when
    $\cW=(0,1)$ as a special case.
    \item [(b)] We define  notions of reliability and achievability in a manner
    analogous to \cite{Shayevitz2011pm} but in terms of almost-sure convergence
    of random variables. With this, we then develop necessary and sufficient
    conditions for the scheme to be reliable and/or attain optimal convergence
    rate (e.g. achieve capacity).  We show that both of these conditions have
    the same necessary and sufficient condition: the Birkhoff ergodicity of a random
    process $(\tW_n)_{n \geq 1}$ within the encoder of a PM scheme.
\eitm
Optimal transport theory, used to construct schemes in (a), is also exploited in (b)
where an invertibility property implicit in these schemes is used to show the equivalent
conditions in (b).  The rest of the paper is organized as follows:

In Section~\ref{sec:preliminaries}, we provide definitions, terminology, and
notations that will be used throughout the paper.

In Section~\ref{sec:problemSetup}, we formulate the message point feedback
communication problem for general memoryless channels, define the performance measure of
reliability and of a rate being achievable, both in terms of almost-sure
equivalents of the notions developed by Shayevitz \& Feder
\cite{Shayevitz2011pm}.
We also provide necessary conditions for
reliability of any message feedback encoder in terms of total variation convergence between two posterior
distributions on the message point having the same observations but with
different priors on $\cW$
(Theorem~\ref{thm:reliability:totalvariationPosterior}).  We also discuss how this necessary condition and its proof are intimately related to results of Van Handel \cite{chigansky2009intrinsic} on filter stability of hidden Markov models.

In Section~\ref{sec:PMscheme}, we define posterior matching schemes in arbitrary
dimensions as time-invariant dynamical systems for an intermediate encoder state
variable $\tW_n = S_{Y_{n-1}}(\tW_{n-1})$ where the channel input satisfies $X_n
= \phi(\tW_n)$.  These schemes maximize $I(W;Y_1,\ldots,Y_n)$ for every $n \geq 1$ and satisfy a
reliability necessary condition, developed in Lemma
~\ref{lemma:reliabilityInvertibility}, that $W\equiv \tW_1$ can be recovered
from $\tW_{n+1}$ and $Y_{1:n}$ for any $n \geq 1$.  We then show in
Theorem~\ref{theorem:PM:properties} a collection of properties that PM schemes
share, including the stationary and Markov nature of the random processes
$(\tW_n)_{n \geq 1}$ and $(\tW_n,Y_n)_{n \geq 1}$. We then demonstrate in
Theorem \ref{thm:monge:phi} and Corollaries \ref{corollary:monge:unique:phi} and 
\ref{corollary:unique:S}, via the theory of optimal transport, that we can use
optimization approaches to explicitly and uniquely construct such PM schemes.

In Section~\ref{sec:reliability}, we demonstrate in
Theorem~\ref{thm:necessarySufficientConditionsPM:reliability} that the necessary
and sufficient condition for reliability of the PM scheme is the Birkhoff
ergodicity of the random process $(\tW_n)_{n \geq 1}$. This theorem uses the
unique invertibility and statistical independence properties of the PM scheme along with classical probabilistic techniques such as martingale convergence and Blackwell \& Freeman's equivalent conditions on Birkhoff ergodicity of stationary Markov chains \cite[Thm 2]{blackwell1964tail}.

In Section~\ref{sec:achievingCapacity},  we show that three things are
equivalent: (a) the PM scheme is reliable; (b) the random process $(\tW_n)_{n
\geq 1}$ is ergodic; and (c) the PM scheme achieves any rate below capacity.
This gives rise to an ``all-or-nothing'' property for PM schemes: ergodicity of
the random process $(\tW_n)_{n \geq 1}$ elucidates essentially everything about
the PM scheme's performance: either no bits can be reliably transmitted (e.g.
reliability does not hold), or any rate below capacity is achievable.
The proofs involve the notion of ``pullback intervals'' developed by Shayevitz \& Feder in \cite{Shayevitz2011pm}, the  construction of an auxiliary probability measure for which the logarithm of an appropriate Radon-Nikodym derivative is the information density, and classical probabilistic techniques including change of measure, martingale convergence, and stationarity of Markov chains.

In Section~\ref{sec:applications}, we discuss applications in brain-computer
interfaces and multi-antenna communications, and in
Section~\ref{sec:discussionConclusion} we provide some discussion and concluding
remarks.

\section{Definitions and Notations} \label{sec:preliminaries}

\subsection{Probability Definitions}
\begin{list}{\labelitemi}{\leftmargin=0.1em}
    \item Denote $a_i^j$ as $(a_i,\ldots,a_j)$ and $a^j \triangleq a_1^j$.  We
    also use $a_{i:j}$ and $a^j_i$ interchangeably.

    \item Denote $\mu$ as the Lebesgue measure on the Borel space for $\reals^d$.

    \item Denote a probability space as $(\Omega,\cF,\P)$ and the set of all
    probability measures on a measurable space $(\cX,\cF_\cX)$ as
    $\probSimplex{X}$.    %

    \item A {\em transition kernel} on $\cF_\cY \times \cX$ is a mapping
    $\pygx(\cdot|\cdot) : \cF_\cY \times \cX \to [0,1]$, such that
    $\pygx(\cdot|x) \in \probSimplex{Y}$ for every $x \in \cX$ and
    $\pygx(A|\cdot)$ is measurable for every $A \in \cF_\cY$. 
    For all $B \in
    \cF_\cY$, define the marginal distribution $\py \in \probSimplex{Y}$ by
    \begin{align*}
     \py(B) &\deq  \int_{\cX}  \pygx(B|x) \px(dx).
    \end{align*}

    \item Define the {\it join} on two $\sigma$-algebras $\mathcal{A}$ and
    $\mathcal{B}$, given by $\mathcal{A} \vee \mathcal{B}$, as the \!smallest
    $\sigma$-algebra containing both:
    \beqas
        \mathcal{A} \vee \mathcal{B} = \braces{A_i \cap B_j: A_i \in \mathcal{A}, B_j \in \mathcal{B}}
    \eeqas

    \item Denote $\sigma(Y)$ as the sigma-algebra generated by random variable $Y$ and $\sigma(Y_k,\ldots,Y_m)$ as
    \[ \sigma(Y_k,\ldots,Y_m) \triangleq \bigvee_{j=k}^{m} \sigma(Y_j).\]
    We  use $\sigma(Y_k,\ldots,Y_m)$ and $\cF^Y_{k,m}$ interchangeably.

\end{list}

\subsection{Information Theoretic Definitions}
\begin{list}{\labelitemi}{\leftmargin=0.1em}
    \item We define the {\em KL divergence} as
    \begin{align*}
        \kldist{P}{Q} \deq \begin{cases}
        \E_P[\log (dP/dQ)], & \text{if } P \ll Q\\
        + \infty, & \text{otherwise }
        \end{cases}
    \end{align*}

    \item For $P_{X,Y} \in \probSimplex{X \times \cY}$ with marginals $P_X$ and $P_Y$,  the {\it information density} is given by     \cite{verdu1994general}:
    \begin{align}
        i(x,y) %
            &\triangleq \log \frac{dP_{X,Y}}{d (P_X \times P_Y)}(x,y) \label{eqn:defn:informationdensity:b}.
    \end{align} \label{eqn:defn:informationdensity}
    The {\it mutual information} is defined as
    \begin{align}
        I(X;Y) &\triangleq \E[i(X,Y)] %
        \label{eq:defn:mutualinformation:conditionalKLdivergence:a}.
    \end{align}

\end{list}

\subsection{Definitions for Birkhoff Ergodicity of Random Processes}
\begin{list}{\labelitemi}{\leftmargin=0.1em}
    \item For two probability measures $P,Q$ defined on $\parenth{\cX,\cF}$ with
    densities with respect to the Lebesgue measure $\mu$ given by $p,q$, define
    the total variation distance $\|P-Q\|$ as:
    \begin{align}
        \|P-Q\| \triangleq& \sup_{A \in \cF} \abs{P(A)-Q(A)}  \label{eqn:defn:dTV} 
    \end{align}
    
    \item For a time-homogeneous Markov process  $(V_n)_{n \geq 1}$, denote
    $\P_\nu$ as the distribution on $(V_n)_{n \geq 1}$ with $P_{V_1}= \nu$.

    \item For a random process $(V_n)_{n \geq 1}$ on $(\Omega,\cF,\P)$, the tail
    \begin{align*}
        \tail_{V} \triangleq  \bigcap_{n \geq 1} \sigma(V_{n:\infty})
    \end{align*}
    is $\P$-trivial if $\P(E)=0$ or $\P(E)=1$ for any $E \in \tail_{V}$.

    \item A stationary Markov process $(V_n)_{n \geq 1}$ on a probability space
    $(\Omega,\cF,\P)$ is defined as {\bf $\P$-ergodic} when $\tail_{V}$ is
    $\P$-trivial.

    \item A stationary Markov process $(V_n)_{n \geq 1}$ on $(\Omega,\cF,\P)$
    with invariant measure $\nu$ on $(\cV,\cF_\cV)$ is denoted to be $\P$-mixing
    (in the ergodic theory sense) if for any $A,B \in \cF_{\mathsf{V}}$:
    \begin{equation*}
      \lim_{n \to \infty} \prob{V_n \in A, V_1 \in B} =\nu(A) \nu(B).
    \end{equation*}
    
    \item \begin{lemma}[Sec 2.5,\cite{bradley2005basic}]\label{corollary:mixingImpliesErgodicity}
    If a stationary Markov process $(V_n)_{n \geq 1}$ is $\P$-mixing then it is
    $\P$-ergodic. \end{lemma}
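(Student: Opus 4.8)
The plan is to run the classical ``mixing implies ergodicity'' argument, with one caveat: here ``$\P$-ergodic'' has been defined as triviality of the \emph{tail} $\sigma$-algebra $\tail_V$, not of the shift-invariant $\sigma$-algebra, so the last step genuinely uses the Markov structure. Fix $E\in\tail_V$; the goal is $\prob{E}\in\{0,1\}$. I would split the work into an elementary part --- promoting the one-coordinate mixing hypothesis to a bona fide asymptotic-independence statement on the path space, and deducing that the invariant $\sigma$-algebra is $\P$-trivial --- and a non-elementary part --- upgrading that to triviality of the tail.

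For the elementary part, first extend $\P$-mixing from single coordinates to finite-dimensional cylinders. For $j<n$ and cylinders $F_0$ (in coordinates $1,\ldots,j$) and $G_0$ (in coordinates $n,\ldots,n+k$), the Markov property lets one condition $\{(V_1,\ldots,V_j)\in F_0\}\cap\{(V_n,\ldots,V_{n+k})\in G_0\}$ through the single coordinate $V_j$, so that its probability is an integral, against the law of the pair $(V_j,V_n)$, of a product of two transition-kernel terms; by stationarity the law of $(V_j,V_n)$ equals that of $(V_1,V_{n-j+1})$, whose factorization as $n\to\infty$ is precisely the hypothesis. Hence, for every fixed $j,k$, $\prob{(V_1,\ldots,V_j)\in F_0,\ (V_n,\ldots,V_{n+k})\in G_0}\to\prob{(V_1,\ldots,V_j)\in F_0}\,\prob{(V_1,\ldots,V_k)\in G_0}$, and then, by approximating by cylinders --- with an error that is \emph{uniform in $n$}, since the shift $\theta$ preserves $\P$ --- to $\prob{A\cap\theta^{-n}B}\to\prob{A}\prob{B}$ for all measurable $A,B$ on the path space. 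Taking $A$ with $\theta^{-1}A=A$, one has $\prob{A}=\prob{A\cap\theta^{-n}A}$ for every $n$, and the right side tends to $\prob{A}^2$, so $\prob{A}\in\{0,1\}$: the invariant $\sigma$-algebra is $\P$-trivial.

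The non-elementary part is the real obstacle, and where I expect the work to concentrate. Invariant-triviality alone does not force tail-triviality --- a periodic stationary Markov chain is ergodic but has a non-trivial tail (recording the asymptotic cyclic class), and, more drastically, there exist mixing stationary non-Markov processes with non-trivial tail --- so one cannot just feed ``$E$ versus $E$'' into the decoupling above, since a tail event need not be shift-invariant. What the strictly stronger mixing hypothesis buys, on top of ergodicity, is aperiodicity, and for a stationary Markov chain the combination ``ergodic and aperiodic'' is equivalent to tail-triviality; making this precise requires controlling the asymptotics of the $n$-step transition kernel (an Orey-type fact: under $\P$-mixing, $P^n(x,\cdot)$ approaches $\nu$ strongly enough to trivialize $\tail_V$). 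This is exactly the tail-$\sigma$-field analysis of stationary Markov chains due to Blackwell \& Freedman --- the same circle of results, \cite[Thm~2]{blackwell1964tail}, that the paper invokes in Section~\ref{sec:reliability} --- so for the present statement the cleanest route is to prove the elementary part directly as above and quote the tail-triviality upgrade from \cite[Sec.~2.5]{bradley2005basic}.
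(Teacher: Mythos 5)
Your proposal is sound and ultimately coincides with the paper's treatment: the paper offers no argument at all for this lemma, simply citing Bradley~\cite[Sec.~2.5]{bradley2005basic}, and your route likewise delegates the decisive step --- upgrading triviality of the shift-invariant $\sigma$-algebra to triviality of the tail $\tail_{V}$ --- to that same circle of results (Bradley / Blackwell--Freedman~\cite{blackwell1964tail}). Your elementary preliminary (promoting the one-coordinate mixing hypothesis to mixing of the shift via the Markov property and stationarity, with approximation errors uniform in $n$) is correct, and your diagnosis that tail-triviality rather than invariant-triviality is where the genuine content lies is exactly right.
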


\end{list}

\subsection{Definitions for Optimal Transport Theory}
\begin{list}{\labelitemi}{\leftmargin=0.1em}
    \item For $u \in \reals^d$ and $M$ a $d$-by-$d$ positive definite matrix,
    denote the $M$-weighted Euclidean norm as $\|u\|_M^2 = u^T M u$.

    \item
    For $\cW \subset \reals^d$ and $\cF_{\cW}$ its Borel sets, consider two
    probability measures $P,Q \in \probSimplex{W}$. We say that a
    Borel-measurable map $S: \cW \to \cW$  {\it pushes} $P$  to $Q$, specified
    as $S \#P=Q$, if a random variable $W$ with distribution $P$ results in the 
    random variable $V \equiv S(W)$ having distribution $Q$. %

    \item 
    \begin{definition}
        A map $S: \cW \to \cW$ is a {\it diffeomorphism} if $S$ is invertible,
        $S$ is differentiable, and $S^{-1}$ is differentiable.
        \label{defn:diffeomorphism}
    \end{definition}

\end{list}
 
\section{The Message Point Feedback Communication Problem} \label{sec:problemSetup}

\newcommand{\RNderivativenu}{\frac{d\nu}{d\bnu}}
\newcommand{\RNderivativepin}{\frac{d\post^\nu_n}{d\bpost_n}(W)}
\newcommand{\RNderivativeA}{\frac{d \nu}{ d \bnu}(W)}
\newcommand{\denominatorA}{\bE \brackets{ \RNderivativeA | \cFYn}}
\newcommand{\denominatorinf}{\bE \brackets{ \RNderivativeA | \cFYinf}}

\subsection{System Description}
We consider the communication of a message point $W \in \cW$ over a memoryless
channel with causal feedback, as given in Figure~\ref{fig:channelCodingFB}.

We make the following assumption:
\begin{assumption}\label{assumption:cW}
$\cW,\cX, \cY$ each are Euclidean spaces with Borel sigma-algebras
and $\cW,\cX \subset \reals^d$.   Moreover, $\cW$ is open, bounded and convex.
\end{assumption}

Our notion of a communication system in
Figure~\ref{fig:channelCodingFB} is somewhat non-traditional because (a) we do
not a priori specify the number $n$ of channel uses and (b) by virtue of Assumption~\ref{assumption:cW}, $\cW$ is a continuous rather than discrete set of possible messages.

The encoder policy $\encpolicy=(e_n: \cW \times \cY^{n-1} \to \cX)_{n \geq 1}$
specifies the next channel input $X_n = e_n(W,Y^{n-1})$. The channel input $\State_n \in \cState$ is passed through a time-invariant memoryless channel to produce $Y_n \in \cY$:
\beqa
    \prob{\Out_n \in A | \sigma(X_{1:\infty},Y_{1:n-1}) } =
    \pygx(A|X_n). \label{eqn:defn:DMC}
\eeqa

\begin{definition}\label{defn:Pw:bnu}
Define $P_W \equiv \bnu$ as the uniform distribution on $\cW$.   We will consider two cases,  $W \sim \bnu$, and also when $W \sim \nu$, where $\nu \ll \bnu$ but is otherwise arbitrary. 

\end{definition}

Given the  distribution $\nu~(\bnu)$ on the message point $W$, the policy $\encpolicy=(e_n: \cW
\times \cY^{n-1})_{n \geq 1}$ and the channel $\pygx$, 
denoted succinctly as $\parenth{\nu,\encpolicy,\pygx}$~$\parenth{\parenth{\bnu,\encpolicy,\pygx}}$ respectively, the full distribution on  $\parenth{W, (X_n)_{n \geq1},(Y_n)_{n \geq1}}$ is specified.
For any encoder policy $\encpolicy$ and given channel $\pygx$, we will always
consider two probability spaces:
\begin{itemize}
    \item $(\Omega,\cF,\P)$ pertaining to $\parenth{\nu,\encpolicy,\pygx}$.
    \item $(\Omega,\cF,\bP)$ pertaining to
        $\parenth{\bnu,\encpolicy,\pygx}$.
\end{itemize}
Since $\cW,\cX,\cY$ are countably generated, a regular version of the posterior distribution $\post_n~(\bpost_n)$  on $W$ given $Y_{1:n}$ exists under $\P~(\bP)$, respectively.   Let $\lambda_Y$ be a $\sigma$-finite (reference)
measure on $\cY$ so that $P_{X_n,Y_n} \ll P_{X_n} \otimes \lambda_Y$, and define
$L(y|x) \triangleq \frac{dP_{Y_n|X_n=x}}{d\lambda_Y}(y)$.  Then we can describe
the recursive update to $\post_n~(\bpost_n)$, which satisfies a recursive update equation
pertaining to Bayes' rule.  Specifically, for any $A \in \cF_\cW$, we have:
\begin{subequations} \label{eqn:posteriorprobabilities}
    \begin{eqnarray}
    \post_n(A) &\triangleq& \prob{W \in A | \cFYn} \\
    \bpost_n(A) &\triangleq& \bprob{W \in A | \cFYn} %
    \end{eqnarray}
\end{subequations}
As noted in Definition~\ref{defn:Pw:bnu}, $P_W$ will always indicate $\bnu$, the uniform distribution on $\cW$ for the remainder of the manuscript, and $(\Omega,\cF,\bP)$ will be the probability space for which all formal performance criteria will be evaluated.  $\nu \ll \bnu$ and the associated probability space $(\Omega,\cF,\P)$ will be defined in a context-specific manner for certain theorems (Theorem~\ref{thm:reliability:totalvariationPosterior}, Lemma~\ref{lemma:equivalentConditions:MarkovStationaryErgodic}) and their proofs (Theorem~\ref{thm:necessarySufficientConditionsPM:reliability}).

\begin{remark}
    Our rationale for making the definitions of $\bnu$ and $\nu$ along with $\P$ and $\bP$ will become
    clearer in Section~\ref{sec:reliability}: it allows us to be notationally
    consistent with definitions used in the analysis of nonlinear filter
    stability, which explores when the posterior distribution in a hidden Markov
    model is sensitive to the initial distribution on the state variable (e.g.
    $\nu$ vs $\bnu$) \cite{van2009stability}.  One key difference, however, is
    that the nonlinear filter pertains to recursive updates of the posterior
    distribution of the \itt{current} latent state of a hidden Markov model.  In our
    setting, our focus is squarely on recursively updating $\post_n$ and
    $\bpost_n$, the posterior distribution on the latent \itt{initial} condition $W$.
\end{remark}

\subsection{Reliable Communication} \label{sec:reliableCommunication}
We now provide a formalism of achievability that implies the standard Shannon
theoretic notions in \cite{CoverThomas06}.

\begin{definition}[Reliability]\label{defn:reliability}
    An encoder policy $\encpolicy$ is reliable if $W$ is  $\cFYinf$-measurable $\bP-a.s.$
\end{definition}
In other words, reliability means that from all the observations $(Y_n)_{n \geq 1}$, one can perfectly reconstruct $W$.  This implies that any fixed number of bits can be decoded from all the measurements.  Because the number of decodable messages does not grow with the number of channel usages $n$, we can analogously say that this means achieving ``zero rate''. Figure~\ref{fig:pi} shows the posterior distribution trajectory of two encoding schemes, one that is reliable and another that is not.

\begin{figure}[t]
    \centering
    \begin{overpic}[width=\columnwidth]{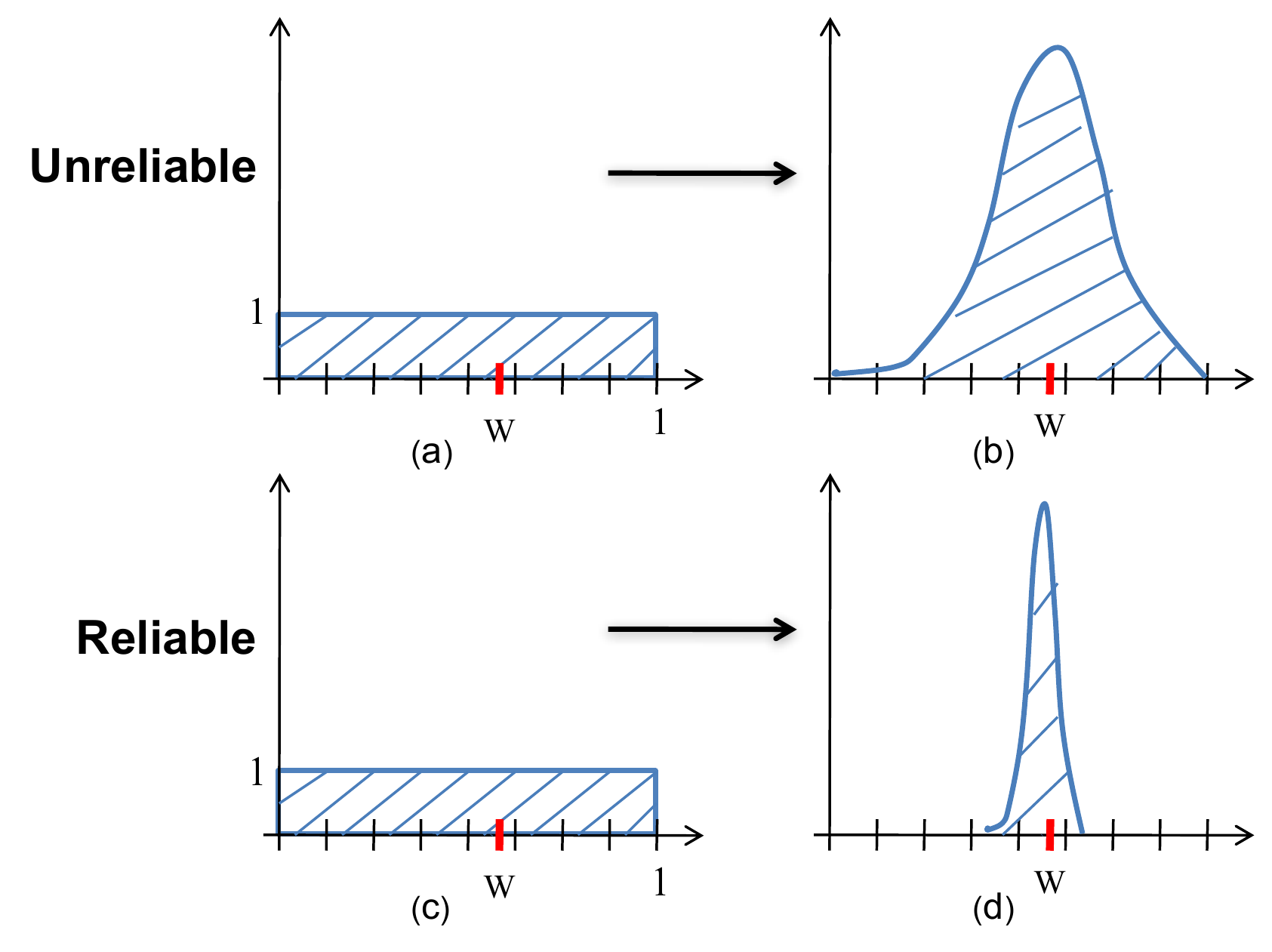}
    \put(30,55){ $\bpost_0(\cdot)$}     %
    \put(30,20){ $\bpost_0(\cdot)$}     %
    \put(85,30){ $\bpost_n(\cdot)$}     %
    \put(85,70){ $\bpost_n(\cdot)$}     %
    \end{overpic}
    \caption{(a) represents prior $\bpost_0$, (b) represents the posterior
    $\bpost_n$ after observations $y^n$ when the scheme is unreliable. (c) and
    (d) show the same as (a) and (b) respectively, except the scheme is
    reliable.}
    \label{fig:pi}
\end{figure}

\subsection{Reliable Communication at rate R}\label{sec:sec:problemSetup:achievabilityRateR}
We now provide a definition of achieving a rate $R$ that implies the standard Shannon
theoretic notions in \cite{CoverThomas06}.
\begin{definition}[Rate]\label{defn:achievability:rateR}
    An encoder policy $\encpolicy$ achieves rate $R$ if, given any $\epsilon \in (0,1)$, there
    exists a sequence of Borel sets $(A_n^\epsilon)_{n \geq 1}$ for which:
    \begin{enumerate}
        \item [(i)]  $A_n^\epsilon$ is open and convex for each $n\geq 1$, \;\;$\bP-a.s.$
        \item [(ii)] $\encpolicy$ is reliable.
        \item [(iii)] $\underset{n \to \infty}{\liminf} \bpost_n(A_n^\epsilon) \geq 1-\epsilon \quad \bP-a.s.$
        \item [(iv)] $\underset{n \to \infty}{\liminf} - \frac{1}{n} \log \bnu(A_n^\epsilon) \geq R \quad \bP-a.s.$
    \end{enumerate}
\end{definition}

\begin{remark}
For $\cW=(0,1)$, property (i) is equivalent to $A_n$ being an open interval \cite{rudin1987real}, thus generalizing 
the definition of a {\it decoded interval} in \cite{Shayevitz2011pm} to arbitrary dimensions.  
Whereas achievability of a rate $R$ for PM was defined by Shayevitz \& Feder in terms of the normalized volume and posterior probabilities convergence probability (c.f. \cite[Equation (2)]{Shayevitz2011pm}),  conditions in (iii) and (iv) given here are in terms of almost sure convergence.  Since the Shayevitz \& Feder notion of achievability of a rate $R$ implies that of the standard coding framework \cite[Lemma 3]{Shayevitz2011pm}, and since almost sure convergence implies convergence in probability, Definition~\ref{defn:achievability:rateR} also implies achievability in the standard coding sense.
\end{remark}

We now provide a necessary condition for {\bf any} encoding scheme $\encpolicy$ to be reliable, which leverages recent intrinsic methods in filter stability for hidden Markov models \cite{chigansky2009intrinsic}; this will be used later in the manuscript:
\begin{theorem}\label{thm:reliability:totalvariationPosterior}
  If an encoding scheme $\encpolicy$  is reliable, then
  \[ \lim_{n \to \infty} \E_\nu \brackets{ \norm{\post_n - \bpost_n}} = 0 \]
  for any $\nu \ll \bnu$.
\end{theorem}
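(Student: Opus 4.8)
The plan is to exploit that the two probability spaces $(\Omega,\cF,\P)$ and $(\Omega,\cF,\bP)$ differ only through the prior on the message point: the policy $\encpolicy$ and the channel $\pygx$ are common to both, so the conditional law of the observation sequence given $W$ is identical under $\P$ and $\bP$. Consequently the likelihood ratio $\tfrac{d\P}{d\bP}$ restricted to $\sigma(W)\vee\cFYinf$ equals $\RNderivativeA$; writing $g\triangleq\tfrac{d\nu}{d\bnu}$, this ratio is $g(W)$. A Kallianpur--Striebel/Bayes computation then gives $\post_n\ll\bpost_n$ with
\[ \frac{d\post_n}{d\bpost_n}(w)=\frac{g(w)}{D_n},\qquad D_n\triangleq\denominatorA=\int_\cW g(w)\,\bpost_n(dw), \]
where $D_n$ is $\cFYn$-measurable, strictly positive $\P$-a.s., and is itself the restriction of $\tfrac{d\P}{d\bP}$ to $\cFYn$.

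Next I would use this identity to bound the total variation. For any Borel $A$, $\post_n(A)-\bpost_n(A)=\int_A\bigl(g(w)/D_n-1\bigr)\,\bpost_n(dw)$, hence
\[ \norm{\post_n-\bpost_n}\;\le\;\frac{1}{D_n}\int_\cW\abs{g(w)-D_n}\,\bpost_n(dw)\;=\;\frac{1}{D_n}\,\bE\brackets{\abs{g(W)-D_n}\;\big|\;\cFYn}\qquad\P\text{-a.s.} \]
Taking $\E_\nu$ of both sides and changing measure back to $\bP$ via the $\cFYn$-density $D_n$, the two factors of $D_n$ cancel and one is left with
\[ \E_\nu\brackets{\norm{\post_n-\bpost_n}}\;\le\;\bE\brackets{\abs{g(W)-\denominatorA}}. \]
So the theorem reduces to showing $\denominatorA\to g(W)$ in $L^1(\bP)$.

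To finish, note that $g(W)\in L^1(\bP)$ since $\bE[g(W)]=\nu(\cW)=1$, and that $(D_n)_{n\ge1}=(\bE[g(W)\mid\cFYn])_{n\ge1}$ is a $\bP$-martingale for the increasing filtration $(\cFYn)_{n\ge1}$, so by the martingale convergence theorem (L\'evy's upward theorem) $D_n\to\denominatorinf$ in $L^1(\bP)$. Reliability now enters exactly once: by Definition~\ref{defn:reliability}, $W$ (hence $g(W)$) is $\cFYinf$-measurable $\bP$-a.s., so $\denominatorinf=g(W)$ $\bP$-a.s., and therefore $\bE\brackets{\abs{g(W)-D_n}}=\bE\brackets{\abs{\denominatorinf-\denominatorA}}\to 0$, which gives the claim for arbitrary $\nu\ll\bnu$.

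The hard part is the first paragraph: making rigorous that $\tfrac{d\P}{d\bP}\big|_{\sigma(W)\vee\cFYinf}=g(W)$ and deriving the consequent Bayes formula for $\tfrac{d\post_n}{d\bpost_n}$. This needs the common structure of encoder and channel under $\P$ and $\bP$, the hypothesis $\nu\ll\bnu$, the existence of the regular conditional probabilities $\post_n,\bpost_n$ (available by the countably-generated hypothesis in Assumption~\ref{assumption:cW}), and careful tracking of which $\sigma$-algebra each Radon--Nikodym derivative lives on. This change-of-measure device is precisely the ``intrinsic'' method of Chigansky and Van Handel \cite{chigansky2009intrinsic} for filter stability; the only structural difference here is that the hidden variable is the \emph{initial} message point $W$ rather than the current state of a hidden Markov chain, so the recursion tracks the posterior of the initial condition.
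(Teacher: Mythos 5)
Your proposal is correct and follows essentially the same route as the paper's proof: the Bayes identity $\tfrac{d\post_n}{d\bpost_n}(W)=\tfrac{d\nu}{d\bnu}(W)\big/\bE\brackets{\tfrac{d\nu}{d\bnu}(W)\,\big|\,\cFYn}$, the change of measure and tower-property step that reduces $\E_\nu\brackets{\norm{\post_n-\bpost_n}}$ to $\bE\brackets{\abs{\tfrac{d\nu}{d\bnu}(W)-\bE[\tfrac{d\nu}{d\bnu}(W)\mid\cFYn]}}$, and the martingale ($L^1$/L\'evy upward) convergence combined with reliability ($W$ being $\cFYinf$-measurable) to conclude the limit is zero. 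The only cosmetic difference is that you use an inequality where the paper asserts an exact total-variation identity, which is if anything the more careful choice given the paper's sup-based definition of $\norm{\cdot}$.
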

\begin{proof}
This follows closely the derivation of equation 1.9 in \cite{chigansky2009intrinsic}.
From \eqref{eqn:posteriorprobabilities}, for any \textit{fixed} measures $\nu$
and $\bnu$ s.t. $\nu \ll \bnu$, $\post_0 = \nu$ and $\bpost_0 = \bnu$,
$\frac{d\P}{d\bP} = \frac{d\nu}{d\bnu}(W).$ From Bayes' rule, for any
non-negative measurable function $g: \cW \to \reals^+$ the following holds,
\begin{align*}
  \E_\nu \brackets{g(W)|\cFYn} &= \frac{\bE \brackets{g(W) \frac{d\P}{d\bP} \Big|\cFYn}}{\bE\brackets{\frac{d\P}{d\bP}\Big|\cFYn}} \\ %
                               &=  \frac{\int_{\cSrc} g(w) \frac{d\nu}{d\bnu}(w) \bpost_n(dw)}{\bE\brackets{\frac{d\nu}{d\bnu}(W)\Big|\cFYn}}  \\ %
  \Rightarrow \frac{d\post_n}{d\bpost_n}(W) &= \frac{\frac{d\nu}{d\bnu}(W)}{\bE\brackets{\frac{d\nu}{d\bnu}(W)\Big|\cFYn}}  %
\end{align*}
Thus we have that the following holds $\P_\nu$ almost surely:
\begin{align}
  \norm{\post_n - \bpost_n} &= \int_{\cW} \abs{\frac{d\post_n}{d\bpost_n}(w) -1} \bpost_n(dw) \nonumber \\
                            &= \frac{ \bE \brackets{ \abs{\RNderivativeA- \denominatorA} \big | \cFYn}}{\denominatorA} \label{eqn:proof:TVlemma:a}
\end{align}
Thus we have that
\begin{align}
  \E_\nu \brackets{\norm{\post_n - \bpost_n}} &=  \bE \brackets{ \RNderivativeA \norm{\post_n - \bpost_n}} \nonumber \\
                                              &=  \bE \brackets{ \denominatorA \norm{\post_n - \bpost_n} } \label{eqn:proof:TVlemma:b}\\
                                              &=  \bE \brackets{ \abs{\RNderivativeA- \denominatorA} } \label{eqn:proof:TVlemma:c}
\end{align}
where \eqref{eqn:proof:TVlemma:b} follows from the tower law of  expectation and
the fact that $\denominatorA$ and $\norm{\post_n - \bpost_n}$ are
$\cFYn$-measurable; and \eqref{eqn:proof:TVlemma:c} follows from
\eqref{eqn:proof:TVlemma:a} and the tower law of expectation. The conditional
expectation $\denominatorA$ in \eqref{eqn:proof:TVlemma:c}  is a nonnegative
uniformly integrable martingale with respect to the increasing filtration
$\cFYn$.  Hence it converges in $L^1(\bP)$ and we have
\begin{align}
  & \lim_{n \to \infty} \E_\nu \brackets{\norm{\post_n - \bpost_n}} \nonumber \\
                                 &=  \bE \brackets{ \abs{\RNderivativeA- \denominatorinf} } \nonumber   \\
                                 &= 0 \label{eqn:proof:TVlemma:d}
\end{align}
where \eqref{eqn:proof:TVlemma:d} follows from the assumption that the PM scheme
is reliable, thus implying that $\denominatorinf = \RNderivativeA$ from Definition~\ref{defn:reliability}.
\end{proof}
 
\section{Posterior Matching Schemes in Arbitrary Dimensions} \label{sec:PMscheme}
In this section, we will introduce a simple feedback-based encoding scheme,
termed the posterior matching (PM) scheme.  The motivation is to satisfy some
necessary conditions on achieving any possible rate. We start by discussing the
properties that any feedback encoding scheme should have for it to maximize
mutual information by looking at the converse to the feedback communication
problem.  We then show how the PM scheme in 1 dimension, developed by
Shayevitz \& Feder in \cite{Shayevitz2011pm}, follows naturally and can be described
in a dynamical system format.  We then define a class of dynamical system encoders 
in arbitrary dimensions and provide a necessary condition on any such encoder to be reliable. 
With this necessary condition, we define posterior matching schemes in arbitrary dimensions,  
 provide unique construction of PM schemes by optimal transport, and showcase a large number of properties that such schemes possess (e.g. stationary, Markov, etc). 

All subsequent discussions assume the following:
\begin{assumption} \label{assump:finiteCapacity}
    The capacity of the communication channel is finite, $I(P_X^*,P_{Y|X}) < \infty$.
\end{assumption}

With this, we have the following Lemma:
\begin{lemma}\label{lemma:finiteMutInfo}
    Under Assumption \ref{assump:finiteCapacity}, $P_{W|Y=y} \ll P_W$ for
    $P_Y$-almost all $y$. Moreover, $i(X;Y)$ is integrable.
\end{lemma}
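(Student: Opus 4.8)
The plan is to derive both conclusions from a single fact: the relevant one-shot mutual information is finite. Take $X \equiv X_1 = e_1(W)$ and $Y \equiv Y_1$. Since $X$ is an admissible input to the channel $\pygx$, the definition of channel capacity together with Assumption~\ref{assump:finiteCapacity} gives $I(X;Y) \le I(P^*_X,P_{Y|X}) < \infty$. Because the channel output depends on the message only through $X$, so that $W \to X \to Y$ is a Markov chain, the data processing inequality then gives $I(W;Y) \le I(X;Y) < \infty$.

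For the first assertion I would use that $I(W;Y)<\infty$ forces $P_{W,Y} \ll P_W \otimes P_Y$, since otherwise the KL divergence defining $I(W;Y)$ is $+\infty$. Writing $h(w,y) \triangleq \frac{dP_{W,Y}}{d(P_W \otimes P_Y)}(w,y)$ and disintegrating $P_{W,Y}$ along its $Y$-marginal — legitimate because $\cW$ and $\cY$ are standard Borel by Assumption~\ref{assumption:cW} — one obtains that a regular version of the conditional law satisfies $P_{W|Y=y}(dw) = h(w,y)\,P_W(dw)$ for $P_Y$-almost every $y$; in particular $P_{W|Y=y} \ll P_W$ for $P_Y$-almost every $y$.

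For the second assertion I would invoke the elementary bound on the negative part of a log-density. Put $P = P_{X,Y}$, $Q = P_X \otimes P_Y$ and $g = dP/dQ$; finiteness of $I(X;Y)$ already gives $P \ll Q$. Using $-t\log t \le 1/e$ for all $t>0$,
\[ \E_P\bigl[(\log g)^-\bigr] \;=\; \int_{\{0<g<1\}} (-g\log g)\,dQ \;\le\; \frac{1}{e}. \]
Since $\abs{\log g} = \log g + 2(\log g)^-$ pointwise and $\E_P[\log g] = I(X;Y)$, taking $\E_P$ yields $\E_P\bigl[\abs{i(X,Y)}\bigr] = I(X;Y) + 2\,\E_P\bigl[(\log g)^-\bigr] \le I(X;Y) + 2/e < \infty$, so $i(X;Y)$ is integrable.

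Every step is standard; the one requiring the most care is the passage from $P_{W,Y}\ll P_W\otimes P_Y$ to the $y$-wise absolute continuity of the regular conditional distribution $P_{W|Y=y}$, which is exactly where the standard-Borel hypothesis (Assumption~\ref{assumption:cW}) is used. The same argument applies verbatim to the pair $(W,Y^n)$ for any fixed $n$ via the feedback converse bound $I(W;Y^n)\le nC$ (with $C$ the channel capacity), which is how absolute continuity of the posteriors $\post_n$ and $\bpost_n$ with respect to $\bnu$ gets bootstrapped later.
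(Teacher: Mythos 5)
Your proof is correct, and it is in fact somewhat more complete than the paper's. The paper argues the first claim by contradiction using the conditional decomposition $I(W;Y)=\int_{\cY}\kldist{P_{W|Y=y}}{P_W}\,P_Y(dy)$: if $P_{W|Y=y}\not\ll P_W$ on a $P_Y$-nontrivial set, this integral is infinite, contradicting $I(W;Y)=I(P_X^*,P_{Y|X})<\infty$; it then simply asserts that integrability of $i(X;Y)$ follows ``by definition.'' You instead work with the joint measure: $I(W;Y)<\infty$ forces $P_{W,Y}\ll P_W\otimes P_Y$, and disintegration (valid since the spaces are standard Borel) yields $P_{W|Y=y}\ll P_W$ for $P_Y$-a.e.\ $y$ --- a different but equally standard decomposition, and you reach finiteness of $I(W;Y)$ via the data-processing inequality $I(W;Y)\le I(X;Y)$ rather than the paper's implicit identity $I(W;Y)=I(P_X^*,P_{Y|X})$, which is cleaner (both arguments tacitly assume the induced input law is the capacity-achieving one, or at least that its mutual information is dominated by $I(P_X^*,P_{Y|X})$; this holds in the PM setting where $X=\phi(W)$ with $\phi\#P_W=P_X^*$, so it is a shared, harmless assumption). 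Where you genuinely add value is the second claim: finiteness of $\E[i(X,Y)]$ does not by itself give integrability, and your bound $\E_P[(\log g)^-]\le 1/e$ via $-t\log t\le 1/e$, combined with $|\log g|=\log g+2(\log g)^-$, supplies exactly the step the paper leaves implicit. The only caveat worth noting is that the constant $1/e$ presumes natural logarithms; with $\log_2$ the constant changes but finiteness is unaffected.
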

\begin{proof}
    Now suppose that for some $P_Y$-nontrivial $y$, the absolute continuity
    condition $P_{W|Y=y} \ll P_W$ does not hold. Then clearly by definition,
    $\kldist{P_{W|Y=y}}{P_W} = \infty$ and moreover,
    \begin{align*}
        \infty &= \int_{\cY} \kldist{P_{W|Y=y}}{P_W} P_Y(dy) \\
               &= I(W;Y) \\
               &= I(P_X^*,P_{Y|X})
    \end{align*}
    which is a contradiction. Therefore, according to the definition of the
    information density in \eqref{eqn:defn:informationdensity}, $i(X;Y)$ is
    integrable.
\end{proof}

\subsection{Maximizing Mutual Information}
Given the channel $P_{Y|X}$ and a cost function $\eta: \cX \to \reals_+$, the
capacity-cost function is given by
\beqa
    \capacityCostFn \triangleq \!\!\!\! \max_{\px \in \probSimplex{X}: \E_{\px} \brackets{\eta(X)} \leq L} \!\!\!\! I(\px,\pygx). \label{eqn:defn:capacityCostFn}
\eeqa
It is known that $\capacityCostFn$ is the fundamental limit of communication
over a channel $P_{Y|X}$ of all encoders whose average cost $\eta(X)$ is
upper-bounded by $L$ \cite{mceliece2002theory}.

We note the following standard lemma from information theory \cite{CoverThomas06}:
\begin{lemma} \label{lemma:bound:mutInfo:Capacity}
    Fix an $n \geq 1$.  If a feedback encoder $\parenth{X_i = e_i(W,Y^{i-1}): i =1,\ldots,n}$ satisfies the constraint
    \begin{equation*}
        \bE \brackets{\frac{1}{n}\sum_{i=1}^n \eta(X_i)} \leq L,
    \end{equation*}
    then
    \begin{equation*}
        \frac{1}{n}I(W;Y^n) \leq \capacityCostFn. %
    \end{equation*}  
    Equality holds if and only if
    \bitm
        \item [(a)] $Y_1, \ldots, Y_n$ are statistically independent.
        \item [(b)] $X_i \sim P^*_X$ for each $i$, where $P^*_X$ is the
        capacity-achieving distribution in \eqref{eqn:defn:capacityCostFn}.
    \eitm
\end{lemma}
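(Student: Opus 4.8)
The plan is to establish, for each fixed $n$, the chain of inequalities
\[
\frac1n \I{W;Y^n} \;\le\; \frac1n\sum_{i=1}^n \I{X_i;Y_i}
\;\le\; \frac1n\sum_{i=1}^n \capacityCostFnVal{L_i}
\;\le\; \capacityCostFnVal{\bar L} \;\le\; \capacityCostFnVal{L},
\]
where $L_i \triangleq \bE[\eta(X_i)]$ and $\bar L \triangleq \frac1n\sum_{i=1}^n L_i \le L$, and then to read off the equality conditions link by link. Since Lemma~\ref{lemma:finiteMutInfo} guarantees that $i(X;Y)$ is integrable, all mutual informations appearing below are finite, so the chain-rule manipulations are legitimate without passing through possibly-infinite differential entropies.

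For the first inequality I would expand $\I{W;Y^n}=\sum_{i=1}^n \I{W;Y_i \mid Y^{i-1}}$ by the chain rule and, for each $i$, write $\I{W,Y^{i-1};Y_i}=\I{Y^{i-1};Y_i}+\I{W;Y_i\mid Y^{i-1}}\ge \I{W;Y_i\mid Y^{i-1}}$, with equality iff $Y_i\indep Y^{i-1}$. Because $X_i=e_i(W,Y^{i-1})$ is a deterministic function of $(W,Y^{i-1})$ and the channel \eqref{eqn:defn:DMC} is memoryless, $(W,Y^{i-1})-X_i-Y_i$ is a Markov chain, so $\I{W,Y^{i-1};Y_i}=\I{W,Y^{i-1},X_i;Y_i}=\I{X_i;Y_i}$. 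Combining these, $\I{W;Y_i\mid Y^{i-1}}\le \I{X_i;Y_i}$ with equality iff $Y_i\indep Y^{i-1}$, and summing over $i$ gives the first link, tight iff $Y_1,\ldots,Y_n$ are mutually independent.

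The second inequality is immediate from the definition \eqref{eqn:defn:capacityCostFn}: the marginal $P_{X_i}$ is feasible at cost level $L_i=\bE[\eta(X_i)]$, hence $\I{X_i;Y_i}\le\capacityCostFnVal{L_i}$, with equality iff $P_{X_i}$ attains the maximum in \eqref{eqn:defn:capacityCostFn} at level $L_i$. For the last two links I would invoke the standard facts that $L\mapsto\capacityCostFnVal{L}$ is nondecreasing and concave --- concavity by time-sharing, since the constraint $\E_{P_X}[\eta(X)]\le L$ is linear in $P_X$ and $I(P_X,\pygx)$ is concave in $P_X$ --- so Jensen's inequality yields $\frac1n\sum_i\capacityCostFnVal{L_i}\le\capacityCostFnVal{\bar L}$, and $\bar L\le L$ with monotonicity yields $\capacityCostFnVal{\bar L}\le\capacityCostFnVal{L}$. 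Chaining the four links proves the bound.

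For the equality claim, the sufficiency direction is direct: if $Y_1,\ldots,Y_n$ are independent and each $X_i\sim P^*_X$, the cost constraint holds since $\E_{P^*_X}[\eta(X)]\le L$, and every link above is an equality, so $\frac1n\I{W;Y^n}=\capacityCostFnVal{L}$. For necessity, equality in the overall bound forces equality in each link: the first gives mutual independence of the $Y_i$ (condition (a)), the second gives that each $P_{X_i}$ is capacity-achieving at level $L_i$, and chasing equality through the Jensen and monotonicity steps --- using concavity together with $\capacityCostFnVal{\bar L}=\capacityCostFnVal{L}$ and $\bar L\le L$ --- forces $\capacityCostFnVal{L_i}=\capacityCostFnVal{L}$ for every $i$, hence $X_i\sim P^*_X$ (condition (b)). The part demanding the most care is this last equality analysis of the concavity/monotonicity steps, together with the measure-theoretic bookkeeping (via Lemma~\ref{lemma:finiteMutInfo}) that keeps every information quantity finite; the chain-rule and Markov-chain arguments themselves are routine.
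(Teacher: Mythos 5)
The paper does not actually prove this lemma --- it is quoted as a standard fact with a citation to Cover--Thomas --- so your proposal can only be compared with the standard converse argument, which is indeed the chain you wrote down. The inequality itself and the sufficiency of (a)--(b) are fine as you argue them: the chain rule, the Markov chain $(W,Y^{i-1})-X_i-Y_i$ coming from $X_i=e_i(W,Y^{i-1})$ and \eqref{eqn:defn:DMC}, feasibility of $P_{X_i}$ at cost level $L_i$, and concavity/monotonicity of $L\mapsto\capacityCostFnVal{L}$ are all correct, and when (a) and (b) hold every link is tight.

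The genuine gap is in the necessity direction, exactly at the step you defer: equality in Jensen for a merely concave $\capacityCostFnVal{\cdot}$, together with $\capacityCostFnVal{\bar L}=\capacityCostFnVal{L}$ and $\bar L\le L$, does \emph{not} force $\capacityCostFnVal{L_i}=\capacityCostFnVal{L}$ for every $i$. Jensen is tight whenever the $L_i$ lie on an affine segment of the capacity--cost function, and nothing you invoke rules out an affine segment of positive slope around $L$: in that case one could have, say, $n=2$, $L_1=L-\delta$, $L_2=L+\delta$, $\bar L=L$, every link in your chain tight, yet $\capacityCostFnVal{L_1}<\capacityCostFnVal{L}<\capacityCostFnVal{L_2}$, so the $X_i$ are not distributed as $P_X^*$. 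There is a second, smaller hole: even at indices where $\capacityCostFnVal{L_i}=\capacityCostFnVal{L}$ happens to hold, if $L_i>L$ then $P_{X_i}$ is not feasible in \eqref{eqn:defn:capacityCostFn} at level $L$, so $I(X_i;Y_i)=\capacityCostFnVal{L}$ does not identify $P_{X_i}$ with $P_X^*$. To close the argument you must either add the regularity that the standard statements implicitly carry (strict increase of $\capacityCostFnVal{\cdot}$ below cost saturation, equivalently no nontrivial affine segments, plus uniqueness of $P_X^*$, which the lemma's phrase ``the capacity-achieving distribution'' already presupposes), or replace the Jensen-on-$C$ step by the usual textbook route: set $\bar P_X\triangleq\frac1n\sum_{i=1}^n P_{X_i}$, use concavity of $P_X\mapsto I(P_X,\pygx)$ to get $\frac1n\sum_i I(X_i;Y_i)\le I(\bar P_X,\pygx)\le\capacityCostFnVal{L}$ since $\E_{\bar P_X}[\eta(X)]=\bar L\le L$, and then drive the equality analysis through this single concavity step (strict concavity of the output entropy and uniqueness of the optimal input/output pair), which is how the cited reference obtains condition (b).
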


The original PM scheme for $\cW=(0,1)$ developed by Shayevitz \& Feder is given as follows.
\begin{example}[PM Scheme in One Dimension \cite{Shayevitz2011pm}] \label{example:PMscheme:one-dimension}
Define $F_X(\cdot)$ as the CDF associated with $P_X^*$ above and
\begin{equation}
\begin{aligned}
        X_n &= F_X^{-1}(\tW_n) \quad n \geq 1 \\
        \tW_1 &= W, \quad \tW_{n+1} = F_{W|Y_{1:n}}(W|Y_{1:n}), \quad n\geq 1 \;
         \label{eqn:example:PMscheme:one-dimension:b}
\end{aligned}
\end{equation}
where $F_{W|Y_{1:n}}(w|y_{1:n})$ is the conditional cumulative distribution function (CDF) of the message point $W$ given $n$ channel observations $Y_{1:n}$ under $(\Omega,\cF,\bP)$, e.g. with $\bnu$ as the prior distribution on $W$.
The intuition here is that by placing the message point into its \emph{own} conditional CDF, this guarantees that the new state variable $\tW_{n+1}$ is uniformly distributed, for all $Y_{1:n}$; thus $\tW_{n+1} \sim \bnu$ and is independent of $Y_{1:n}$.  Since $X_{n+1}$ is simply a function of $\tW_{n+1}$, it is still independent of $Y_{1:n}$; moreover, by defining it in terms of the inverse CDF of $P_X^*$, it also has the capacity-achieving distribution $P_X^*$.  As such, this simple scheme guarantees that $I(W;Y_{1:n})=nC$ for all $n \geq 1$.  In addition, $\tW_{n+1}$ is ``handing the decoder what is missing'', because given $Y_{1:n}$, $\tW_{n+1}$ and $W$ form a bijection, by virtue of the monotonicity (and thus invertibility) of CDFs.  It can be shown \cite{Shayevitz2011pm} that the one-dimensional PM scheme \eqref{eqn:example:PMscheme:one-dimension:b} can equivalently be written as
\begin{equation} \label{eqn:PMOneD:encoder-dynamical-system}
\begin{aligned}
        X_n &= F_X^{-1}(\tW_n) \quad n \geq 1 \\
        \tW_1 &= W, \quad \tW_{n+1} = F_{\tW|Y}(\tW_n|Y_n), \;\; n\geq 1 
\end{aligned}
\end{equation}
where $F_{\tW|Y}(\cdot|y)$ is the CDF associated with the posterior distribution $P_{\tW_1|Y_1=y}$ of the message point $\tW_1$ given $Y_1=y$ under $(\Omega,\cF,\bP)$, where $P_W=\bnu$ and $P_{Y|W}(dy|\tw) \equiv P_{Y|X}\parenth{dy|F_X^{-1}(\tw)}$.  
\end{example}

When the one-dimensional posterior matching scheme is written in the format of \eqref{eqn:PMOneD:encoder-dynamical-system}, notice that $(\tW_n)_{n \geq 1}$ is a stochastic dynamical system where $\tW_{n+1} = S_{Y_n}(\tW_n)$ is controlled by the i.i.d. sequence of random variables $(Y_n)_{n \geq 1}$, with $S_y(u) = F_{\tW|Y}(u|y)$.

\subsection{Dynamical System Encoders} \label{sec:PM-scheme-arb-dimension:dynamical-system-encoders}
Inspired by the original one-dimensional PM scheme developed in
\cite{Shayevitz2011pm}, here we also consider dynamical systems $(\tW_n)_{n \geq 1}$ where $\tW_{n+1} = S_{Y_n}(\tW_n)$ and $\tW_1=W$ is the message.  Moving forward, when describing dynamical system encoders, we do not restrict ourselves to $\cW=(0,1)$; rather, we now consider the more general case where only Assumption \ref{assumption:cW} holds.  

\begin{definition}\label{defn:encoder-dynamical-system}
    A dynamical system encoder is a collection of maps $(S_y: \cW \to \cW)_{y \in
    \cY}$ and a memoryless, time-invariant noisy channel with transition kernel
    $P_{Y_n|\tW_n}(\cdot|\cdot) \equiv P_{Y|\tW}(\cdot|\cdot)$. Their dynamics
    govern the random process $(\tW_n,Y_n)_{n \geq 1}$ as follows:
    \begin{equation*}
        \tW_1 = W, \quad \tW_{n+1} = S_{Y_{n}}(\tW_{n}), \quad n\geq 1 
    \end{equation*}
\end{definition}
The intuition here is that $\tW_n$ is signaled into the noisy channel to specify $Y_n$.
At the next step, $\tW_{n+1}$ is governed by $\tW_n$ and the most recent channel output $Y_n$.
As such, it is a time-invariant, memoryless stochastic dynamical system, as shown in Figure~\ref{fig:channelRecast}.
\begin{figure}[t]
    \centering
    \includegraphics[width=0.8\columnwidth]{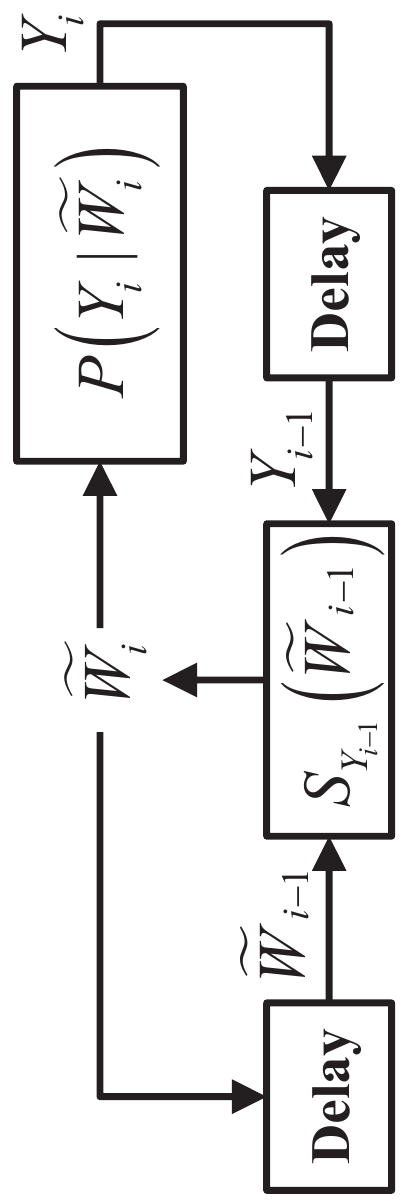}
    \caption{Dynamical system encoder as discussed in
    Definition~\ref{defn:encoder-dynamical-system} that underlies the posterior
    matching scheme.}
    \label{fig:channelRecast}
\end{figure}

The next two lemmas do not make an i.i.d. assumption on $(Y_n)_{n \geq 1}$; rather, they elucidate basic Markov properties any dynamical system encoder possesses, and then provide a necessary condition, for any dynamical system encoder, for recovering the message $W$ from $(Y_n)_{n \geq 1}$. 

\begin{lemma} \label{lemma:MarkovChain:PM:invertibility}
    For any dynamical system encoder,  $(Y_n,\tW_{n+1})_{n \geq 1}$ is a Markov
    process. Moreover, for any $n \geq 1$ the following Markov chain condition holds
    \begin{equation}
        \tW_1 \rightarrow (Y_{1:n},\tW_{n+1}) \rightarrow (Y_{n+1:\infty},\tW_{n+2:\infty}) \label{eqn:lemma:MarkovChain:PM:invertibility:a}
    \end{equation}
\end{lemma}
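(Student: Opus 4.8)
The plan is to deduce both assertions from a single structural fact: because the channel is memoryless and time-invariant and the update $\tW_{n+1}=S_{Y_n}(\tW_n)$ uses a fixed deterministic family of maps, the conditional law of all quantities from time $n+1$ onward, given the entire history \emph{up through the channel input} $\tW_{n+1}$, depends on that history only through $\tW_{n+1}$. Both the Markov property of $(Y_n,\tW_{n+1})_{n\ge 1}$ and the chain \eqref{eqn:lemma:MarkovChain:PM:invertibility:a} then follow, provided one is careful about which $\sigma$-algebra one conditions on.

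First I would fix the natural filtration $\cG_n\triangleq\sigma(W,Y_{1:n-1})$ and observe that, since $\tW_1=W$ and $\tW_k=S_{Y_{k-1}}\circ\cdots\circ S_{Y_1}(W)$, we have $\cG_n=\sigma(\tW_{1:n},Y_{1:n-1})$, so the memoryless, time-invariant channel of Definition~\ref{defn:encoder-dynamical-system} reads $\E[g(Y_n)\mid\cG_n]=\int g(y)\,P_{Y|\tW}(dy\mid\tW_n)$ for bounded measurable $g$. Freezing the $\cG_{n+1}$-measurable quantity $\tW_{n+1}$ and using $\tW_{n+2}=S_{Y_{n+1}}(\tW_{n+1})$ gives, for bounded measurable $f$,
\[
\E\big[f(Y_{n+1},\tW_{n+2})\mid\cG_{n+1}\big]=\int f\big(y,S_y(\tW_{n+1})\big)\,P_{Y|\tW}(dy\mid\tW_{n+1})=:h_f(\tW_{n+1}),
\]
a Borel function of $\tW_{n+1}$ alone, not depending on $n$. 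Since $\sigma(Y_{1:n},\tW_{2:n+1})\subseteq\cG_{n+1}$ and $\tW_{n+1}$ is measurable with respect to $\sigma(Y_n,\tW_{n+1})$, the tower property yields $\E[f(Y_{n+1},\tW_{n+2})\mid\sigma(Y_{1:n},\tW_{2:n+1})]=h_f(\tW_{n+1})$, a function of the ``present'' pair $(Y_n,\tW_{n+1})$; this is precisely the Markov property, with an $n$-independent transition kernel.

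For the chain condition I would iterate the one-step fact. By induction on $m\ge 1$, the conditional law of the block $(Y_{n+1:n+m},\tW_{n+2:n+m+1})$ given $\cG_{n+1}$ is $\kappa_m(\tW_{n+1},\cdot)$ for a kernel $\kappa_m$: the inductive step conditions on the larger $\sigma$-algebra $\cG_{n+m+1}$, applies the one-step fact at time $n+m$ to the new pair $(Y_{n+m+1},\tW_{n+m+2})$, and uses that $\tW_{n+m+1}$ is a deterministic function of $\tW_{n+1}$ and $Y_{n+1:n+m}$. A standard monotone-class argument over cylinder sets upgrades the consistent family $(\kappa_m)$ to a single kernel $\kappa$ with $\E[f(Y_{n+1:\infty},\tW_{n+2:\infty})\mid\cG_{n+1}]=\int f\,d\kappa(\tW_{n+1},\cdot)$ for bounded measurable $f$ on the infinite-product space. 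Finally, since $\cG_{n+1}=\sigma(W,Y_{1:n},\tW_{n+1})\supseteq\sigma(Y_{1:n},\tW_{n+1})$ and $\tW_{n+1}$ is $\sigma(Y_{1:n},\tW_{n+1})$-measurable, applying $\E[\,\cdot\mid\sigma(Y_{1:n},\tW_{n+1})]$ to the last display leaves it unchanged (tower property, the right-hand side already being $\sigma(Y_{1:n},\tW_{n+1})$-measurable); the two conditional laws agree, which says exactly that $\tW_1=W$ is conditionally independent of $(Y_{n+1:\infty},\tW_{n+2:\infty})$ given $(Y_{1:n},\tW_{n+1})$, i.e.\ \eqref{eqn:lemma:MarkovChain:PM:invertibility:a}.

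The step I expect to be the main obstacle is this last reduction from $\cG_{n+1}$ to the coarser $\sigma(Y_{1:n},\tW_{n+1})$. One cannot shortcut it by invoking a Markov property of $(\tW_n,Y_n)_{n\ge 1}$, because for a \emph{general} dynamical system encoder the maps $S_y$ need not be invertible, so $W=\tW_1$ is genuinely not recoverable from $(Y_{1:n},\tW_{n+1})$ and $\sigma(Y_{1:n},\tW_{n+1})$ is strictly smaller than the full history $\cG_{n+1}$; the content of the lemma is precisely that memorylessness makes the future law insensitive to the information lost in passing to the coarser $\sigma$-algebra, a fact that must be extracted from the channel definition in Definition~\ref{defn:encoder-dynamical-system} rather than from abstract Markov-chain manipulation. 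A secondary technical point deserving care is the passage from finite-dimensional conditional laws to the conditional law of the entire future process, routine via the monotone-class argument but worth stating explicitly.
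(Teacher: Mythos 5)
Your proof is correct, but it follows a genuinely different route from the paper's. The paper first represents the memoryless channel through i.i.d.\ noise variables $N_n$ (so $Y_n=g_1(\tW_n,N_n)$, $\tW_{n+1}=g_2(\tW_n,N_n)$), views $(Y_n,\tW_{n+1})_{n\geq 1}$ as an iterated function system driven by the i.i.d.\ $(N_n)$ to conclude it is Markov with the \emph{entire} history $(Y_{1:n-1},\tW_{1:n})$ --- including $\tW_1$ --- conditionally independent of the future given $(Y_n,\tW_{n+1})$, and then extracts \eqref{eqn:lemma:MarkovChain:PM:invertibility:a} by expanding the vanishing conditional mutual information $I(Y_{1:n-1},\tW_{1:n};Y_{n+1:\infty},\tW_{n+2:\infty}\mid Y_n,\tW_{n+1})$ via the chain rule and invoking non-negativity of each term. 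You instead work directly with the filtration $\cG_{n+1}=\sigma(W,Y_{1:n})$, the defining kernel property of the memoryless channel, and the deterministic update $\tW_{n+2}=S_{Y_{n+1}}(\tW_{n+1})$, showing the conditional law of the future given the full history is a function of $\tW_{n+1}$ alone, then deduce both the Markov property and the chain condition by the tower property, with an induction plus monotone-class step to pass to the infinite future. What each buys: your argument is self-contained and measure-theoretically explicit --- it avoids the functional noise representation and the external IFS citation, makes the extension to $\sigma(Y_{n+1:\infty},\tW_{n+2:\infty})$ explicit, and isolates precisely the point you flag, namely that the conditional independence asserted in \eqref{eqn:lemma:MarkovChain:PM:invertibility:a} is strictly more than the bare Markov property of $(Y_n,\tW_{n+1})_{n\geq 1}$ because $\tW_1$ is not recoverable from the chain's coordinates when the $S_y$ are not invertible (the paper covers this via the i.i.d.-noise IFS representation rather than the Markov property per se). The paper's version is shorter, and its information-theoretic chain-rule step and IFS viewpoint are reused later in the manuscript (e.g.\ in Lemma~\ref{lemma:PMscheme:independence} and Theorem~\ref{theorem:PM:properties}), so it fits the surrounding machinery; but both proofs ultimately rest on the same structural fact that the one-step transition kernel depends on the history only through $\tW_{n+1}$.
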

\begin{proof}
    We note that $\tW_1$ has initial distribution $P_W$. Note that $Y_1$ is
    $\tW_1$ being passed through a noisy memoryless channel, which is a
    stochastic kernel. As such, for appropriately defined functions $g_1$ and
    random variable $N_1$, we have that $Y_1=g_1(\tW_1,N_1)$. Secondly, note
    that $\tW_2=S_{Y_1}(\tW_1) \equiv g_2(\tW_1,N_1)$. This process continues
    onward.

    Because the channel $P_{Y|X}$ is memoryless and time-invariant, this means that $N_1,N_2,\ldots$ are i.i.d.  Thus
    $(Y_n,\tW_{n+1})_{n \geq 1}$ is an iterated function system (IFS) controlled by $(N_n)_{n \geq 1}$.  An IFS is known to be a Markov process \cite{ykifer}.

    As such, we have that the following Markov chain condition holds:
    \[(Y_{1:n-1},\tW_{1:n}) \rightarrow (Y_n,\tW_{n+1}) \rightarrow (Y_{n+1:\infty}, \tW_{n+2:\infty})\]
    We state this equivalently in terms of conditional mutual information and the chain rule:
    \beqa
    0 &=& I \parenth{ Y_{1:n-1},\tW_{1:n} ; Y_{n+1:\infty}, \tW_{n+2:\infty} | Y_n,\tW_{n+1} } \nonumber \\
      &=& I \parenth{ Y_{1:n-1} ; Y_{n+1:\infty}, \tW_{n+2:\infty} | Y_n,\tW_{n+1} } \nonumber \\
      &+& I \parenth{ \tW_1;  Y_{n+1:\infty}, \tW_{n+2:\infty} | Y_{1:n},\tW_{n+1} } \label{eqn:proof:lemma:MarkovChain:PM:invertibility:a}\\
      &+& I \parenth{ \tW_{2:n} ; Y_{n+1:\infty}, \tW_{n+2:\infty} | Y_{1:n},\tW_1,\tW_{n+1} } \nonumber
    \eeqa
    Since conditional mutual information is non-negative, it follows that all three terms must be zero.  \eqref{eqn:proof:lemma:MarkovChain:PM:invertibility:a} being zero is equivalent to the
    Markov chain condition \eqref{eqn:lemma:MarkovChain:PM:invertibility:a}.
\end{proof}

\newcommand{\triangleqEQdiff}[2] {\bE\brackets{\parenth{g(#1) - \bE\brackets{g(#1) \Big|#2}}^2}}

\begin{definition}\label{defn:enc:invertible}
We say a dynamical systems encoder is {invertible} if $\tW_1$ is  $\bP$-a.s. $\sigma(\tW_{n+1}, Y_{1:n})$-measurable $\; \forall n \geq 1$.
\end{definition}
We now show a necessary condition on the structure of $(S_y)_{y \in \cY}$ in order
for an agent only observing $(Y_n)_{n \geq 1}$ to be able to recover the initial condition $\tW_1$.
\begin{lemma} \label{lemma:reliabilityInvertibility}
If a dynamical encoding scheme is reliable, then it is invertible.
\end{lemma}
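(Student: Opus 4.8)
The plan is to combine the reliability hypothesis with the Markov chain condition \eqref{eqn:lemma:MarkovChain:PM:invertibility:a} of Lemma~\ref{lemma:MarkovChain:PM:invertibility}. By Definition~\ref{defn:reliability}, reliability means that $\tW_1 = W$ is $\cFYinf$-measurable $\bP$-a.s.; the goal is to upgrade this to the statement that, for each fixed $n \geq 1$, $\tW_1$ is measurable with respect to the smaller $\sigma$-algebra $\cG_n \triangleq \sigma(Y_{1:n},\tW_{n+1})$, up to $\bP$-null sets, which is exactly invertibility in the sense of Definition~\ref{defn:enc:invertible}.

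First I would extract the conditional-independence consequence of \eqref{eqn:lemma:MarkovChain:PM:invertibility:a}: fixing $n$, the Markov chain $\tW_1 \to (Y_{1:n},\tW_{n+1}) \to (Y_{n+1:\infty},\tW_{n+2:\infty})$ implies, in particular, $\tW_1 \indep Y_{n+1:\infty} \mid \cG_n$ under $\bP$ (note Lemma~\ref{lemma:MarkovChain:PM:invertibility} makes no i.i.d.\ assumption, so it applies on $(\Omega,\cF,\bP)$). Hence for every bounded Borel function $h:\cW \to \reals$, $\bE[h(\tW_1) \mid \cG_n \vee \sigma(Y_{n+1:\infty})] = \bE[h(\tW_1)\mid \cG_n]$ $\bP$-a.s.

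The key observation is then that $\cG_n \vee \sigma(Y_{n+1:\infty}) \supseteq \sigma(Y_{1:n}) \vee \sigma(Y_{n+1:\infty}) = \cFYinf$, so under reliability $h(\tW_1)$ is $\cFYinf$-measurable $\bP$-a.s., and therefore the left-hand side above equals $h(\tW_1)$ $\bP$-a.s. Thus $h(\tW_1) = \bE[h(\tW_1)\mid\cG_n]$ $\bP$-a.s., i.e.\ $h(\tW_1)$ is $\cG_n$-measurable (mod $\bP$-null sets) for every bounded Borel $h$.

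Finally I would pass from ``$h(\tW_1)$ is $\cG_n$-measurable for all bounded Borel $h$'' to ``$\tW_1$ is $\cG_n$-measurable $\bP$-a.s.'' This is where the only delicate point lies, and it is the step I expect to require the most care: by Assumption~\ref{assumption:cW}, $\cW$ is a Borel subset of $\reals^d$, hence a standard Borel space with a countably generated Borel $\sigma$-algebra, so one can fix a countable family $\{h_k\}$ of bounded Borel functions that separates points of $\cW$ and generates $\cF_\cW$ (e.g.\ indicators of a countable generating algebra, or coordinatewise truncations); intersecting the countably many $\bP$-null exceptional sets yields a $\cG_n$-measurable random variable agreeing with $\tW_1$ $\bP$-a.s. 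Since $n \geq 1$ was arbitrary, the dynamical encoding scheme is invertible. Everything besides this last measure-theoretic realization is an immediate application of the conditional independence already established in Lemma~\ref{lemma:MarkovChain:PM:invertibility}.
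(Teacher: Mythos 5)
Your proposal is correct and follows essentially the same route as the paper: apply the Markov chain of Lemma~\ref{lemma:MarkovChain:PM:invertibility} to replace conditioning on $\sigma(Y_{1:n},\tW_{n+1})$ by conditioning on a $\sigma$-algebra containing $\cFYinf$, then invoke reliability to collapse the conditional expectation of $h(\tW_1)$ (the paper's $g(\tW_1)$) to $h(\tW_1)$ itself. Your final measure-theoretic step, passing from measurability of $h(\tW_1)$ for a countable separating family to measurability of $\tW_1$ itself, is a detail the paper leaves implicit, but it is the same argument in substance.
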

\begin{proof}
Consider any measurable integrable function $g$.
\newcommand{\EgWgivenBLAH}[1]{\bE \brackets{g(\tW_1) | #1} }
Note then that
\begin{align}
    \EgWgivenBLAH{Y_{1:n},\tW_{n+1}} &= \EgWgivenBLAH{Y_{1:\infty},\tW_{n+1:\infty}} \label{eqn:inv:markov}\\
    &= g(\tW_1) \label{eqn:inv:reliability}
\end{align}
where \eqref{eqn:inv:markov} follows from Lemma~\ref{lemma:MarkovChain:PM:invertibility}; and \eqref{eqn:inv:reliability} follows from the assumption of reliability and Definition~\ref{defn:reliability}, implying that $\tW_1$ is $\bP$-a.s. $\cFYinf$-measurable.

\end{proof}

Note that Lemma~\ref{lemma:reliabilityInvertibility} indicates that in order for 
the message $W$ to be recoverable from $Y_{1:\infty}$, it must be that for any $n \geq 1$, $W$ is recoverable from $\tW_{n+1}$ and the feedback of the past channel outputs $Y_{1:n}$.  This is analogous to a notion of
invertibility defined by Van Handel for a different but related class of
hidden Markov models \cite[Defn 2.6, Remark 2.10]{van2009nonlinear}. This necessary condition on reliability
will motivate our definition of the posterior matching scheme, defined in the next section, to have an invertible structure.

Lemma~\ref{lemma:MarkovChain:PM:invertibility}, we have that $(Y_n,\tW_{n+1})_{n \geq 1}$ is a Markov chain and so we can define a hidden Markov model with state variable $V_n=(Y_n,\tW_{n+1})$ and observation variable $Y_n$.  Note that the posterior distribution on $V_n$ given $Y_{1:n}$ is fully captured by the posterior distribution on $\tW_{n+1}$ given $Y_{1:n}$, which we term  $\pi_n$~($\bpi_n$) under $\P_\nu$~($\bP$), respectively. If a dynamical systems encoder is invertible, then note from Definition~\ref{defn:enc:invertible}  that for any $A \in \cF_\cW$, there exists a $\cFYn$-measurable $B \in \cF_\cW$ for which $\bprob{\tW_1 \in A | \cFYn}=\bprob{\tW_{n+1} \in B | \cFYn}$.  If we replace the roles of $\tW_1$ and $\tW_{n+1}$, we arrive at an analogous statement; thus we have the following remark:
\begin{remark}
    From \eqref{eqn:defn:dTV}, it follows that for any for any invertible dynamical systems encoder: 
        \[
            \|\pi_n-\bpi_n\|=\|\post_n-\bpost_n\|.
        \]
\end{remark}
Note that the fundamental question of filter stability for hidden Markov models involves understanding if the posterior distribution on the state variable at time $n$ becomes insensitive to the prior distribution on the initial state variable \cite[eq 1.5]{chigansky2009intrinsic}:
\[ \lim_{n \to \infty} \E_\nu \brackets{\|\pi_n-\bpi_n\|} \stackrel{?}{=} 0 \quad \forall \nu \ll \bnu.\]
We can combine the necessary condition for {\bf any} encoding scheme $\encpolicy$ to be reliable in Theorem~\ref{thm:reliability:totalvariationPosterior} with the necessary condition for a dynamical systems encoder to be reliable in Lemma~\ref{lemma:reliabilityInvertibility} to conclude that:
\begin{corollary}\label{corollary:VanHandel}
For any invertible dynamical systems encoder, a necessary condition on reliability is that for any $\nu \ll \bnu$,
\[ \lim_{n \to \infty} \E_\nu \brackets{\|\pi_n-\bpi_n\|}= 0.\]
\end{corollary}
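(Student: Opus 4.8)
The plan is to obtain the claim as a bookkeeping synthesis of the three results that immediately precede it: Theorem~\ref{thm:reliability:totalvariationPosterior}, Lemma~\ref{lemma:reliabilityInvertibility}, and the Remark stated just above the corollary. Fix $\nu \ll \bnu$ and suppose the dynamical systems encoder is reliable. A dynamical systems encoder is in particular an encoder policy $\encpolicy$ in the sense of Section~\ref{sec:problemSetup} — iterating $\tW_{n+1} = S_{Y_n}(\tW_n)$ from $\tW_1 = W$ and reading a channel input off of $\tW_n$ determines the maps $e_n$ — and reliability here is exactly Definition~\ref{defn:reliability}. Hence Theorem~\ref{thm:reliability:totalvariationPosterior} applies verbatim and yields
\[
  \lim_{n \to \infty} \E_\nu \brackets{\norm{\post_n - \bpost_n}} = 0 ,
\]
where $\post_n$ and $\bpost_n$ are the posteriors on the message point $W = \tW_1$ under $\P_\nu$ and $\bP$ respectively.

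Next I would transfer this limit from the pair $(\post_n,\bpost_n)$ to the pair $(\pi_n,\bpi_n)$. By Lemma~\ref{lemma:reliabilityInvertibility}, reliability of the dynamical systems encoder implies that it is invertible in the sense of Definition~\ref{defn:enc:invertible}; and in any case the corollary only concerns invertible encoders, so we may take invertibility as a standing hypothesis. Invertibility is precisely what the Remark immediately preceding this corollary requires, and that Remark supplies, for each fixed $n$, the identity $\norm{\pi_n - \bpi_n} = \norm{\post_n - \bpost_n}$ holding $\bP$-a.s. Since $\frac{d\P_\nu}{d\bP} = \frac{d\nu}{d\bnu}(W)$, we have $\P_\nu \ll \bP$, so this identity also holds $\P_\nu$-a.s.; taking $\E_\nu$ of both sides and letting $n \to \infty$ then carries the conclusion of the previous display over to $\lim_{n \to \infty} \E_\nu \brackets{\norm{\pi_n - \bpi_n}} = 0$, which is the asserted filter-stability-type limit.

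There is no substantive obstacle in this argument, the whole content being the combination of the cited results. The only two points that deserve a line of justification are (i) confirming that a dynamical systems encoder genuinely meets the hypotheses of Theorem~\ref{thm:reliability:totalvariationPosterior}, which it does as a special case of an encoder policy with the same notion of reliability; and (ii) observing that the Remark's pointwise total-variation identity, although stated $\bP$-a.s., is preserved under the absolutely continuous change of measure to $\P_\nu$ and may therefore be integrated against $\P_\nu$ before passing to the limit. Both are routine.
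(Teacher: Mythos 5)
Your proposal is correct and is essentially the paper's own argument: the corollary is obtained there by combining Theorem~\ref{thm:reliability:totalvariationPosterior} with Lemma~\ref{lemma:reliabilityInvertibility} and the preceding Remark's identity $\|\pi_n-\bpi_n\|=\|\post_n-\bpost_n\|$, exactly as you do. Your added remarks on why the dynamical systems encoder fits the hypotheses of Theorem~\ref{thm:reliability:totalvariationPosterior} and why the $\bP$-a.s.\ identity transfers to $\P_\nu$ via $\P_\nu \ll \bP$ are sound and only make explicit what the paper leaves implicit.
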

This clearly elucidates an intimate connection between the topic of filter stability for hidden Markov models and the topic of reliability for dynamical system encoding schemes for message-point feedback communication problems.

\subsection{The Posterior Matching Scheme}
We now define the posterior matching scheme  for $\cW \subset \reals^d$ as a time-invariant dynamical system encoder that contains all the essential properties of the PM scheme for $\cW=(0,1)$ developed by Shayevitz \& Feder in \cite{Shayevitz2011pm}.
\begin{definition} \label{defn:PM-compatible}
    A posterior matching (PM) scheme is a dynamical system encoder with the following properties:
    \begin{subequations}  \label{eqn:defn:PMlikescheme}
        \begin{align}
            & \tW_1 = W, \quad \tW_{i} = S_{Y_{i-1}}(\tW_{i-1}), \quad i\geq 1  \label{eqn:defn:PMlikescheme:b} \\
            & X_i = \phi(\tW_i) \label{eqn:defn:phi}\\
            & S_y \# P_{\tW_1|Y_1=y} = P_W \text{ invertibly } \quad \forall y \in \cY \label{eqn:defn:PM-compatible:a}, \\
            & \phi \#  P_W =P_X^* \label{eqn:defn:PM-compatible:b}
        \end{align}
    \end{subequations}
\end{definition}
The invertibility of $S_y: \cW \to \cW$ for each $y \in \cY$  is inspired by the
necessary condition of invertibility given in
Lemma~\ref{lemma:reliabilityInvertibility}.  This invertibility property
implies that with knowledge of  $Y_{1:n}$, not only can $\tW_{n+1}$ be
constructed from $\tW_1$, but also that $\tW_1$ can be constructed from $\tW_{n+1}$:
\begin{subequations}\label{eqn:WinvertibleWithWn}
    \begin{align}
        \tW_{n+1} &= S_{Y_{1:n}}(\tW_1) \triangleq  S_{Y_n} \circ \cdots \circ S_{Y_1} (\tW_1) \\
        \tW_1 &=  S^{-1}_{Y_{1:n}}(\tW_{n+1}) \triangleq S_{Y_1}^{-1} \circ \cdots \circ S_{Y_n}^{-1}(\tW_{n+1})
    \end{align}
\end{subequations}

Below is a key lemma which conceptually shows that any PM scheme is ``handing
the decoder what is missing'':
\begin{lemma} \label{lemma:PMscheme:independence}
    For any PM scheme, the following holds:
    \begin{align}
        \bprob{\tW_{n+1} \in \cdot \big| \cFYn} &= \bprob{W \in \cdot}, \quad\bP-a.s. \label{eqn:lemma:PMscheme:independence} \\
        \bprob{\tW_{n} \in \cdot \big| \cFYn}   &= \bprob{\tW_n \in \cdot | Y_n} \label{eqn:lemma:PMscheme:independence:b}
    \end{align}
\end{lemma}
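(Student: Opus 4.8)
The plan is to prove the two displayed identities together by induction on $n$, carrying the single invariant that, under $\bP$, $\tW_n \sim P_W$ and $\tW_n \indep Y_{1:n-1}$. This invariant holds trivially for $n=1$, since $\tW_1 = W \sim \bnu = P_W$ by Definition~\ref{defn:Pw:bnu} and $Y_{1:0}$ is empty, and \eqref{eqn:lemma:PMscheme:independence:b} is vacuous for $n=1$.

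Assume the invariant at stage $n$. First I would peel off the channel step: by memorylessness \eqref{eqn:defn:DMC}, conditionally on $\tW_n$ (equivalently on $X_n = \phi(\tW_n)$) the output $Y_n$ has law $\pygx(\cdot|\phi(\tW_n)) = P_{Y|\tW}(\cdot|\tW_n)$ irrespective of $Y_{1:n-1}$, and combining this with $\tW_n \indep Y_{1:n-1}$, a short factorization of expectations shows that the pair $(\tW_n, Y_n)$ is independent of $Y_{1:n-1}$. This is precisely \eqref{eqn:lemma:PMscheme:independence:b} at stage $n$: $\bprob{\tW_n \in \cdot \mid \cFYn} = \bprob{\tW_n \in \cdot \mid Y_n}$. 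Moreover, since $\tW_n \sim P_W = P_{\tW_1}$ and the channel kernel is the time-invariant $P_{Y|\tW}$, we get $(\tW_n, Y_n) \overset{d}{=} (\tW_1, Y_1)$ under $\bP$, so a regular version of $\bP(\tW_n \in \cdot \mid Y_n = y)$ can be taken equal to $P_{\tW_1\mid Y_1 = y}$.

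Next I would propagate through $\tW_{n+1} = S_{Y_n}(\tW_n)$. Conditioning on $\cFYn$ and using the previous step to replace the conditional law of $\tW_n$ by $P_{\tW_1\mid Y_1 = Y_n}$, for any bounded measurable $h : \cW \to \reals$,
\[
\bE\!\left[h(\tW_{n+1}) \mid \cFYn\right] = \int_{\cW} h\!\left(S_{Y_n}(w)\right) P_{\tW_1\mid Y_1 = Y_n}(dw) = \int_{\cW} h \, d\!\left(S_{Y_n}\# P_{\tW_1\mid Y_1 = Y_n}\right) = \int_{\cW} h \, dP_W ,
\]
the last equality being exactly the PM-compatibility requirement \eqref{eqn:defn:PM-compatible:a} evaluated at $y = Y_n$. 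Since the right-hand side is a deterministic constant, this yields $\bprob{\tW_{n+1}\in\cdot\mid\cFYn} = \bprob{W\in\cdot}$, i.e.\ \eqref{eqn:lemma:PMscheme:independence} at stage $n$, and simultaneously re-establishes the invariant at stage $n+1$, closing the induction.

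I expect the only real friction to be measure-theoretic bookkeeping rather than any conceptual difficulty: justifying the conditional-independence factorization rigorously (the conditioning $\sigma$-algebra in \eqref{eqn:defn:DMC} is $\sigma(X_{1:\infty}, Y_{1:n-1})$, so one restricts to the sub-$\sigma$-algebra $\sigma(\tW_{1:n}, Y_{1:n-1})$, which still suffices for the argument), and making the step "$(\tW_n,Y_n)\overset{d}{=}(\tW_1,Y_1)$ implies their disintegrations over $Y_n$ resp.\ $Y_1$ coincide" precise by fixing a jointly measurable version of the regular conditional distribution. Alternatively, one could cite the Markov structure in Lemma~\ref{lemma:MarkovChain:PM:invertibility} to obtain $(\tW_n,Y_n)\indep Y_{1:n-1}$ more quickly, but the direct induction above keeps the role of the invertible-pushforward condition \eqref{eqn:defn:PM-compatible:a} most transparent.
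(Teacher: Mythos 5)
Your proof is correct and follows essentially the same route as the paper's: an induction carrying the invariant that under $\bP$, $\tW_n \sim P_W$ and $\tW_n \indep Y_{1:n-1}$, using the memoryless time-invariant channel to conclude $(\tW_n,Y_n) \overset{d}{=} (\tW_1,Y_1)$ and the pushforward condition \eqref{eqn:defn:PM-compatible:a} applied at $y=Y_n$ to propagate the invariant and obtain \eqref{eqn:lemma:PMscheme:independence}. The only cosmetic difference is that the paper records the conditional-independence bookkeeping through chain-rule expansions of mutual information, whereas you argue by direct factorization of conditional expectations.
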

\begin{proof}
    We show this by induction.  First note that for $n=2$, we have that $\tW_2 = S_{Y_1}(\tW_1)$.
    Now we note that given $Y=y$, we have that any function $\xi_y$ satisfying $\tW_2 = \xi_y(\tW_1)$ necessarily pushes
    $P_{\tW_1|Y_1=y}$ to $P_{\tW_2|Y_1=y}$.  In our case, $\xi_y \equiv S_y$, which from \eqref{eqn:defn:PM-compatible:a} pushes $P_{\tW_1|Y_1=y}$ to $P_W$. As such, we have that $P_{\tW_2|Y_1=y} \equiv P_W$.

    Now suppose that for some $n > 2$, we have that
    \beqa
        \bprob{\tW_n \in \cdot \big |\cFYnb} = \bprob{\tW_n \in \cdot} = P_W(\cdot). \label{eqn:proof:lemma:PMscheme:independence:aa}
    \eeqa
     As such, from the time-invariant nature of the memoryless channel, we have that $P_{\tW_n,Y_n}=P_{\tW_1,Y_1}$.  Since $\tW_{n+1}=S_{Y_n}(\tW_n)$, we have that 
    \beqa
        P_{\tW_{n+1}|Y_n=y}=P_{\tW_2|Y_1=y} \equiv P_W \label{eqn:proof:lemma:PMscheme:independence:a}
    \eeqa
    Note that this implies that $I(\tW_{n+1}; Y_n) = 0$ and so from the chain rule, we have that
    \beqa
        I(\tW_n,\tW_{n+1},Y_n; Y_{1:n-1})
        &=& I(\tW_n;Y_{1:n-1}) \nonumber \\
        &+& I(Y_n;Y_{1:n-1}|\tW_n) \nonumber \\
        &+& I(\tW_{n+1}; Y_{1:n-1} | \tW_n,Y_n) \nonumber \\
        &=& 0 \label{eqn:proof:theorem:PM:properties:c}
    \eeqa
    where \eqref{eqn:proof:theorem:PM:properties:c} follows from \eqref{eqn:proof:lemma:PMscheme:independence:aa}, from the memoryless nature of the channel \eqref{eqn:defn:DMC}, and the structure of the dynamical system encoder \eqref{eqn:defn:PMlikescheme:b}.
    Now if we re-write the chain rule in a different order, we have
    \beqa
        0 &=& I(\tW_n,\tW_{n+1},Y_n; Y_{1:n-1}) \nonumber \\
        &=& I(Y_n; Y_{1:n-1})  \nonumber \\
        &+& I(\tW_{n+1}; Y_{1:n-1}|Y_n) \nonumber \\
        &+& I(\tW_n; Y_{1:n-1}| Y_n,\tW_{n+1}) \label{eqn:proof:theorem:PM:properties:d}.
    \eeqa
    From the non-negativity of conditional mutual information and \eqref{eqn:proof:theorem:PM:properties:d}, it follows that $I(\tW_{n+1}; Y_{1:n-1}|Y_n) = 0$.
    Thus, from \eqref{eqn:proof:lemma:PMscheme:independence:a} and \eqref{eqn:proof:theorem:PM:properties:d}, we have that
    \[I(\tW_{n+1}; Y_{1:n}) = I(\tW_{n+1}; Y_n)  + I(\tW_{n+1}; Y_{1:n-1}|Y_n) = 0,\]
    which combined with \eqref{eqn:proof:lemma:PMscheme:independence:a} implies \eqref{eqn:lemma:PMscheme:independence}.

    From the non-negativity of conditional mutual information and \eqref{eqn:proof:theorem:PM:properties:d}, it follows that $I(\tW_{n+1}; Y_{1:n-1}|Y_n)=0$, which implies \eqref{eqn:lemma:PMscheme:independence:b}.
\end{proof}

\begin{definition}
Define $f_n(u)$ and $i_n(u)$ as the posterior density and normalized
information density, respectively:
\begin{subequations}\label{defn:normalizedInfoDensity}
    \begin{eqnarray}
        f_n(u) &\triangleq& \frac{d\bpost_n}{d\bpost_0}(u), \quad u\in \cW \\
        i_n(W) &\triangleq& \frac{1}{n} i(W,Y_{1:n}) = \frac{1}{n} \log f_n(W)
    \end{eqnarray}
\end{subequations}
\end{definition}
\begin{theorem}\label{theorem:PM:properties}
    For any PM scheme,  the following properties hold under $\bP$ for all $n \geq 1$:
    \benum
        \item $(Y_n)_{n \geq 1}$ are $\bP$-i.i.d.
        \item $(\tW_n)_{n \geq 1}$ is a $\bP$-stationary Markov chain.
        \item $(\tW_n,Y_n)_{n \geq 1}$ is a $\bP$-stationary Markov chain.
        \item $\lim_{n \to \infty} i_n(W) = \bE[i(\tW,Y) | \tail_{\tW,Y}] \quad \bP-a.s.$
        \item $I(W;Y_{1:n})=nC$ for any $n \geq 1$.
    \eenum
\end{theorem}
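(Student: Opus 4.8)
The plan is to prove the five items in an order governed by their dependencies, leaning throughout on Lemma~\ref{lemma:PMscheme:independence} and on the iterated-function-system (IFS) description of the dynamics used in the proof of Lemma~\ref{lemma:MarkovChain:PM:invertibility}: there is an i.i.d.\ ``channel-noise'' sequence $(N_n)_{n\geq1}$ and fixed measurable maps with $Y_n=\tilde g(\tW_n,N_n)$, $\tW_{n+1}=S_{Y_n}(\tW_n)=h(\tW_n,N_n)$, and $N_n\indep(\tW_1,N_1,\ldots,N_{n-1})$, hence $N_n\indep\tW_{1:n}$. Items~1--3 and~5 are then fairly direct; item~4 is the substantive one, and I would establish the items in the order 1, 2, 3, 5, 4.

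\textbf{Items 1--3.}\quad For item~1 I would induct on $n$: \eqref{eqn:lemma:PMscheme:independence} gives $\tW_{n+1}\indep\cFYn$ with $\tW_{n+1}\sim P_W$, so $X_{n+1}=\phi(\tW_{n+1})\sim\phi\#P_W=P_X^*$ by \eqref{eqn:defn:PM-compatible:b} and $X_{n+1}\indep\cFYn$; conditioning the memoryless-channel identity \eqref{eqn:defn:DMC} on $\cFYn$ then gives $\bprob{Y_{n+1}\in A|\cFYn}=\bE[P_{Y|X}(A|X_{n+1})|\cFYn]=P_Y^*(A)$, where $P_Y^*$ is the fixed output law induced by $P_X^*$, so $(Y_n)_{n\geq1}$ are $\bP$-i.i.d. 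For item~2, the IFS identity $\tW_{n+1}=h(\tW_n,N_n)$ with $N_n\indep\tW_{1:n}$ makes $(\tW_n)_{n\geq1}$ a time-homogeneous Markov chain; since \eqref{eqn:lemma:PMscheme:independence} forces $\tW_n\sim P_W$ for every $n$ while $\tW_1=W\sim P_W$, the chain starts in its invariant law and is $\bP$-stationary. Item~3 is the same argument for the pair: $\tW_{n+1}$ is a function of $(\tW_n,Y_n)$ and $Y_{n+1}=\tilde g(\tW_{n+1},N_{n+1})$ with $N_{n+1}\indep(\tW_{1:n},Y_{1:n})$, so $(\tW_n,Y_n)_{n\geq1}$ is a time-homogeneous Markov chain whose one-step law, namely $\tW_n\sim P_W$ with $Y_n$ the channel output for input $\phi(\tW_n)$, does not depend on $n$, hence it is stationary.

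\textbf{Items 5 and 4.}\quad Item~5 follows from Lemma~\ref{lemma:bound:mutInfo:Capacity}: each $X_i=\phi(\tW_i)\sim P_X^*$, so the average-cost constraint is met at the level attained by $P_X^*$, and the equality conditions (a) and (b) of that lemma hold by item~1 and by $X_i\sim P_X^*$; hence $\frac{1}{n} I(W;Y^n)=I(P_X^*,P_{Y|X})=C$. For item~4 the key step is a telescoping identity for $f_n$. Applying Bayes' rule under $\bP$ with prior $\bpost_0=P_W$, and using that the deterministic recursion makes $\tW_k$ a function of $(W,Y_{1:k-1})$, the conditional density of $Y_{1:n}$ given $W=w$ factors (w.r.t.\ $\lambda_Y^{\otimes n}$) as $\prod_{k=1}^n L(Y_k|\phi(\tW_k))$, while by item~1 its $\bP$-marginal density factors as $\prod_{k=1}^n\frac{dP_Y^*}{d\lambda_Y}(Y_k)$; therefore
\[
  i_n(W)=\frac{1}{n}\log f_n(W)=\frac{1}{n}\sum_{k=1}^n\log\frac{dP_{Y|X=\phi(\tW_k)}}{dP_Y^*}(Y_k)=\frac{1}{n}\sum_{k=1}^n i\bigl(\phi(\tW_k),Y_k\bigr),
\]
with $i(\cdot,\cdot)$ the information density \eqref{eqn:defn:informationdensity} of $(X_1,Y_1)\sim P_X^*\otimes P_{Y|X}$; moreover $i(\phi(\tW_1),Y_1)$ is a version of the information density $i(\tW,Y)$ of the pair $(\tW_1,Y_1)$, since $Y_1\indep\tW_1$ given $X_1$. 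Because $(\tW_k,Y_k)_{k\geq1}$ is $\bP$-stationary (item~3) and $i(\tW_1,Y_1)\equivInDistribution i(X_1,Y_1)$ is integrable by Lemma~\ref{lemma:finiteMutInfo}, Birkhoff's pointwise ergodic theorem gives $i_n(W)\to\bE[i(\tW_1,Y_1)|\mathcal I]$ $\bP$-a.s., with $\mathcal I$ the shift-invariant $\sigma$-algebra; finally, for the stationary Markov chain $(\tW_n,Y_n)_{n\geq1}$ the invariant $\sigma$-algebra agrees modulo $\bP$-null sets with $\tail_{\tW,Y}$ \cite{blackwell1964tail}, which gives the claimed limit.

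\textbf{Main obstacle.}\quad I expect item~4 to be the crux: deriving the additive decomposition $i_n(W)=\frac{1}{n}\sum_{k\leq n} i(\phi(\tW_k),Y_k)$ rigorously in the continuum setting (choosing reference measures, verifying joint measurability of the likelihoods, and handling the $\lambda_Y$-null set where $\frac{dP_Y^*}{d\lambda_Y}$ vanishes), and then matching Birkhoff's invariant $\sigma$-algebra with $\tail_{\tW,Y}$ via the Markov-chain equivalence. Items~1--3 and~5 should be comparatively routine once Lemma~\ref{lemma:PMscheme:independence}, the IFS structure, and Lemma~\ref{lemma:bound:mutInfo:Capacity} are invoked.
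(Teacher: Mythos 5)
Your proposal is correct, and items 1--3 and 5 run along essentially the same lines as the paper's proof (item 1 via direct conditioning on $\cFYn$ rather than the mutual-information identities extracted from \eqref{eqn:proof:theorem:PM:properties:d}, but the substance -- Lemma~\ref{lemma:PMscheme:independence}, the IFS structure, and Lemma~\ref{lemma:bound:mutInfo:Capacity} -- is identical). Where you genuinely diverge is item 4. The paper telescopes $\log f_m(W)$ into the increments $\log \frac{d\bpost_n}{d\bpost_{n-1}}(W)$ and converts each increment into $i(\tW_n,Y_n)$ by combining the invertibility relation \eqref{eqn:WinvertibleWithWn} (given $Y_{1:n-1}$, $\tW_1$ and $\tW_n$ are in one-to-one correspondence) with both statements of Lemma~\ref{lemma:PMscheme:independence}; you instead reach the same additive decomposition $i_n(W)=\frac{1}{n}\sum_{k\le n} i(\tW_k,Y_k)$ by a Bayes-rule likelihood factorization: the conditional density of $Y_{1:n}$ given $W$ factors as $\prod_k L(Y_k|\phi(\tW_k))$ by memorylessness, while the marginal factors as a product of $P_Y^*$-densities by item 1. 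Your route is more elementary in that it bypasses the invertibility structure entirely, using only the channel law and the i.i.d.\ output property; the cost is the reference-measure bookkeeping you yourself flag (existence and joint measurability of the densities, null sets where $dP_Y^*/d\lambda_Y$ vanishes), which the paper's change-of-measure formulation avoids. The paper's route, by contrast, exercises the same invertibility machinery that reappears later (e.g.\ in Lemma~\ref{lemma:rate_sequence}), so it is more uniform with the rest of the development. Both arguments end with Birkhoff's theorem for the stationary chain $(\tW_n,Y_n)_{n\ge1}$; you are explicit about identifying the shift-invariant $\sigma$-algebra with $\tail_{\tW,Y}$ modulo null sets for a stationary Markov chain, a point the paper passes over silently, and your appeal to Lemma~\ref{lemma:finiteMutInfo} for integrability matches the paper's.
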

\begin{proof}
    \benum
    \item $I(Y_{n};Y_{1:n-1})=0$ follows from the first term of the sum in \eqref{eqn:proof:theorem:PM:properties:d}.
    Moreover, from \eqref{eqn:lemma:PMscheme:independence} and the time-invariant, memoryless nature \eqref{eqn:defn:DMC} of the channel, it follows that $(Y_n)_{n \geq 1}$ are $\bP$-i.i.d.
    \item From 1), $(Y_n)_{n \geq 1}$ are $\bP$-i.i.d., and so the dynamical system encoder
     \eqref{eqn:defn:PMlikescheme:b} corresponds to an iterated
    function system $(\tW_n)_{n \geq 1}$ controlled by $(Y_n)_{n \geq 1}$.  This
    implies that $(\tW_n)_{n \geq 1}$ is a Markov chain \cite{ykifer}.  It is stationary from \eqref{eqn:lemma:PMscheme:independence}.
    \item From \eqref{eqn:defn:PMlikescheme:b} in the PM scheme definition, $P_{\tW_{n+1}|\tW_{1:n},Y_{1:n}}$ is a point mass at $S_{Y_n}(\tW_n)$. Moreover, from \eqref{eqn:defn:phi} in the PM scheme definition and \eqref{eqn:defn:DMC}, we have that $P_{Y_{n+1}|Y_{1:n},\tW_{1:n+1}} \parenth{dy_{n+1}|y_{1:n},\tw_{1:n+1}}$ $=P_{Y|X}\parenth{dy_{n+1}|\phi(\tw_{n+1})}$.  Thus $(\tW_n,Y_n)_{n \geq 1}$ is Markov.  Stationarity again follows from  \eqref{eqn:lemma:PMscheme:independence}.

    \item From 3), $(\tW_n,Y_n)_{n \geq 1}$ is a stationary
    Markov chain with invariant measure $P_{W,Y}$. Thus we have that
    \begin{align}
         i_m(W) &= \frac{1}{m} \log \frac{d\bpost_m}{d\bpost_0}(W) \label{eqn:lemma:IIFS:BirkhoffT:aa}  \\
          &= \frac{1}{m} \sum_{n=1}^m \log \frac{d\bpost_n}{d\bpost_{n-1}}(W) \nonumber \\
          &= \frac{1}{m} \sum_{n=1}^m \log \frac{d \bprob{\tW_1 \in \cdot | \cFYn}}{ d\bprob{\tW_1 \in \cdot | \cFYnb}}(W) \nonumber \\
          &= \frac{1}{m} \sum_{n=1}^m \log \frac{d \bprob{\tW_n \in \cdot | \cFYn}}{ d\bprob{\tW_n \in \cdot | \cFYnb}}(\tW_n) \label{eqn:lemma:IIFS:BirkhoffThm:a} \\
          &= \frac{1}{m} \sum_{n=1}^m \log \frac{d \bprob{\tW_n \in \cdot | Y_n}}{ d\bprob{\tW_n \in \cdot}}(\tW_n) \label{eqn:lemma:IIFS:BirkhoffThm:b}  \\
          &=  \frac{1}{m} \sum_{n=1}^m i(\tW_n,Y_n) \label{eqn:lemma:IIFS:BirkhoffThm:c}  \\
          &\to \bE[ i(\tW_1,Y_1) | \tail_{\tW,Y}] \label{eqn:lemma:IIFS:BirkhoffThm:d}
    \end{align}
    where \eqref{eqn:lemma:IIFS:BirkhoffT:aa} follows from
    \eqref{defn:normalizedInfoDensity}; \eqref{eqn:lemma:IIFS:BirkhoffThm:a}
    follows from \eqref{eqn:WinvertibleWithWn}: given $Y_{1:n-1}$, $\tW_1$ and
    $\tW_n$ are in one-to-one correspondence;
    \eqref{eqn:lemma:IIFS:BirkhoffThm:b} follows from both conditions in
    Lemma~\ref{lemma:PMscheme:independence};
    \eqref{eqn:lemma:IIFS:BirkhoffThm:c} follows from the definition
    \eqref{eqn:defn:informationdensity} of the information density, and
    \eqref{eqn:lemma:IIFS:BirkhoffThm:d} follows from Birkhoff's pointwise
    ergodic theorem applied to the stationary Markov process $(\tW_n,Y_n)_{n
    \geq 1}$ where the integrability of the random variable $i(W_1,Y_1)$ with
    respect to $P_{W,Y}$ follows from Lemma~\ref{lemma:finiteMutInfo}.

    \item This follows from Lemma~\ref{lemma:bound:mutInfo:Capacity} since (a)
    $(Y_n)_{n \geq 1}$ are i.i.d. and (b) $\tW_n$ has distribution $P_W$, thus
    implying from \eqref{eqn:defn:PM-compatible:b} that each $X_n \sim P_X^*$.
    \eenum
\end{proof}

This leads to the following corollary:
\begin{corollary} \label{cor:PM-compatible-phi-invertible}
    If $\phi$ is invertible, then the posterior matching scheme can be
    equivalently described as
    \begin{align*}
        & X_1 = \phi(W), \quad X_{i+1} = \phi \circ S_{Y_i} \circ \phi^{-1}(X_i) \\%\label{eqn:cor:PM-compatible-phi-invertible} \\
        & (\phi \circ S_y \circ \phi^{-1}) \# P_{X_1|Y_1=y}=P_{X}^*
    \end{align*}
\end{corollary}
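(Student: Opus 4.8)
The plan is to deduce this description by simply \emph{conjugating} the dynamical system of Definition~\ref{defn:PM-compatible} by the (assumed) invertible map $\phi$, i.e.\ by the change of variables $X_i = \phi(\tW_i)$, equivalently $\tW_i = \phi^{-1}(X_i)$. First I would verify the recursion: by \eqref{eqn:defn:PMlikescheme:b} and \eqref{eqn:defn:phi} we have $X_1 = \phi(\tW_1) = \phi(W)$, and for $i \geq 1$,
\[
  X_{i+1} = \phi(\tW_{i+1}) = \phi\bigl(S_{Y_i}(\tW_i)\bigr) = \phi\bigl(S_{Y_i}(\phi^{-1}(X_i))\bigr) = \bigl(\phi \circ S_{Y_i} \circ \phi^{-1}\bigr)(X_i),
\]
which is exactly the first displayed equation in the statement; invertibility of $\phi$ enters precisely in replacing $\tW_i$ by $\phi^{-1}(X_i)$.

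Next I would establish the pushforward identity. Since $X_1 = \phi(W) = \phi(\tW_1)$ is a deterministic, bijective Borel function of $\tW_1$ while the conditioning variable $Y_1$ is untouched by $\phi$, the regular conditional law of $X_1$ given $Y_1 = y$ is the $\phi$-pushforward of that of $\tW_1$, namely $P_{X_1|Y_1=y} = \phi \# P_{\tW_1|Y_1=y}$ for $P_{Y_1}$-a.e.\ $y$. Applying the elementary composition rule $(g\circ f)\#\mu = g\#(f\#\mu)$ repeatedly,
\[
  \bigl(\phi \circ S_y \circ \phi^{-1}\bigr)\# P_{X_1|Y_1=y} = \bigl(\phi \circ S_y \circ \phi^{-1}\bigr)\#\bigl(\phi \# P_{\tW_1|Y_1=y}\bigr) = \bigl(\phi \circ S_y\bigr)\# P_{\tW_1|Y_1=y} = \phi \#\bigl(S_y \# P_{\tW_1|Y_1=y}\bigr).
\]
Now \eqref{eqn:defn:PM-compatible:a} gives $S_y \# P_{\tW_1|Y_1=y} = P_W$ and \eqref{eqn:defn:PM-compatible:b} gives $\phi \# P_W = P_X^*$, so chaining these yields $\bigl(\phi \circ S_y \circ \phi^{-1}\bigr)\# P_{X_1|Y_1=y} = P_X^*$, as claimed. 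Moreover $\phi \circ S_y \circ \phi^{-1}$ is invertible, being a composition of the invertible maps $\phi$, $S_y$ (invertible by Definition~\ref{defn:PM-compatible}) and $\phi^{-1}$, so this pushforward is realized ``invertibly'' in the sense of \eqref{eqn:defn:PM-compatible:a}.

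The argument is essentially bookkeeping, and I do not expect a genuine obstacle. The one point that warrants a careful sentence (or a one-line lemma) is the identity $P_{X_1|Y_1=y} = \phi \# P_{\tW_1|Y_1=y}$, i.e.\ that pushforward by $\phi$ commutes with disintegration along $Y_1$; this holds here because $X_1$ is a deterministic function of $\tW_1$ and $\phi$ acts only on that coordinate, so the regular conditional distribution transforms by the ordinary change-of-variables formula. Beyond that, the proof reduces to the composition law for pushforwards together with \eqref{eqn:defn:PM-compatible:a}--\eqref{eqn:defn:PM-compatible:b}.
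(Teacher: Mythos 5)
Your proof is correct, and it follows the only natural route: the paper states this corollary without any written proof (it is presented as an immediate consequence of Definition~\ref{defn:PM-compatible}), and your conjugation argument $X_i=\phi(\tW_i)$ together with the pushforward composition rule and properties \eqref{eqn:defn:PM-compatible:a}--\eqref{eqn:defn:PM-compatible:b} is exactly the bookkeeping the paper leaves implicit. Your extra care about $P_{X_1|Y_1=y}=\phi\#P_{\tW_1|Y_1=y}$ holding for $P_{Y_1}$-a.e.\ $y$ is a correct and welcome refinement rather than a deviation.
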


\subsection{Unique Construction of PM schemes in Arbitrary Dimensions} \label{sec:PMOTTheory}
We now note that there are multiple possible PM schemes for a given $\pygx$ and $\cW$, even in one dimension. 
Since for any uniform $(0,1)$ random variable $V$, the random variable $1-V$ is also uniform $(0,1)$, consider \eqref{eqn:example:PMscheme:one-dimension:b} in Example~\ref{example:PMscheme:one-dimension} and replace it with
\[        \tW_1 = W, \quad \tW_{n+1} = 1-F_{W|Y_{1:n}}(W|Y_{1:n}), \quad n\geq 1.  \]
This is clearly also a PM scheme.  In larger dimensions, there are \emph{many} possible PM schemes and so it would be desirable to always select a \emph{unique} PM scheme. We here demonstrate a way to do so using optimal transport theory (OTT), which involves finding an optimal mapping that transforms samples from one distribution to another, under an appropriate measure of cost.  For $\cU,\cV \subset \reals^d$, consider a cost function $c(u,v)$ on $\cU \times \cV$.
\begin{definition}\label{defn:OTTproblem}
Given a cost function $c: \cU \times \cV \to \reals$ and a pair of distributions $P_U \in \probSimplex{\cU},P_V \in \probSimplex{\cV}$, we 
consider the following optimization problem:
\beqa 
\OTTcost(P_U,P_V,c): \inf_{S: \;\;S \# P_U=P_V} \int_{\cU } c(u,S(u)) P_U(du) \label{defn:eqn:OptimalTransportProblem}
\eeqa
\end{definition}
Monge was the first to formulate this problem \cite{monge1781}, while Kantorovich reformulated \eqref{defn:eqn:OptimalTransportProblem}
to a more general problem of optimization over a space of joint distributions that preserves marginals $P_U$ and $P_V$ \cite{kantorovich1942mass}. This problem has
also been studied in depth in \cite{villani2003topics,villani2009optimal, rachev1998mass}.
A standard and well-studied version is under the quadratic cost.
\begin{definition}
For a $M \in \reals^{d \times d}$ for which $M \succ 0$, we define the following cost functions of the form $c_M(u,v) \triangleq \|u-v\|_M^2 \triangleq (u-v)^TM(u-v)$.
\end{definition}

\subsubsection{Existence of $\phi$}
We now demonstrate that for appropriate cost functions $c(u,v)$, such a map $\phi$ satisfying
property \eqref{eqn:defn:PM-compatible:b} can be recovered from solving $\OTTcost(P_W,P_X^*,c)$.   
\begin{theorem}\label{thm:monge:phi}
    Let $c(u,v)=\norm{u-v}$ where $\norm{\cdot}$ is a strictly convex norm on $\reals^d$. Then the problem $\OTTcost(P_W,P_X^*,c)$ 
    has at least one optimal solution.
\end{theorem}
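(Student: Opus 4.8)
The plan is to route the Monge problem $\OTTcost(P_W, P_X^*, c)$ through its Kantorovich relaxation: first show the relaxed problem (minimization over couplings with the prescribed marginals) attains its infimum, and then promote the optimal coupling to an honest transport map $S$ by exploiting the absolute continuity of $P_W$ together with the strict convexity of the norm $\norm{\cdot}$, which forces the optimal plan to be concentrated on a graph.

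First I would set up the relaxation. Let $\setOfCouplings(P_W, P_X^*)$ denote the set of probability measures on $\cW \times \cX$ with first marginal $P_W$ and second marginal $P_X^*$; it is nonempty, since it contains the product $P_W \otimes P_X^*$. Because $\cW, \cX \subset \reals^d$ are bounded by Assumption~\ref{assumption:cW}, both marginals are tight, so $\setOfCouplings(P_W, P_X^*)$ is tight and weakly closed, hence weakly compact by Prokhorov's theorem. The cost $c(u,v) = \norm{u-v}$ is nonnegative, continuous, and bounded on the bounded set $\cW \times \cX$, so $\gamma \mapsto \int c\, d\gamma$ is weakly lower semicontinuous (in fact weakly continuous). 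The direct method then yields a minimizer $\gamma^*$ of the Kantorovich functional over $\setOfCouplings(P_W, P_X^*)$. Since any Borel map $S$ with $S \# P_W = P_X^*$ induces the coupling $\gamma_S = (\mathrm{id} \times S)\# P_W$ carrying cost $\int_\cW c(u, S(u))\, P_W(du)$, the Kantorovich value is a lower bound for $\OTTcost(P_W, P_X^*, c)$.

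The substantive step is to show $\gamma^*$ is itself of the form $(\mathrm{id} \times S)\# P_W$ for some Borel $S : \cW \to \cX$. Here I would invoke the structure theory for optimal transport under a cost $c(u,v) = \norm{u-v}$ with $\norm{\cdot}$ a \emph{strictly convex} norm, due to Gangbo and McCann (see \cite[Ch.~10]{villani2009optimal}, \cite{villani2003topics}): when the source measure is absolutely continuous with respect to Lebesgue measure, every optimal Kantorovich plan is induced by a map. And $P_W = \bnu$ \emph{is} absolutely continuous, being the uniform distribution on the open bounded convex set $\cW$ (Definition~\ref{defn:Pw:bnu}). Thus there is a Borel $S$ with $\gamma^* = (\mathrm{id} \times S)\# P_W$, whence $S \# P_W = P_X^*$ and $\int_\cW c(u, S(u))\, P_W(du) = \int c\, d\gamma^*$ equals the Kantorovich value; combined with the reverse inequality from the previous step, $S$ attains $\OTTcost(P_W, P_X^*, c)$, which is exactly the assertion.

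The hard part is this last step, and it is precisely why strict convexity of the norm is assumed: a norm is never strictly convex as a function on $\reals^d$ (it is linear along rays), so the Brenier/Gangbo--McCann argument for strictly convex costs does not apply verbatim; strict convexity of the \emph{unit ball} is the substitute that still pins down a unique optimal transport direction at $P_W$-a.e.\ point and hence a graph structure, whereas the bare Euclidean-distance cost would require the considerably more delicate Sudakov-type decomposition. I would also take care to record that only regularity of the source marginal is used — the target $P_X^*$, the capacity-achieving input distribution, may well be discrete, which is exactly why the transport is organized from $P_W$ to $P_X^*$ and not the reverse — and to verify the Borel measurability of $S$ produced by the graph-extraction argument.
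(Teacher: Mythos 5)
There is a genuine gap at the load-bearing step. Your scaffolding (Kantorovich relaxation, compactness of the set of couplings via Prokhorov, lower semicontinuity of the cost, and the observation that any map-induced coupling dominates the Kantorovich value) is fine, but the promotion of the optimal plan $\gamma^*$ to a map rests on a claim that is false and on a misattributed theorem. Gangbo--McCann's structure theory applies to costs $c(u,v)=h(u-v)$ with $h$ \emph{strictly convex as a function}; a norm is never such an $h$, as you yourself note, so that theorem simply does not cover $c(u,v)=\norm{u-v}$. More importantly, the statement you extract from it --- ``every optimal Kantorovich plan is induced by a map when the source is absolutely continuous'' --- is false for norm costs, even strictly convex ones. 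Take $d=1$, $P_W$ uniform on $(0,1)$ and the target uniform on $(2,3)$: every coupling has cost $\E[Y-X]=2$, so every coupling is optimal, including the product coupling, which is not induced by a map. The same phenomenon occurs in any dimension, because along a transport ray the cost is affine, so mass may be split along the ray without losing optimality. Consequently your fallback argument --- that strict convexity of the unit ball ``pins down a unique optimal transport direction at $P_W$-a.e.\ point and hence a graph structure'' --- does not close the gap: strict convexity of the ball controls the \emph{direction} of transport (rays do not cross), but not the destination along the ray, which is exactly why non-map optimal plans exist and why the selection of a map-induced optimal plan is the hard content of the theorem.

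What is true, and what the paper actually invokes, is the existence theorem of Champion and De Pascale \cite[Theorem 1.1]{champion2010monge}: for a strictly convex norm cost and an absolutely continuous first marginal ($P_W=\bnu \ll \mu$ here), \emph{some} optimal plan is induced by a Borel map, hence the Monge problem $\OTTcost(P_W,P_X^*,c)$ admits a minimizer. With that citation in place of the Gangbo--McCann reference, your relaxation argument becomes a (slightly longer) version of the paper's one-line proof. Note also that your closing remark has the difficulty backwards: the Euclidean distance \emph{is} a strictly convex norm, so it is not a harder case requiring extra Sudakov-type work beyond the strictly convex setting; the Sudakov-style ray decomposition (or the Champion--De Pascale density-point argument) is precisely what is needed for strictly convex norms including the Euclidean one, and the genuinely more delicate case is that of general, non-strictly-convex norms.
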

\begin{proof}
This follows directly from
\cite[Theorem 1.1]{champion2010monge} where we simply exploit the fact that clearly $P_W$, the uniform distribution on $\cW$, has a density with respect to the Lebesgue measure (e.g. $P_W \ll \mu$). 
\end{proof}

\begin{remark}
This theorem's significance is that a transformation $\phi$ satisfying \eqref{eqn:defn:PM-compatible:b} can be found with optimal transport theory, even if $P_X^*$ does not have a density with respect to the Lebesgue measure.  
\end{remark}

For example, suppose $\cW \subset \reals^d$ and $P_{Y|X}$ is a discrete memoryless channel.  Here, we simply let $\cX \subset \reals^d$
and specify  $P_X^*$ to place atoms at the countable set of points in $\cX$ with associated atom probabilities from the optimal input distribution.

\subsubsection{Uniqueness of $\phi$}
We can now state another standard theorem from optimal transport theory, pertaining to when the cost function $c_M$ with $M=I$, e.g. $c_I(u,v)=\|u-v\|^2$:
\begin{theorem}[\cite{villani2009optimal}, Theorem 9.4]\label{thm:monge:unique}
    If $P_U \ll \mu$ and $P_U,P_V$ both have finite second moments,
    then the problem $\OTTcost(P_U,P_V,c_I)$ has a unique optimal solution.
\end{theorem}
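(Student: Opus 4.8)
The plan is to follow the classical Brenier--McCann argument. First I would pass from the Monge formulation in \eqref{defn:eqn:OptimalTransportProblem} to its Kantorovich relaxation, replacing maps $S$ with couplings $\gamma \in \setOfCouplings(P_U,P_V)$ and minimizing $\gamma \mapsto \int \norm{u-v}^2 \, d\gamma(u,v)$. Since $P_U$ and $P_V$ have finite second moments the infimum is finite; since the marginals are fixed, $\setOfCouplings(P_U,P_V)$ is tight, hence weakly compact by Prokhorov's theorem; and $\gamma \mapsto \int \norm{u-v}^2\,d\gamma$ is weakly lower semicontinuous because $\norm{u-v}^2$ is nonnegative and continuous. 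Therefore an optimal coupling $\gamma^*$ exists.

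Next I would invoke the structure theorem for optimizers: the support of $\gamma^*$ must be $c$-cyclically monotone (otherwise a finite rerouting of mass along a cycle would strictly decrease the cost), and for the quadratic cost $c_I$ this coincides with ordinary cyclic monotonicity of $\mathrm{supp}(\gamma^*) \subset \reals^d \times \reals^d$. By Rockafellar's theorem every cyclically monotone set is contained in the subdifferential $\partial\psi$ of some proper lower semicontinuous convex function $\psi:\reals^d \to \reals\cup\{+\infty\}$, so $\gamma^*$ is concentrated on $\{(u,v): v\in\partial\psi(u)\}$.

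The hypothesis $P_U \ll \mu$ enters decisively here. One checks $\psi<\infty$ $P_U$-a.s., and a finite convex function is differentiable off a Lebesgue-null (hence $P_U$-null) set, where $\partial\psi(u)=\{\nabla\psi(u)\}$ is single-valued. Thus $\gamma^*$ is actually carried by the graph of the map $T:=\nabla\psi$, i.e. $\gamma^*=(\mathrm{id},T)\#P_U$ with $T\#P_U=P_V$; this $T$ is an admissible competitor for \eqref{defn:eqn:OptimalTransportProblem} whose induced coupling is $\gamma^*$, so it is optimal, proving existence of an optimal solution to $\OTTcost(P_U,P_V,c_I)$.

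For uniqueness, suppose $T_1=\nabla\psi_1$ and $T_2=\nabla\psi_2$ are two optimal Monge maps with induced couplings $\gamma_1,\gamma_2$. Since the Kantorovich cost is linear in $\gamma$, $\tfrac12(\gamma_1+\gamma_2)$ is also optimal, and by the argument above it too is concentrated on the graph of a single map; a convex combination of two graphs is itself a graph only where the graphs agree, forcing $T_1=T_2$ $P_U$-a.s. The main obstacle is the middle step: carefully showing the optimal plan is carried by the \emph{graph} of $\nabla\psi$ and not merely contained in $\partial\psi$. This demands the measure-theoretic estimate that $P_U$ charges neither $\{\psi=+\infty\}$ nor the non-differentiability locus of $\psi$, which is exactly where absolute continuity of $P_U$ with respect to $\mu$ is indispensable.
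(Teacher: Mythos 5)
Your argument is mathematically sound, but note that the paper does not prove Theorem~\ref{thm:monge:unique} at all: it is quoted verbatim from Villani (Theorem 9.4), i.e.\ the Brenier--McCann theorem, and is used as a black box to obtain Corollary~\ref{corollary:monge:unique:phi}. What you have written is essentially the standard proof of that cited result: Kantorovich relaxation of \eqref{defn:eqn:OptimalTransportProblem} with existence via tightness and lower semicontinuity, $c$-cyclical monotonicity of the support, Rockafellar's theorem producing a convex potential $\psi$, the use of $P_U \ll \mu$ to show the plan charges neither $\{\psi=+\infty\}$ nor the non-differentiability set of $\psi$ (both Lebesgue-null, the latter including the boundary of $\mathrm{dom}\,\psi$), hence graph-concentration on $\nabla\psi$, and uniqueness by averaging two optimal plans and noting the mixture must again live on a single graph. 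The only steps deserving a touch more care are exactly the ones you flag: verifying that the Monge and Kantorovich optima coincide (so that both candidate maps induce Kantorovich-optimal plans before averaging), and the measure-theoretic estimate that $P_U$ gives full mass to the differentiability points of $\psi$. So the proposal is correct; it simply supplies, in-line, the proof that the paper delegates to the literature, which buys self-containedness at the cost of rederiving a classical theorem.
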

Note that since $P_W \ll \mu$ and from Assumption~\ref{assumption:cW}, $P_W$ has finite second moment. We have the following corollary as a consequence of Theorem~\ref{thm:monge:unique}:
\begin{corollary} \label{corollary:monge:unique:phi}
    If $P_X^*$ has finite second moment, then the problem $\OTTcost(P_W,P_X^*,c_I)$ has a unique optimal solution $\phi$ which satisfies property \eqref{eqn:defn:PM-compatible:b}.
\end{corollary}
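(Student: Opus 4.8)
The plan is to obtain Corollary~\ref{corollary:monge:unique:phi} as a direct application of Theorem~\ref{thm:monge:unique} with the identification $P_U = P_W$ and $P_V = P_X^*$, so essentially all of the work is verifying the two hypotheses of that theorem and then matching its conclusion to property \eqref{eqn:defn:PM-compatible:b}. First I would check $P_W \ll \mu$: this is immediate because $P_W = \bnu$ is the uniform distribution on the open, bounded, convex set $\cW$ (Definition~\ref{defn:Pw:bnu}, Assumption~\ref{assumption:cW}), whose density with respect to Lebesgue measure on $\reals^d$ is $\mu(\cW)^{-1}\mathbf{1}_{\cW}$.

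Next I would verify the finite-second-moment requirements. For $P_W$, boundedness of $\cW$ from Assumption~\ref{assumption:cW} gives $\int_{\cW}\|w\|^2\,P_W(dw) \le \sup_{w \in \cW}\|w\|^2 < \infty$; for $P_X^*$, finiteness of the second moment is precisely the standing hypothesis of the corollary. With both hypotheses of Theorem~\ref{thm:monge:unique} in force, that theorem yields a unique optimal solution of $\OTTcost(P_W,P_X^*,c_I)$, which we name $\phi$.

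Finally I would observe that this $\phi$ satisfies \eqref{eqn:defn:PM-compatible:b}: by Definition~\ref{defn:OTTproblem}, the feasible set of $\OTTcost(P_W,P_X^*,c_I)$ is exactly the collection of Borel maps $S$ with $S \# P_W = P_X^*$, so in particular the (unique) optimizer $\phi$ pushes $P_W$ forward to $P_X^*$, which is property \eqref{eqn:defn:PM-compatible:b}.

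There is no genuine obstacle here; the only points needing any care are recognizing that the finite second moment of $P_W$ is a consequence of the boundedness clause in Assumption~\ref{assumption:cW}, and that the constraint set in the Monge problem \eqref{defn:eqn:OptimalTransportProblem} coincides with the push-forward condition \eqref{eqn:defn:PM-compatible:b}, after which the corollary is a one-line invocation of Theorem~\ref{thm:monge:unique}.
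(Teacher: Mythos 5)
Your proposal is correct and follows exactly the paper's route: the paper also notes that $P_W \ll \mu$ and has finite second moment by the boundedness in Assumption~\ref{assumption:cW}, and then invokes Theorem~\ref{thm:monge:unique} with $P_U = P_W$, $P_V = P_X^*$, the push-forward property \eqref{eqn:defn:PM-compatible:b} being automatic from the constraint set of \eqref{defn:eqn:OptimalTransportProblem}. No differences worth noting.
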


\subsubsection{Uniqueness of $S_y$}
In order to satisfy the necessary condition from Lemma~\ref{lemma:reliabilityInvertibility}, the PM scheme 
involves an  {\it invertible} map $S_y: \cW \to \cW$ satisfying
\eqref{eqn:defn:PM-compatible:a}.  We now demonstrate that such a map can be
uniquely and explicitly constructed with optimal transport theory.  We do this
by specifying a weighted quadratic cost for the Monge-Kantorovich problem pertaining
to Brenier's problem.  
\begin{theorem}[Generalized Brenier's Theorem]%
    Suppose $\cW \subset \reals^d$, $P, Q \in \probSimplex{\cW}$, and $P,Q \ll \mu$ which induce
    densities $p(u)$ and $q(v)$ respectively with respect to
    the Lebesgue measure $\mu$.  For any $M \succ 0$, consider the following problem $\OTTcost(P,Q, c_M)$.
    Then there exists a unique $S^*$, which is a diffeomorphism, that attains the optimal cost.
\end{theorem}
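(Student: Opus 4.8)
The plan is to reduce the $M$-weighted problem $\OTTcost(P,Q,c_M)$ to the standard quadratic problem via a linear change of coordinates, invoke the Brenier/McCann structure theorem for existence, uniqueness, and the gradient-of-a-convex-potential form of the optimal map, upgrade that map to a diffeomorphism through Caffarelli's interior regularity theory for the Monge--Amp\`ere equation, and then transport the conclusion back through the change of coordinates.

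First I would set $B \triangleq M^{1/2} \succ 0$, so $M = B^T B$ and $c_M(u,v) = \norm{B(u-v)}^2 = c_I(Bu,Bv)$. Writing $\tilde{P} \triangleq B\# P$ and $\tilde{Q} \triangleq B\# Q$, the linear change of variables gives $\tilde{P},\tilde{Q} \ll \mu$ with densities $\tilde{p}(\tilde{u}) = p(B^{-1}\tilde{u})/\abs{\det B}$ and $\tilde{q}(\tilde{u}) = q(B^{-1}\tilde{u})/\abs{\det B}$, supported on the open, bounded, convex set $B\cW$, hence with finite second moments. The conjugation $S \mapsto \tilde{S} \triangleq B \circ S \circ B^{-1}$ is a bijection between Borel maps pushing $P$ to $Q$ and Borel maps pushing $\tilde{P}$ to $\tilde{Q}$, and $\int_{\cW} c_M(u,S(u))\,P(du) = \int_{B\cW} c_I(\tilde{u},\tilde{S}(\tilde{u}))\,\tilde{P}(d\tilde{u})$. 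Hence $\OTTcost(P,Q,c_M)$ and $\OTTcost(\tilde{P},\tilde{Q},c_I)$ share the same value and $S^*$ minimizes the former iff $\tilde{S}^* \triangleq B\circ S^* \circ B^{-1}$ minimizes the latter; existence and uniqueness of $S^*$ then follow from Theorem~\ref{thm:monge:unique} applied to $(\tilde{P},\tilde{Q})$.

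Next I would record that, by the Brenier/McCann structure theorem (see \cite{villani2009optimal}), $\tilde{S}^* = \nabla\psi$ for a convex potential $\psi$ on $B\cW$, with the optimal map from $\tilde{Q}$ to $\tilde{P}$ given by $\nabla\psi^*$ ($\psi^*$ the Legendre transform). To obtain a genuine diffeomorphism in the sense of Definition~\ref{defn:diffeomorphism} I would invoke Caffarelli's regularity theory \cite{villani2009optimal,villani2003topics}: since $B\cW$ is bounded and convex and $\tilde{p},\tilde{q}$ are smooth and bounded away from $0$ and $\infty$ on $B\cW$, the Brenier potential solving $\det D^2\psi = \tilde{p}/(\tilde{q}\circ\nabla\psi)$ is smooth and uniformly convex, so $\tilde{S}^* = \nabla\psi : B\cW \to B\cW$ is a diffeomorphism with inverse $\nabla\psi^*$. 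Transporting back, $S^* = B^{-1}\circ\tilde{S}^*\circ B$ is a composition of diffeomorphisms of open subsets of $\reals^d$ (the linear isomorphism $B$ being a smooth diffeomorphism), hence a diffeomorphism of $\cW$ onto itself; it is the unique minimizer of $\OTTcost(P,Q,c_M)$ by the correspondence of the previous paragraph.

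The main obstacle is the diffeomorphism claim. Existence and uniqueness are essentially immediate once the linear reduction is in place, but a.e.\ differentiability of the gradient of a convex function is far too weak to conclude that $\tilde{S}^*$ and its inverse are everywhere differentiable, so the Monge--Amp\`ere regularity machinery is genuinely needed. That machinery requires (i) convexity of the target domain --- available from Assumption~\ref{assumption:cW} --- and (ii) sufficient regularity and nondegeneracy of the densities; accordingly, in the write-up I would make explicit (or absorb into the standing hypotheses on $P_X^*$ and the channel law) that $p$ and $q$ are, say, H\"older continuous and bounded away from zero on $\cW$, which is harmless in the intended application where $P = P_W$ is uniform and $Q = P_{\tW_1\mid Y_1=y}$ is the induced posterior.
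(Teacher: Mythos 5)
Your proposal follows essentially the same route as the paper: the paper's own proof is exactly the linear change of coordinates (it writes $M = U\Lambda U^T$, sets $B=(U\sqrt{\Lambda})^T$, observes $\|u-v\|_M^2=\|Bu-Bv\|^2$, conjugates candidate maps by $B$, and reduces $\OTTcost(P,Q,c_M)$ to $\OTTcost(\tilde{P},\tilde{Q},c_I)$), after which it simply cites the standard Brenier theorem for the $M=I$ case and asserts that the resulting optimal map is a diffeomorphism. Where you go beyond the paper is in the last step: you correctly note that Brenier's theorem by itself only yields an a.e.-defined gradient of a convex potential, and you bring in Caffarelli's Monge--Amp\`ere regularity theory, together with explicit hypotheses (densities bounded away from $0$ and $\infty$, some H\"older continuity, convex supports) to justify the diffeomorphism claim; the paper leaves this implicit in its citation and does not state such density conditions in the theorem. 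Your caveat is well taken --- under the stated hypothesis $P,Q\ll\mu$ alone the diffeomorphism conclusion genuinely requires the extra regularity you flag --- so your write-up is, if anything, a more careful version of the same argument rather than a different one.
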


\begin{proof}
Note that when $M=I_{d \times d}$, the identity matrix, this is the standard Brenier
theorem, whose proof can be found in  \cite[Theorem 2.1.5]{villani2003topics}.
Now we generalize for arbitrary $M \succ 0$, for which we can express $M = U
\Lambda U^T$ where $UU^T=I$. Let $B=(U\sqrt{\Lambda})^T$, $\tu=B u, \tv=B v$,
and $u = B^{-1} \tu, v = B^{-1} \tv$, then
\begin{eqnarray}
\|u-v\|_M^2 &=&(u-v)^T M(u-v) \nonumber \\
       &=&(Bu - Bv)^T(Bu - Bv) \nonumber \\
       &=&\|Bu - Bv\|_I^2 \nonumber
\end{eqnarray}
\newcommand{\tilP}{\tilde{P}}
By defining $S_B(u)= Bu$, then note that:
\begin{itemize}
\item if $U \sim P_U$, then $BU \sim S_B \# P_U$
\item if $S(U) \sim P_V$, then $B S(U) \sim S_B \# P_V$.
\end{itemize}
By defining $\tilP_U \triangleq S_B \# P_U$ and $\tilP_V \triangleq S_B \# P_V$, 
\beqa 
&& \inf_{S \# P_U=P_V} \int_{\cU } \|u-S(u)\|_M^2 P_U(du) \\
&=& \inf_{S \# P_U=P_V} \int_{\cU } \|Bu-BS(u)\|^2 P_U(du) \\
&=& \inf_{\tS \# \tilP_U=\tilP_V} \int_{\cU } \|\tu-\tS(\tu)\|^2 \tilP_U(d \tu)
\eeqa
and so we have that the unique optimal solution to $\OTTcost(P,Q, c_M)$ is  the unique optimal solution to
$\OTTcost(\tilde{P},\tilde{Q}, c_I)$, and moreover is a diffeomorphism. 
\end{proof}

Note that Assumption \ref{assump:finiteCapacity} implies that $P_{W|Y=y} \ll \bnu$.  Since $P_W \ll \bnu$, we have the following corollary:
\begin{corollary}\label{corollary:unique:S}
Under  Assumption \ref{assump:finiteCapacity},  $\OTTcost(P_{W|Y=y},P_W, c_M)$ has a unique optimal solution $S_y$ that is invertible for  any $M \succ 0$, and $P_Y$-almost all $y$.
\end{corollary}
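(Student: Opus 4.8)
The plan is to recognize that this corollary is a direct instantiation of the Generalized Brenier's Theorem stated immediately above, applied to the pair of measures $(P_{W|Y=y}, P_W)$, and that all the work lies in checking the two absolute-continuity hypotheses for $P_Y$-almost every $y$.

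First I would establish that $P_W \ll \mu$. By Definition~\ref{defn:Pw:bnu}, $P_W = \bnu$ is the uniform distribution on $\cW$, and since $\cW$ is open and bounded by Assumption~\ref{assumption:cW}, $\bnu$ admits a constant density with respect to the Lebesgue measure $\mu$; hence $P_W \ll \mu$. Next I would invoke Lemma~\ref{lemma:finiteMutInfo}: under Assumption~\ref{assump:finiteCapacity} we have $P_{W|Y=y} \ll P_W$ for $P_Y$-almost all $y$. Chaining the two, $P_{W|Y=y} \ll P_W \ll \mu$, so $P_{W|Y=y} \ll \mu$ for the same $P_Y$-co-null set of $y$; in particular both $P_{W|Y=y}$ and $P_W$ lie in $\probSimplex{\cW}$ and possess densities with respect to $\mu$.

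Then, for any such $y$ and any fixed $M \succ 0$, I would apply the Generalized Brenier's Theorem with $P \triangleq P_{W|Y=y}$ and $Q \triangleq P_W$: it yields a unique minimizer $S_y$ of $\OTTcost(P_{W|Y=y}, P_W, c_M)$, and it asserts that $S_y$ is a diffeomorphism. Since a diffeomorphism is in particular invertible (Definition~\ref{defn:diffeomorphism}), this is exactly the claimed unique invertible optimal $S_y$, valid for every $M \succ 0$ and $P_Y$-almost every $y$.

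There is essentially no obstacle beyond bookkeeping. The one point requiring care is propagating the ``$P_Y$-almost all $y$'' qualifier, which enters only through Lemma~\ref{lemma:finiteMutInfo} and is inherited verbatim by the conclusion; the boundedness of $\cW$ in Assumption~\ref{assumption:cW} silently takes care of any finite-moment or cost-integrability conditions implicit in the cited Brenier-type statement, since $c_M$ is bounded on $\cW \times \cW$.
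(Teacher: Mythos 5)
Your proposal is correct and follows essentially the same route as the paper: the paper also deduces $P_{W|Y=y} \ll \bnu = P_W \ll \mu$ from Lemma~\ref{lemma:finiteMutInfo} and Assumption~\ref{assumption:cW}, and then invokes the Generalized Brenier's Theorem to obtain the unique optimal $S_y$, whose diffeomorphism property gives invertibility. Your extra remarks on propagating the $P_Y$-almost-everywhere qualifier and on boundedness of $\cW$ covering the moment conditions are sound bookkeeping that the paper leaves implicit.
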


As we will see in Sections~\ref{subsec:symbren},\ref{subsec:knhd}, for $d > 1$, different positive
definite matrices $M$ give rise to different maps $S_y$ solving $\OTTcost(P_{W|Y=y},P_W, c_M)$ that satisfy
    \eqref{eqn:defn:PM-compatible:a}.

\section{Reliability of PM}\label{sec:reliability}

The PM scheme in Definition~\ref{defn:PM-compatible} maximizes the mutual
information $I(W;Y^n)=nC$ \cite{Shayevitz2011pm,ma2011generalizing}. However, it
need not be reliable in general, as shown in \cite[Example 11]{Shayevitz2011pm}.
`Fixed-point-free' necessary conditions are given in
\cite[Lemma 21]{Shayevitz2011pm}.  Our objective here is to develop general {\bf
necessary and sufficient} conditions for PM reliability. 

We next provide an applied probability result that will be used throughout.  
\begin{lemma} \label{lemma:equivalentConditions:MarkovStationaryErgodic}
  Consider a measurable space $(\cV, \cF_\cV)$ and a time-homogeneous Markov process $(V_n)_{n
  \geq 1}$,  where $\bnu$ defined on $(\cV, \cF_\cV)$ is the invariant
  distribution and  $(V_n)_{n \geq 1}$ is stationary on $(\Omega,\cF,\bP)$.  
  Then (iii) $\Rightarrow$ (ii) $\Leftrightarrow$ (i) $\Rightarrow$ (iv) where:
  \begin{itemize}
    \item [i)] $(V_n)_{n \geq 1}$ is $\bP$-ergodic.
    \item [ii)] For any $D \in \cF$:
    \begin{equation}
      \lim_{n \to \infty} \sup_{E_n \in \sigma(V_{n:\infty})} \abs{\bP(E_n \cap D)- \bP(E_n)\bP(D)} = 0. \label{eqn:lemma:equivalentConditions:MarkovStationaryErgodic:a}
    \end{equation}
    \item [iii)] $\norm{\P_\nu(V_{n} \in \cdot )-\bnu} \to 0$ for any $\nu \ll \bnu$,  where $\P_\nu$ is the distribution on the Markov process  $(V_n)_{n \geq 1}$ for which
  $V_1 \sim \nu$, e.g. $\frac{d \P_\nu}{d \bP} = \frac{d \nu}{ d\bnu}(V_1)$. 
  \item [iv)]
  For any separable $\cH \subset \cF$:
\beqa
\sup_{E \in \tail_\cV} \abs{\bprob{E|\cH} - \bprob{E}}=0 \quad \bP-a.s. \label{eqn:lemma:equivalentConditions:MarkovStationaryErgodic:conditional}
\eeqa
  \end{itemize}
\end{lemma}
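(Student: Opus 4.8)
The plan is to establish the four implications (iii)$\Rightarrow$(i), (i)$\Leftrightarrow$(ii), and (i)$\Rightarrow$(iv) in turn, drawing on the classical equivalences for stationary Markov chains. First, for (iii)$\Rightarrow$(i): I would invoke Blackwell \& Freeman's characterization \cite[Thm 2]{blackwell1964tail} that the tail $\sigma$-algebra of a stationary Markov chain is $\bP$-trivial if and only if the chain is (in their language) \emph{0-1 mixing}, which in turn can be tied to the convergence of $\norm{\P_\nu(V_n \in \cdot) - \bnu}$ to $0$ for all $\nu \ll \bnu$; more elementarily, condition (iii) is exactly a statement of ergodicity of the shift, since asymptotic loss of memory of the initial distribution forces triviality of $\tail_\cV$. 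Alternatively, one recognizes (iii) as the statement that the chain has a trivial tail via the standard fact that setwise convergence of $\P_\nu(V_n\in\cdot)$ to the invariant law, uniformly over $\nu \ll \bnu$, is equivalent to tail triviality for stationary chains (see also the filter-stability literature \cite{chigansky2009intrinsic}).

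For the equivalence (i)$\Leftrightarrow$(ii): the forward direction is the well-known ``mixing from ergodicity'' phenomenon for stationary sequences—if $\tail_\cV$ is $\bP$-trivial then for any $D \in \cF$ and $E_n \in \sigma(V_{n:\infty})$, one approximates $D$ by a finite-dimensional cylinder $\sigma(V_{1:k})$ event and uses that $\sigma(V_{1:k})$ and $\sigma(V_{n:\infty})$ become asymptotically independent as $n\to\infty$; the residual is controlled by triviality of the intersection $\bigcap_n \sigma(V_{n:\infty})$ together with a martingale convergence / Lévy 0-1 law argument showing $\bE[\mathbf{1}_{E_n}\mid \sigma(V_{1:k})] \to \bP(E_n)$ uniformly. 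The converse (ii)$\Rightarrow$(i) is immediate: take $D \in \tail_\cV \subseteq \sigma(V_{n:\infty})$ for every $n$, and set $E_n = D$; then \eqref{eqn:lemma:equivalentConditions:MarkovStationaryErgodic:a} gives $\abs{\bP(D) - \bP(D)^2} = 0$, forcing $\bP(D) \in \{0,1\}$.

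For (i)$\Rightarrow$(iv): here I would apply Lévy's martingale convergence theorem to the bounded martingale $\bprob{E \mid \cH_m}$ along an increasing filtration exhausting a separable $\cH$, but the cleaner route is to note that for $E \in \tail_\cV$, triviality gives $\bprob{E} \in \{0,1\}$, so $\bprob{E} \in \{0,1\}$ and $\mathbf{1}_E = \bprob{E}$ $\bP$-a.s.; hence $\bprob{E \mid \cH} = \bprob{E}$ $\bP$-a.s.\ for \emph{every} sub-$\sigma$-algebra $\cH$. The separability of $\cH$ is what allows one to move the $\bP$-a.s.\ null set outside the supremum over $E \in \tail_\cV$: choose a countable generating family and use that both sides agree on a countable dense (in the appropriate metric) subfamily, then extend by a monotone-class / continuity argument.

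The main obstacle I anticipate is the uniformity in (ii)—that is, showing the supremum over $E_n \in \sigma(V_{n:\infty})$ (not just a fixed sequence) tends to $0$. Pointwise asymptotic independence is standard, but the uniform version requires care: the right tool is to bound $\sup_{E_n}\abs{\bP(E_n\cap D) - \bP(E_n)\bP(D)}$ by $\bE\bigl[\,\abs{\bprob{D \mid \sigma(V_{n:\infty})} - \bP(D)}\,\bigr]$ (the total-variation-type bound, since the sup over $E_n \in \sigma(V_{n:\infty})$ of $\abs{\bE[\mathbf{1}_{E_n}(\mathbf{1}_D - \bP(D))]}$ equals $\tfrac12\bE\abs{\bprob{D\mid\sigma(V_{n:\infty})} - \bP(D)}$ up to a constant), and then invoking reverse martingale convergence: $\bprob{D \mid \sigma(V_{n:\infty})} \to \bprob{D \mid \bigcap_n \sigma(V_{n:\infty})} = \bprob{D \mid \tail_\cV} = \bP(D)$ in $L^1(\bP)$, where the last equality uses (i). This reduces the whole uniformity question to $L^1$ convergence of a reverse martingale, which is automatic, plus tail triviality, which is the hypothesis.
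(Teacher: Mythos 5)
The genuine gap is your step (iii) $\Rightarrow$ (i). You assert, as a ``standard fact,'' that total-variation forgetting of the initial law for every $\nu \ll \bnu$ forces triviality of $\tail_\cV$, or propose to extract it from Blackwell \& Freeman; but the Blackwell--Freeman theorem is the (i) $\Leftrightarrow$ (ii) equivalence, and the ``standard fact'' you invoke is precisely the implication to be proved, so no argument is actually supplied at this point. The paper's proof is short and concrete, and you would need something of this kind: for $I$ with $\bnu(I)>0$, take $\nu$ with $\frac{d\nu}{d\bnu} = \indicatorvbl{\cdot \in I}/\bnu(I)$; the change-of-measure identity gives $\bP(V_n \in J, V_1 \in I) = \bnu(I)\,\P_\nu(V_n \in J) \to \bnu(I)\bnu(J)$ by (iii) (and trivially when $\bnu(I)=0$), i.e.\ the stationary chain is mixing in the ergodic-theoretic sense, and mixing implies $\bP$-ergodicity (tail triviality) by Lemma~\ref{corollary:mixingImpliesErgodicity}, cited from \cite{bradley2005basic}.

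The rest of your proposal is correct but takes a genuinely different route from the paper. For (i) $\Rightarrow$ (ii) you give a self-contained argument: bound $\sup_{E_n \in \sigma(V_{n:\infty})} \abs{\bP(E_n \cap D) - \bP(E_n)\bP(D)}$ by $\bE\brackets{\abs{\bprob{D | \sigma(V_{n:\infty})} - \bprob{D}}}$ and let backward-martingale convergence along the decreasing $\sigma$-algebras $\sigma(V_{n:\infty})$, together with tail triviality, drive this to zero; the paper instead simply cites \cite{blackwell1964tail} for the whole equivalence, so your version is more elementary and self-contained (your earlier sketch via cylinder approximation assumes the asymptotic independence it is meant to prove, but the reverse-martingale bound you settle on is the one that works). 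Your (ii) $\Rightarrow$ (i) via $E_n = D \in \tail_\cV$ is the standard trivial direction. For (iv), you argue directly from triviality: $\indicatorvbl{E} = \bprob{E}$ $\bP$-a.s., hence $\bprob{E|\cH} = \bprob{E}$ a.s.\ for every sub-$\sigma$-algebra, which is fine per fixed $E$. The paper instead derives (iv) from (ii): it writes the separable $\cH$ as generated by an increasing sequence of finite partitions $\cH^{(m)}$, bounds $\abs{\bprob{A|\cH^{(m)}} - \bprob{A}}$ for $A \in \tail_\cV$ by a quantity independent of $A$ that vanishes by (ii), and then lets $m \to \infty$ by martingale convergence. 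One caveat on your version: separability of $\cH$ is what enables the paper's partition construction and its bound uniform in $A$; it does not, as you suggest, supply a countable family indexing the supremum over $E \in \tail_\cV$ (tail $\sigma$-algebras are in general not countably generated), so your ``countable dense subfamily'' remark does not by itself unify the null sets over $E$ --- though the paper passes over the same point, and both arguments are fine once suitable versions of the conditional probabilities are fixed.
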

\begin{proof}
  Consider any $I \in \cF_{\cV}$, for which $\bnu(I) > 0$ and define $\nu$ to have atoms $(I,I^c)$, e.g.
  \begin{equation}
    \frac{d\nu}{d\bnu}(V_1) = \frac{\indicatorvbl{V_1 \in I}}{\bnu(I)}. \label{eqn:proof:lemma:equivalentConditions:MarkovStationaryErgodic:a}
  \end{equation}
  Since for any $\P \ll \bP$, $\E[g(W)]=\bE\brackets{g(W) \frac{d\P}{d\bP}}$
  we have that
  \begin{align}
    \E_\nu \brackets{\indicatorvbl{\tW_n \in J}} &= \bE \brackets{\indicatorvbl{\tW_n \in J} \frac{\indicatorvbl{W \in I}}{\bnu(I)}} \nonumber \\
    \Leftrightarrow \bP \parenth{\tW_n \in J, \tW_1 \in I} &= \bnu(I) \P_\nu\parenth{\tW_n \in J}  \label{eqn:proof:lemma:equivalentConditions:MarkovStationaryErgodic:b} 
  \end{align}

  We now show that (iii) $\Rightarrow$ (i).  If (iii) holds, then from \eqref{eqn:proof:lemma:equivalentConditions:MarkovStationaryErgodic:b} we have that for any $I \in \cF_\cW$ such that $\bnu(I)>0$:
  \begin{align}
  \lim_{n \to \infty} \bP \parenth{\tW_n \in J, \tW_1 \in I} &= \lim_{n \to \infty}  \bnu(I) \P_\nu\parenth{\tW_n \in J} \nonumber \\
                                                             &= \bnu(I)\bnu(J). \label{eqn:proof:lemma:equivalentConditions:MarkovStationaryErgodic:fff}
  \end{align}
where \eqref{eqn:proof:lemma:equivalentConditions:MarkovStationaryErgodic:fff} follows from \eqref{eqn:proof:lemma:equivalentConditions:MarkovStationaryErgodic:b} and condition (iii).  If $\bnu(I) = 0$, then clearly \eqref{eqn:proof:lemma:equivalentConditions:MarkovStationaryErgodic:fff} holds.  Thus under $\bP$, the strictly stationary random process $(V_n)_{n\geq 1}$ is mixing (in the ergodic theory sense \cite[Sec 2.5]{bradley2005basic}).  
From Lemma~\ref{corollary:mixingImpliesErgodicity}, $(V_n)_{n\geq 1}$ is $\bP$-ergodic and so (i) follows.

(i) $\Leftrightarrow$ (ii) follows from \cite[Thm 2]{blackwell1964tail}. 

We now show that (ii) $\Rightarrow$ (iv).  We borrow ideas from
\cite[Sec 2.2]{van2007lecture}.
\newcommand{\cHm}{\cH^{(m)}}
Let $A \in \tail_\cV$. 
Since $\cH$ is separable, note that
\[\cH = \sigma(\cHm: m \geq 1)\]
where each $\cHm$ is finite.   Assuming $\cHm$ is generated by the partition $\braces{B_k: k=1,\ldots,K_m}$, we have that \cite[Defn 2.1.2]{van2007lecture}:
\beqas
\bprob{A|\cHm} &=&  \sum_{k=1}^{K_m} \bprob{A|B_k} \indicatorvbl{B_k} \\
               &=&  \sum_{k=1}^{K_m} \frac{\bprob{A \cap B_k}}{\bprob{B_k}} \indicatorvbl{B_k}.
\eeqas
Thus we have that
\beqa
&& \abs{\bprob{A|\cHm}-\bprob{A}} \nonumber \\
  &=& \abs{\sum_{k=1}^{K_m} \frac{\bprob{A \cap B_k} - \bprob{A}\bprob{B_k}}{\prob{B_k}} \indicatorvbl{B_k}} \nonumber \\
  &\leq& \sum_{k=1}^{K_m} \frac{\abs{\bprob{A \cap B_k} - \bprob{A}\bprob{B_k}}}{\prob{B_k}} \\
  &\leq& \!\!\sum_{k=1}^{K_m}\! \sup_{E_n \in \sigma(V_{n:\infty})} \!\!\!\frac{\abs{\bprob{E_n \cap B_k} - \bprob{E_n}\bprob{B_k}}}{\prob{B_k}} 
  \label{eqn:proof:lemma:equivalentConditions:MarkovStationaryErgodic:conditional:a} \\
&=& 0 \label{eqn:proof:lemma:equivalentConditions:MarkovStationaryErgodic:conditional:b} 
\eeqa
where \eqref{eqn:proof:lemma:equivalentConditions:MarkovStationaryErgodic:conditional:a}  follows since $A \in \tail_\cV$ and $\tail_\cV \subset \sigma(V_{n:\infty})$ for each $n \geq 1$; and \eqref{eqn:proof:lemma:equivalentConditions:MarkovStationaryErgodic:conditional:b}  follows from \eqref{eqn:lemma:equivalentConditions:MarkovStationaryErgodic:a} and that $\abs{\bprob{A|\cHm}-\bprob{A}}$ has no dependence upon $n$. Since $\bprob{A|\cH}=\lim_{m \to \infty} \bprob{A|\cHm}$, it follows that
\beqa
\abs{\bprob{A|\cH}-\bprob{A}} &=& \lim_{m \to \infty} \abs{\bprob{A|\cHm}-\bprob{A}} \nonumber \\
                              &=& 0. \quad \bP-a.s. \label{eqn:lemma:equivalentConditions:MarkovStationaryErgodic:c}
\eeqa
Since \eqref{eqn:lemma:equivalentConditions:MarkovStationaryErgodic:c}  holds for any $A \in \tail_V$, it follows that
\[\sup_{E \in \tail_\cV} \abs{\bprob{E|\cH} - \bprob{E}}=0.\quad \bP-a.s. \]
\end{proof}

\begin{theorem} \label{thm:necessarySufficientConditionsPM:reliability}
The PM scheme is reliable if and only if $(\tW_n)_{n \geq 1}$ is $\bP$-ergodic.
\end{theorem}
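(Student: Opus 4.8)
The plan is to route both implications through the expected total variation $\E_\nu\brackets{\norm{\pi_n-\bpi_n}}$ between the two smoothers of $W$ under priors $\nu\ll\bnu$, which puts the statement squarely in the language of nonlinear filter stability. The first step is a reduction: \emph{for a PM scheme, reliability is equivalent to $\lim_{n\to\infty}\E_\nu\brackets{\norm{\pi_n-\bpi_n}}=0$ for every $\nu\ll\bnu$.} Since a PM scheme is invertible (Definition~\ref{defn:PM-compatible}) we have $\norm{\pi_n-\bpi_n}=\norm{\post_n-\bpost_n}$, so the direction ``reliable $\Rightarrow$ limit $=0$'' is Corollary~\ref{corollary:VanHandel} (which combines this with Theorem~\ref{thm:reliability:totalvariationPosterior}); the converse is read off the \emph{proof} of Theorem~\ref{thm:reliability:totalvariationPosterior}, which in fact shows $\lim_n\E_\nu\brackets{\norm{\post_n-\bpost_n}}=\bE\brackets{\abs{\tfrac{d\nu}{d\bnu}(W)-\bE\brackets{\tfrac{d\nu}{d\bnu}(W)\,\big|\,\cFYinf}}}$, so its vanishing for all admissible $\nu$ forces $W$ to be $\cFYinf$-measurable, i.e.\ reliability. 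Throughout I would use, for a PM scheme, that $\bpi_n=\bnu$ (Lemma~\ref{lemma:PMscheme:independence}) and that $(\tW_n)_{n\ge1}$ is a $\bP$-stationary Markov chain with invariant law $\bnu$ while $(Y_n)_{n\ge1}$ is $\bP$-i.i.d.\ (Theorem~\ref{theorem:PM:properties}).

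Necessity (reliable $\Rightarrow$ ergodic) is then immediate: from the reduction $\E_\nu\brackets{\norm{\pi_n-\bnu}}\to0$, and since $\P_\nu(\tW_{n+1}\in\cdot)=\E_\nu\brackets{\pi_n(\cdot)}$ by the tower property, convexity of $\norm{\cdot}$ gives $\norm{\P_\nu(\tW_{n+1}\in\cdot)-\bnu}\le\E_\nu\brackets{\norm{\pi_n-\bnu}}\to0$; as $n\to\infty$ this is precisely condition (iii) of Lemma~\ref{lemma:equivalentConditions:MarkovStationaryErgodic} for the chain $(\tW_n)_{n\ge1}$, whence (iii)$\Rightarrow$(i) shows that $(\tW_n)_{n\ge1}$ is $\bP$-ergodic.

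For sufficiency (ergodic $\Rightarrow$ reliable) I would, by the reduction, aim to show $\E_\nu\brackets{\norm{\pi_n-\bnu}}\to0$ for every $\nu\ll\bnu$. First bring in the statistical-independence property: since $\tW_{n+1}\indep\cFYn$ with $\tW_{n+1}\sim\bnu$ and $S_y\#P_{\tW_1|Y_1=y}=\bnu$ invertibly, the law of $(\tW_{n+1},\tW_n)$ equals the law of $(Z,S_Y^{-1}(Z))$ with $Z\sim\bnu$ and $Y\sim P_Y$ independent, so the $\bnu$-reversal of the transition kernel of $(\tW_n)_{n\ge1}$ is the kernel of the i.i.d.-driven iterated function system $Z_{n+1}=S_{Y_n}^{-1}(Z_n)$ (whose invariant law is $\bnu$ by \eqref{eqn:defn:PM-compatible:a}); since a stationary Markov chain and its $\bnu$-reversal have the same invariant $\sigma$-field, $(Z_n)_{n\ge1}$ is also $\bP$-ergodic. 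By PM invertibility the backward iterates of this system are $\zeta_n := S_{Y_1}^{-1}\circ\cdots\circ S_{Y_n}^{-1}=S_{Y_{1:n}}^{-1}$, and one has $\tW_1=\zeta_n(\tW_{n+1})$, $\bpost_n=\zeta_n\#\bnu$, and — by Bayes, using $\tfrac{d\P_\nu}{d\bP}=\tfrac{d\nu}{d\bnu}(\tW_1)$ and $\tW_{n+1}\indep\cFYn$ under $\bP$ — the identity $\E_\nu\brackets{\norm{\pi_n-\bnu}}=\bE\brackets{\int_{\cW}\abs{\tfrac{d\nu}{d\bnu}(\zeta_n(u))-\bar h_n}\,\bnu(du)}$, where $\bar h_n:=\bE\brackets{\tfrac{d\nu}{d\bnu}(W)\,\big|\,\cFYn}$. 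Thus reliability is equivalent to the \emph{synchronization} of the random maps $\zeta_n$, i.e.\ to the above expectation tending to $0$.

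The main obstacle is this last step: deducing synchronization of $(\zeta_n)_{n\ge1}$ from ergodicity of $(\tW_n)_{n\ge1}$. Here I would combine the martingale convergence of the likelihood ratios $(\bar h_n)_{n\ge1}$ with Blackwell and Freedman's characterization of tail triviality for stationary Markov chains (Lemma~\ref{lemma:equivalentConditions:MarkovStationaryErgodic}, parts (i)$\Leftrightarrow$(ii) and (i)$\Rightarrow$(iv), applied with the separable $\cH=\cFYinf$, which is countably generated because $\cY$ is Euclidean), turning ergodicity into asymptotic independence of $\sigma(\tW_{n:\infty})$ from $\cFYinf$ that is strong enough to kill the residual spread of $\zeta_n\#\bnu$ about $\delta_{\tW_1}$. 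The delicate point — and the reason bare ergodicity suffices here, in contrast to general hidden Markov models (cf.\ Kunita's problem and the intrinsic methods of Van Handel, \cite{chigansky2009intrinsic}) — is precisely the independence $\tW_{n+1}\indep\cFYn$: it is what upgrades ``the reversed chain forgets its initial distribution'' to the pointwise-in-the-starting-state synchronization of the $\zeta_n$ needed to pin $\bpost_\infty$ to a point mass at $W$, hence reliability.
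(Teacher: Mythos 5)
Your necessity direction is essentially the paper's own argument (tower property, Lemma~\ref{lemma:PMscheme:independence}, invertibility, Jensen, Theorem~\ref{thm:reliability:totalvariationPosterior}, then (iii)$\Rightarrow$(i) of Lemma~\ref{lemma:equivalentConditions:MarkovStationaryErgodic}), and your preliminary reduction is sound: reliability is indeed equivalent to $\lim_n \E_\nu\brackets{\norm{\pi_n-\bpi_n}}=0$ for all $\nu\ll\bnu$, since the proof of Theorem~\ref{thm:reliability:totalvariationPosterior} identifies the limit as $\bE\brackets{\abs{\tfrac{d\nu}{d\bnu}(W)-\bE\brackets{\tfrac{d\nu}{d\bnu}(W)\mid\cFYinf}}}$ and taking $\tfrac{d\nu}{d\bnu}=\indicatorvbl{I}/\bnu(I)$ recovers $\cFYinf$-measurability of $W$. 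The identities $\bpost_n=\zeta_n\#\bnu$ and $\E_\nu\brackets{\norm{\pi_n-\bnu}}=\bE\brackets{\int\abs{\tfrac{d\nu}{d\bnu}(\zeta_n(u))-\bar h_n}\bnu(du)}$ are also correct.

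The gap is in the sufficiency direction, and it sits exactly where the theorem's real content lies. Having reformulated the goal as ``synchronization of the backward maps $\zeta_n$,'' you then assert that Blackwell--Freedman tail triviality, i.e.\ (i)$\Rightarrow$(iv) of Lemma~\ref{lemma:equivalentConditions:MarkovStationaryErgodic} with $\cH=\cFYinf$, is ``strong enough to kill the residual spread of $\zeta_n\#\bnu$ about $\delta_{\tW_1}$'' --- but no argument is given, and none of your intermediate observations (ergodicity of the reversed chain $Z_{n+1}=S_{Y_n}^{-1}(Z_n)$, martingale convergence of $\bar h_n$) produces one. The obstruction is that $\{W\in I\}=\{\tW_1\in I\}$ is \emph{not} a tail event of $(\tW_n)_{n\geq 1}$, so condition (iv) says nothing about $\bprob{W\in I\mid\cFYinf}$ until you manufacture a tail event that coincides with $\{W\in I\}$ on the relevant fiber of the observation sequence. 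This is precisely what the paper's proof constructs: for each frozen $y\in\cY^\infty$ it forms $A_y=\bigcap_{n\geq1}\bigcup_{m\geq n}\braces{\tW_{m+1}\in S_{y_{1:m}}(I)}\in\tail_{\tW}$, uses ergodicity to get $\bP(A_y)\in\{0,1\}$, observes that $\indicatorvbl{A_y}=\indicatorvbl{W\in I}$ on $\{Y=y\}$ by invertibility, and only then applies (i)$\Rightarrow$(iv) with $\cH=\cFYinf$ to conclude $\bprob{W\in I\mid\cFYinf}=\bprob{A_Y}\in\{0,1\}$, hence $=\indicatorvbl{W\in I}$ a.s. Note also that you cannot hope to reverse your necessity argument: ergodicity does not yield condition (iii) of the lemma (total-variation convergence of the marginals), which is why some such pullback-to-the-tail device is unavoidable. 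As written, your proposal sets up the correct framework but leaves the decisive implication (ergodic $\Rightarrow$ reliable) unproved.
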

\begin{proof}
If the PM scheme is reliable, then for any $\nu \ll \bnu$,
\begin{align}
&\norm{ \P_\nu(\tW_{n+1} \in \cdot ) - \bnu} \nonumber \\
&= \norm{ \E_\nu \brackets{\P_\nu(\tW_{n+1} \in \cdot |\cFYn) - \bnu}}
                                        \label{eqn:thm:necessarySufficientConditionsPM:reliability:aa}  \\
                                        &= \norm{ \E_\nu \brackets{\P_\nu(\tW_{n+1} \in \cdot |\cFYn) - \bP(\tW_{n+1} \in \cdot | \cFYn)}}
                                        \label{eqn:thm:necessarySufficientConditionsPM:reliability:bb} \\
                                        &= \norm{ \E_\nu \brackets{\P_\nu(\tW_{1} \in \cdot |\cFYn) - \bP(\tW_{1} \in \cdot | \cFYn)} }
                                        \label{eqn:thm:necessarySufficientConditionsPM:reliability:cc} \\
                                        &= \norm{ \E_\nu \brackets{\post_n - \bpost_n} } \nonumber \\
                                       &\leq \E_\nu \brackets{\norm{\post_n - \bpost_n}} \label{eqn:thm:necessarySufficientConditionsPM:reliability:dd} \\
                                        &\to 0 \label{eqn:thm:necessarySufficientConditionsPM:reliability:ff}
\end{align}
where \eqref{eqn:thm:necessarySufficientConditionsPM:reliability:aa} follows from the tower law of expectation;
\eqref{eqn:thm:necessarySufficientConditionsPM:reliability:bb} follows from Lemma~\ref{lemma:PMscheme:independence};
\eqref{eqn:thm:necessarySufficientConditionsPM:reliability:cc} follows from the invertibility of the $S_{Y^n}$ map under the PM from Definition~\ref{defn:PM-compatible};
\eqref{eqn:thm:necessarySufficientConditionsPM:reliability:dd} follows from Jensen's inequality;
and \eqref{eqn:thm:necessarySufficientConditionsPM:reliability:ff} follows from the assumption that the PM scheme is reliable and Theorem~\ref{thm:reliability:totalvariationPosterior}.   
Thus condition (iii) in Lemma~\ref{lemma:equivalentConditions:MarkovStationaryErgodic} holds.
From (iii) $\Rightarrow$ (i) in Lemma~\ref{lemma:equivalentConditions:MarkovStationaryErgodic}, it follows that $(\tW_n)_{n \geq 1}$  is $\bP$-ergodic.

\newcommand{\ysequence}{y_{1:\infty}}
\newcommand{\Vny}{B_n}
\newcommand{\CLone}{\mathcal{CL}_1}
\newcommand{\Any}{L_n}
\newcommand{\Bny}{M_n}

\newcommand{\Yn}{Y_{1:n}}
\newcommand{\yn}{y_{1:n}}

\newcommand{\bAn}{\overline{E}_{n+1}}
\newcommand{\uAn}{\underline{E}_{n+1}}
\newcommand{\Ayn}{E_{n+1}^{\yn}}
\newcommand{\AYn}{E_{n+1}^{\Yn}}
\newcommand{\bMn}{\overline{M}_n}
\newcommand{\uMn}{\underline{M}_n}

 \newcommand{\cGnB}{\cG^{(n)}_B}

\newcommand{\tailW}{\tail_{\tW}}
Now suppose that $(\tW_n)_{n \geq 1}$ is $\bP$-ergodic.  
\newcommand{\cFn}{\cF^{(n)}}

Let $y=(y_1,y_2,\ldots) \in \cY^\infty$ and define
\[A_y \triangleq \bigcap_{n \geq 1} \bigcup_{m \geq n} \braces{\tW_{m+1} \in S_{y_{1:m}}(I)}.\]
Note that clearly $A_y \in \tailW$ and so from ergodicity,
\beqa
 \bP(A_y) \in \braces{0,1} \quad \forall \; y \in \cY^\infty. \label{eqn:ProbAy:zeroOne}
\eeqa
Moreover, note that
\beqas
A_y &=& \bigcap_{n \geq 1} \bigcup_{m \geq n} \braces{\tW_{m+1} \in S_{y_{1:m}}(I)} \\
    &=& \bigcap_{n \geq 1} \bigcup_{m \geq n} \braces{W \in T_{Y_{1:m}} \circ S_{y_{1:m}}(I)} 
\eeqas
Thus if $Y(\omega)=y$, then we have that
\begin{align}
\indicatorvbl{A_y}(\omega) &= \indicatorvbl{\bigcap_{n \geq 1} \bigcup_{m \geq n} \braces{W \in T_{Y_{1:m}(\omega)} \circ S_{y_{1:m}}(I)}}(\omega) \nonumber \\
&= \indicatorvbl{\bigcap_{n \geq 1} \bigcup_{m \geq n} \braces{W \in T_{y_{1:m}} \circ S_{y_{1:m}}(I)}}(\omega) \label{eqn:proof:reliability:Sep2018:a} \\
&=\indicatorvbl{\bigcap_{n \geq 1} \bigcup_{m \geq n} \braces{W \in I}}(\omega) \nonumber\\
&= \indicatorvbl{W \in I}(\omega) \label{eqn:proof:reliability:Sep2018:b}
\end{align}
where in \eqref{eqn:proof:reliability:Sep2018:a}, we exploit the fact that $Y_{1:m}(\omega)=y_{1:m}$.
As such, if $Y(\omega)=y$ then
\beqas
\bprob{A_y|\cFYinf}(\omega) &=& \bE\brackets{\indicatorvbl{A_y}(\omega) | \cFYinf}(\omega) \\
                            &=& \bE\brackets{\indicatorvbl{W \in I}(\omega) | \cFYinf}(\omega) \\
                            &=& \bprob{W \in I |\cFYinf}(\omega).
\eeqas

Note that $\bprob{W \in I |\cFYinf}$ is any $\cFYinf$-measurable function such that 
for any $B \in \cB(\cY^\infty)$,
\[\bprob{W \in I,Y \in B} = \int_{\omega \in Y^{-1}(B)} \bprob{W \in I |\cFYinf} \bP(d\omega). \]
Define $A_Y$ as $A_y$ where $Y(\omega)=y$. As such,
\beqas
\bP(A_Y|\cFYinf)&=&\bP(A_y|\cFYinf) \quad \text{ when } Y(\omega)=y \\
\bP(A_Y)&=&\bP(A_y) \quad \text{ when } Y(\omega)=y.
\eeqas
Thus $\bP(A_Y)$ and $\bP(A_Y|\cFYinf)$ are both $\cFYinf$-measurable random variables.  So for any $B \in \cB(\cY^\infty)$,
\beqa
&& \abs{\bprob{W \in I,Y \in B}-\int_{Y^{-1}(B)}\bprob{A_Y}\bP(d\omega)}  \nonumber \\
&=& \abs{\int_{Y^{-1}(B)} \bprob{W \in I|\cFYinf}\bP(d\omega)- \bprob{A_Y}\bP(d\omega) } \nonumber\\
&\leq& \int_{Y^{-1}(B)} \abs{\bprob{W \in I|\cFYinf}- \bprob{A_Y}}\bP(d\omega) \nonumber\\
&=& \int_{Y^{-1}(B)} \abs{\bprob{A_Y|\cFYinf}- \bprob{A_Y}}\bP(d\omega) \label{eqn:proof:reliability:Nov2018:a}\\
&\leq& \int_{Y^{-1}(B)} \sup_{A \in \tailW} \abs{\bprob{A|\cFYinf}- \bprob{A}} \bP(d\omega) \nonumber \\
&=& 0.  \label{eqn:proof:reliability:Nov2018:b}
\eeqa
where \eqref{eqn:proof:reliability:Nov2018:a} follows from \eqref{eqn:proof:reliability:Sep2018:b} and
 \eqref{eqn:proof:reliability:Nov2018:b} follows 
from (i) $\Rightarrow$ (iv) in  Lemma~\ref{lemma:equivalentConditions:MarkovStationaryErgodic}. 

It thus follows from \eqref{eqn:proof:reliability:Nov2018:b} and \eqref{eqn:ProbAy:zeroOne} that
\beqa
\bprob{W \in I | \cFYinf}=\bprob{A_Y} \in \braces{0,1} \quad \bP-a.s.
\eeqa
and thus $\bprob{W \in I | \cFYinf}=\indicatorvbl{W \in I}\;\;\bP$-a.s.

Any measurable function $g:\cW \to \reals$ is the pointwise limit of a sequence simple functions:
$g = \lim_{m \to \infty} g_m$.  Any such simple $g_m:\cW \to \reals$ can be expressed in terms of a partition $(B_1,\ldots,I_{N_m})$ of $\cW$ as  $g_k(w)=\sum_{i=1}^{N_m} a_i \indicatorvbl{w \in I_i}$.  Clearly, we have that $\bE[g_k(W) | \cFYn] \to g_k(W)$ for any $k$.  Thus  $\bE[g(W)|\cFYn] \to g(W)$ for any measurable $g$ and so from Definition~\ref{defn:reliability} the PM scheme is reliable.

\end{proof}
 
\section{Achieving any Rate $R < C$} \label{sec:achievingCapacity}
\newcommand{\hWMAP}{\hat{W}_{\text{MAP}}}
\newcommand{\fnmax}{f_n^{\max}}
\newcommand{\AnW}{A_n^W}
\newcommand{\BnW}{B_n^W}
\newcommand{\InW}{I_n^W}
\renewcommand{\tB}{B}
\newcommand{\liminfn}{\underset{n \to \infty}{\liminf \;}}
\newcommand{\RNposteriorsrst}[3]{\frac{d \bpost_{#1|#3}}{d \bpost_{#2|#3}}}
\newcommand{\RNposteriors}[2] {\frac{d \bpost_{#1}}{d \bpost_{#2}}}
\newcommand{\RNconsecposteriors}[1] {\frac{d \bpost_{#1}}{d \bpost_{#1-1}}}
\newcommand{\RNconsecposteriorsrst}[2] {\RNposteriorsrst{#1}{#1-1}{#2}}
\newcommand{\cFYi}{\cF^Y_{1:i}}
\newcommand{\kldistposteriorsconsec}[1]{\kldist{\bpost_{#1}}{\bpost_{#1-1}}}
\newcommand{\kldistposteriorsconsecrst}[2]{\kldist{\bpost_{#1|#2}}{\bpost_{#1-1|#2}}}
\newcommand{\limn}{\underset{n \to \infty}{\lim}}
\newcommand{\bEba}[1]{\bE\brackets{\abs{#1}}}
\newcommand{\ti}{\tilde{i}}

In this section, we will establish several definitions and lemmas aimed at constructing the main theorem of this section: Theorem~\ref{theorem:equivalenceThreeConditionsPM:reliable:ergodic:capacity}, establishing the equivalence between ergodicity, reliability, and achievability.

\begin{lemma} \label{lemma:MarginalErgodicityImpliesJointErgodicity}
 The random process $(\tW_n)_{n \geq 1}$ is $\bP$-ergodic if and only if $(\tW_n,Y_n)_{n \geq 1}$ is $\bP$-ergodic.
\end{lemma}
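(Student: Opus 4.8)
The plan is to prove the two implications separately. The direction ``$(\tW_n,Y_n)_{n\geq 1}$ is $\bP$-ergodic $\Rightarrow$ $(\tW_n)_{n\geq 1}$ is $\bP$-ergodic'' is immediate: since each $\tW_m$ is a coordinate of the joint process at time $m$, we have $\sigma(\tW_{m:\infty})\subseteq\sigma(\tW_{m:\infty},Y_{m:\infty})$ for every $m$, hence $\tail_{\tW}\subseteq\tail_{\tW,Y}$, and triviality of the latter forces triviality of the former (recall $(\tW_n)_{n\geq 1}$ is a stationary Markov chain by Theorem~\ref{theorem:PM:properties}). The real content is the converse, which I would prove by a change-of-filtration/martingale argument that reduces an arbitrary tail event of the joint chain to a tail event of the $\tW$-chain.

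So assume $(\tW_n)_{n\geq 1}$ is $\bP$-ergodic and fix $A\in\tail_{\tW,Y}$, so that $A\in\sigma(\tW_{n+1:\infty},Y_{n+1:\infty})$ for every $n\geq 0$. The key structural observation is that in any PM scheme the update $\tW_{n+1}=S_{Y_n}(\tW_n)$ is a \emph{deterministic} function of $(\tW_n,Y_n)$, and, via the iterated-function-system representation $Y_k=g(\tW_k,N_k)$, $\tW_{k+1}=S_{Y_k}(\tW_k)$ with i.i.d.\ channel innovations $(N_k)$ used in the proof of Lemma~\ref{lemma:MarkovChain:PM:invertibility}, the whole future $(\tW_{n+1:\infty},Y_{n+1:\infty})$ is a measurable function of $\tW_{n+1}$ and $(N_k)_{k>n}$, while $(N_k)_{k>n}$ is independent of $\sigma(\tW_{1:n},Y_{1:n})\vee\sigma(\tW_{n+1})$. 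Consequently, given $\tW_{n+1}$, the future $\sigma$-algebra $\sigma(\tW_{n+1:\infty},Y_{n+1:\infty})$ is conditionally independent of the past $\sigma(\tW_{1:n},Y_{1:n})$; and since $\tW_{n+1}$ is itself $\sigma(\tW_{1:n},Y_{1:n})$-measurable, the standard conditional-independence lemma yields, for every $n$,
\[
  \bprob{A \,\big|\, \sigma(\tW_{1:n},Y_{1:n})} \;=\; \bprob{A \,\big|\, \sigma(\tW_{n+1})} \qquad \bP\text{-a.s.}
\]

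To finish, note that the left-hand side is a uniformly bounded martingale in $n$ for the increasing filtration $\big(\sigma(\tW_{1:n},Y_{1:n})\big)_n$, and $A\in\sigma(\tW_{1:\infty},Y_{1:\infty})$, so by L\'evy's upward theorem it converges $\bP$-a.s.\ to $\mathbf{1}_A$. Hence $g_{n+1}(\tW_{n+1})\to\mathbf{1}_A$ $\bP$-a.s., where $g_{n+1}$ is a Borel version of $w\mapsto\bprob{A\mid\tW_{n+1}=w}$. Fixing $m\geq 1$, every term $g_{n+1}(\tW_{n+1})$ with $n+1\geq m$ is $\sigma(\tW_{m:\infty})$-measurable, so $\limsup_n g_{n+1}(\tW_{n+1})$, which equals $\mathbf{1}_A$ $\bP$-a.s., is $\sigma(\tW_{m:\infty})$-measurable; thus $A\in\sigma(\tW_{m:\infty})$ modulo $\bP$-null sets. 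Since $m$ is arbitrary and the $\sigma(\tW_{m:\infty})$ are decreasing, $A\in\tail_{\tW}$ modulo $\bP$-null sets, and $\bP$-ergodicity of $(\tW_n)_{n\geq 1}$ gives $\bP(A)\in\{0,1\}$. As $A\in\tail_{\tW,Y}$ was arbitrary, $\tail_{\tW,Y}$ is $\bP$-trivial, i.e.\ $(\tW_n,Y_n)_{n\geq 1}$ is $\bP$-ergodic.

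The step I expect to require the most care is the displayed conditional-independence identity: making precise that, given $\tW_{n+1}$, the future of the joint chain is conditionally independent of its past. This hinges on the deterministic nature of $S_{Y_n}$ together with the memoryless, time-invariant channel; I would derive it directly from the iterated-function-system picture already exploited in the proof of Lemma~\ref{lemma:MarkovChain:PM:invertibility} (so that the ``extra'' randomness producing the future beyond $\tW_{n+1}$ consists only of fresh innovations $(N_k)_{k>n}$), rather than from the bare Markov property of $(\tW_n,Y_n)_{n\geq 1}$, which would not by itself give the needed decoupling through the single variable $\tW_{n+1}$.
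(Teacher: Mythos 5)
Your proof is correct, but it takes a genuinely different route from the paper's. The paper proves the nontrivial direction by establishing mixing of the joint chain: using the memoryless, time-invariant channel it factors the two-time law as $\bprob{\tW_n \in B, Y_n \in D, \tW_1 \in A, Y_1 \in C}=\bprob{\tW_n\in B,\tW_1\in A}\,\eta(C|A)\,\eta(D|B)$, invokes the Blackwell--Freedman equivalence (i)$\Leftrightarrow$(ii) of Lemma~\ref{lemma:equivalentConditions:MarkovStationaryErgodic} to get $\bprob{\tW_n\in B,\tW_1\in A}\to\bnu(A)\bnu(B)$, and then concludes via Lemma~\ref{corollary:mixingImpliesErgodicity} (mixing implies ergodicity). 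You instead work directly with an arbitrary tail event of the joint chain: the noise-outsourcing/IFS representation already used in the proof of Lemma~\ref{lemma:MarkovChain:PM:invertibility} gives the conditional-independence identity $\bprob{A\mid\sigma(\tW_{1:n},Y_{1:n})}=\bprob{A\mid\sigma(\tW_{n+1})}$ for $A\in\tail_{\tW,Y}$, and L\'evy's upward theorem plus triviality of $\tail_{\tW}$ forces $\bP(A)\in\{0,1\}$. Your route is more elementary and self-contained (only martingale convergence, no appeal to the Blackwell--Freedman criterion or to the mixing-implies-ergodicity lemma), and you correctly flag that the decoupling through the single variable $\tW_{n+1}$ requires the deterministic update together with the memoryless channel, not merely the Markov property of $(\tW_n,Y_n)_{n\geq 1}$. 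The paper's route, in exchange, delivers the stronger two-point mixing property of the joint chain and recycles machinery already established elsewhere in the paper.

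One phrasing refinement: the inference ``$A$ belongs to $\sigma(\tW_{m:\infty})$ modulo $\bP$-null sets for every $m$, hence to $\tail_{\tW}$ modulo $\bP$-null sets'' is not a valid general principle, since completion does not commute with decreasing intersections of $\sigma$-algebras. Your argument nevertheless goes through because you have a single witness: $Z\triangleq\limsup_{n} g_{n+1}(\tW_{n+1})$ is genuinely $\sigma(\tW_{m:\infty})$-measurable for every $m$ (a $\limsup$ is unaffected by finitely many terms), hence genuinely $\tail_{\tW}$-measurable, and $Z=\mathbf{1}_A$ $\bP$-a.s.; triviality of $\tail_{\tW}$ then makes $Z$ almost surely constant, which yields $\bP(A)\in\{0,1\}$ directly. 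Stated that way, the final step is airtight.
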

\begin{proof}
It trivially  follows that $(\tW_n,Y_n)_{n \geq 1}$ being $\bP$-ergodic implies that $(\tW_n)_{n \geq 1}$ is $\bP$-ergodic.

Suppose $(\tW_n)_{n \geq 1}$ is $\bP$-ergodic.
Due to the stationarity of the Markov process $(\tW_n,Y_n)_{n \geq 1}$, for any $H \in \cF_\cY$ and any $G \in \cF_\cW$, we can define
\beqa
\eta(H|G) &\triangleq& \bprob{Y_n \in H | \tW_n \in G} \nonumber \\
 &\equiv& P(Y_1 \in H | X_1 \in \phi(G)) \label{eqn:proof:lemma:MarginalErgodicityImpliesJointErgodicity:aa}.
\eeqa
Then for $C,D \in \cF_\cY$ and $A,B \in \cF_\cW$, we have that
\beqa
&& \bprob{\tW_n \in B, Y_n \in D, \tW_1 \in A, Y_1 \in C} \nonumber \\
&=& \bprob{\tW_n \in B, \tW_1 \in A} \nonumber \\
&& \quad \times \; \bprob{Y_1 \in C | \tW_n \in B, \tW_1 \in A} \nonumber \\
&& \quad \times \; \bprob{Y_n \in D | Y_1 \in C, \tW_n \in B, \tW_1 \in A} \nonumber \\
&=&  \bprob{\tW_n \in B, \tW_1 \in A} \eta (C| A) \eta(D|B)  \label{eqn:proof:lemma:MarginalErgodicityImpliesJointErgodicity:a} \\
&\to& \bnu(A)\bnu(B)	\eta(C|A) \eta(D|B) \label{eqn:proof:lemma:MarginalErgodicityImpliesJointErgodicity:b}
\eeqa
where \eqref{eqn:proof:lemma:MarginalErgodicityImpliesJointErgodicity:a} follows from the memoryless, time-invariant nature of the channel
\eqref{eqn:defn:DMC} as well as \eqref{eqn:proof:lemma:MarginalErgodicityImpliesJointErgodicity:aa};
\eqref{eqn:proof:lemma:MarginalErgodicityImpliesJointErgodicity:b} follows from the
assumption that $(\tW_n)_{n \geq 1}$ is $\bP$-ergodic and Lemma~\ref{lemma:equivalentConditions:MarkovStationaryErgodic} (ii).
We note that $\bnu(A) \eta(C|A) = \bprob{W_1 \in A, Y_1 \in C}$ is the stationary distribution of the Markov chain $(\tW_n,Y_n)_{n\geq 1}$.  Thus
we have from \eqref{eqn:proof:lemma:MarginalErgodicityImpliesJointErgodicity:b} that the stationary Markov process $(\tW_n,Y_n)_{n\geq 1}$ is $\bP$-mixing (in the ergodic theory sense \cite[Sec 2.5]{bradley2005basic}); from Lemma~\ref{corollary:mixingImpliesErgodicity}, it is $\bP$-ergodic.
\end{proof}

\newcommand{\tprob}[1]{\tilde{\mathbb{P}} \parenth{#1}}
\newcommand{\tilP}{\tilde{P}}
\newcommand{\tE}{\tilde{\mathbb{E}}}
\newcommand{\dPdtP}{\frac{dP}{d \tilP }}

\newcommand{\Iueps}{I^{u,\epsilon}}
\newcommand{\IWeps}{I^{W,\epsilon}}
\newcommand{\Anueps}{A^{u,\epsilon}_n}
\newcommand{\AnWeps}{A^{W,\epsilon}_n}
\newcommand{\AnWOeps}{A^{\tilde{W}_{n+1},\epsilon}_n}
\newcommand{\epsin}{\epsilon_{i,n}}
\newcommand{\cFYip}{\cF^Y_{1:i-1}}
\newcommand{\KLi}{\kldist{\bpi_i}{\bpi_{i-1}}}
\newcommand{\KLin}{\kldist{\bpi_{i|n}}{\bpi_{i-1|n}}}
\newcommand{\SYinv}[1]{S^{-1}_{Y_{#1}}}
\newcommand{\SY}[1]{S_{Y_{#1}}}
\newcommand{\LLRpiIntrvl}[2]{\log \frac{\bpi_{#1}\parenth{#2}} { \bpi_{#1-1}\parenth{#2}}}
\newcommand{\LLRpiIntrvlAni} {\LLRpiIntrvl{i}{A_n}}
\newcommand{\LLRpiIntrvlBi} {\LLRpiIntrvl{i}{B}}
\newcommand{\tAin}{\tA_{i,n}}

We now establish a limiting property of the normalized information density:
\begin{lemma}
If $(\tW_{n})_{n\geq 1}$ is $\bP$-ergodic, then $\limn i_n(W) = C$.
\label{lemma:inf_density:capacity}
\end{lemma}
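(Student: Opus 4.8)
The plan is to combine three ingredients already established: Lemma~\ref{lemma:MarginalErgodicityImpliesJointErgodicity}, which promotes $\bP$-ergodicity of $(\tW_n)_{n\geq 1}$ to $\bP$-ergodicity of the joint chain $(\tW_n,Y_n)_{n\geq 1}$; property~4 of Theorem~\ref{theorem:PM:properties}, which identifies the $\bP$-a.s.\ limit $\limn i_n(W) = \bE[i(\tW,Y)\mid\tail_{\tW,Y}]$; and property~5 of Theorem~\ref{theorem:PM:properties} evaluated at $n=1$, which supplies the value of the relevant unconditional expectation.

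First I would apply Lemma~\ref{lemma:MarginalErgodicityImpliesJointErgodicity}: since $(\tW_n)_{n\geq 1}$ is $\bP$-ergodic, the stationary Markov chain $(\tW_n,Y_n)_{n\geq 1}$ is $\bP$-ergodic, i.e.\ its tail $\sigma$-algebra $\tail_{\tW,Y}$ is $\bP$-trivial. Any $\tail_{\tW,Y}$-measurable random variable is therefore $\bP$-a.s.\ constant, so the conditional expectation in property~4 of Theorem~\ref{theorem:PM:properties} degenerates to the unconditional one: $\bE[i(\tW,Y)\mid\tail_{\tW,Y}] = \bE[i(\tW,Y)]$, $\bP$-a.s. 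Combining with property~4 itself yields $\limn i_n(W) = \bE[i(\tW,Y)]$, $\bP$-a.s.

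It remains to compute $\bE[i(\tW,Y)]$. The pair $(\tW,Y)$ here carries the stationary one-dimensional marginal of $(\tW_n,Y_n)_{n\geq 1}$, namely $P_{\tW_1,Y_1}=P_{W,Y}$, and $i(\cdot,\cdot)$ is integrable against it by Lemma~\ref{lemma:finiteMutInfo}, so $\bE[i(\tW,Y)] = I(\tW_1;Y_1) = I(W;Y_1)$. By property~5 of Theorem~\ref{theorem:PM:properties} with $n=1$, $I(W;Y_1) = C$; equivalently, $X_1 = \phi(W)\sim P_X^*$ and $W\to X_1\to Y_1$ is a Markov chain, so the data-processing inequality applied in both directions (using that $X_1$ is a deterministic function of $W$) gives $I(W;Y_1) = I(P_X^*,\pygx) = C$. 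Chaining these identities gives $\limn i_n(W) = C$, $\bP$-a.s., as claimed.

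I do not anticipate a real obstacle; the only point requiring care is that the limit in property~4 of Theorem~\ref{theorem:PM:properties} is already phrased in terms of the \emph{tail} $\sigma$-algebra $\tail_{\tW,Y}$ (via the equivalence of the invariant and tail $\sigma$-algebras of a stationary Markov chain), which is precisely the object that the paper's definition of $\bP$-ergodicity renders trivial — so no separate step relating ergodicity to the invariant $\sigma$-algebra is needed.
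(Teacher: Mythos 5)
Your proposal is correct and follows essentially the same route as the paper's proof: invoke Lemma~\ref{lemma:MarginalErgodicityImpliesJointErgodicity} to get $\bP$-ergodicity of $(\tW_n,Y_n)_{n\geq 1}$, use property~4 of Theorem~\ref{theorem:PM:properties} together with triviality of $\tail_{\tW,Y}$ to reduce the a.s.\ limit to the unconditional expectation $\bE[i(\tW_1,Y_1)]=I(\tW_1;Y_1)$, and identify this with $C$ via Theorem~\ref{theorem:PM:properties}. No substantive differences.
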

\begin{proof}
Suppose $(\tW_n)_{n \geq 1}$ is $\bP$-ergodic.  Then from Lemma~\ref{lemma:MarginalErgodicityImpliesJointErgodicity}, we have that
$(\tW_n,Y_n)_{n \geq 1}$ is $\bP$-ergodic.   Thus
\begin{eqnarray}
\lim_{n \to \infty} i_n(W) &=& \bE[ i(\tW_1,Y_1) | \tail_{\tW,Y}] \quad \bP-a.s. \label{eqn:proof:thm:achievingCapacityErgodicity:a} \\
                           &=& \bE[ i(\tW_1,Y_1)] \label{eqn:proof:thm:achievingCapacityErgodicity:b}  \\
                           &=& I(\tW_1;Y_1) \label{eqn:proof:thm:achievingCapacityErgodicity:c} \\
                           &=& C.\label{eqn:proof:thm:achievingCapacityErgodicity:d}
\end{eqnarray}
where \eqref{eqn:proof:thm:achievingCapacityErgodicity:a} follows from Theorem~\ref{theorem:PM:properties};
\eqref{eqn:proof:thm:achievingCapacityErgodicity:b} follows from the assumption that $(\tW_n)_{n \geq 1}$ is $\bP$-ergodic and
Lemma~\ref{lemma:MarginalErgodicityImpliesJointErgodicity}, implying that $(\tW_n,Y_n)_{n \geq 1}$ is $\bP$-ergodic;
\eqref{eqn:proof:thm:achievingCapacityErgodicity:c} follows from \eqref{eq:defn:mutualinformation:conditionalKLdivergence:a};
and \eqref{eqn:proof:thm:achievingCapacityErgodicity:d} follows from Theorem~\ref{theorem:PM:properties}.
\end{proof}

\subsection{Pulled Back Intervals}
\newcommand{\Deps}[2]{D^\epsilon_{#1:#2}}
\newcommand{\Dn}{D^{\epsilon}_{n+1}}
\newcommand{\Dij}{\Deps{i}{j}}

\newcommand{\An}{A^\epsilon_n}

We now define sets $\Dn$ as follows:
\begin{definition}\label{defn:Dn}
  Define $\tau^\epsilon: \cW \to \cF_{\cW}$ such that $\tau^\epsilon(u)$ is a convex open set, $\tau^\epsilon(u) \ni u$ for any $u \in \cW$, and $\bnu(\tau^\epsilon(u))=1-\epsilon$.
  Define $\Dn \in \cF_\cW$ as 
  \beqa 
  \Dn \triangleq \tau^\epsilon(\tW_{n+1}) \label{eqn:defn:Dn}
  \eeqa
\end{definition}
 For example, if $\cW=(0,1)$ then define
  \[\tau^\epsilon(u) = \begin{cases} (0,1-\epsilon) & u \in (0,1-\epsilon) \\ (\epsilon,1) & u \in (\epsilon,1) \end{cases}.\]
  The extensions to arbitrary dimension follow naturally.

  We now define the pulled-back intervals $\An$ which will serve as the open, convex sets in $\cW$ underlying the notion of achieving a rate in Definition~\ref{defn:achievability:rateR}.
\begin{definition}[Pulled-Back Intervals]
  \label{defintion:pulled_back_intervals}
Define the pulled back intervals as 
\beqa
\An \equiv A_{1,n} \triangleq S_{Y_{1:n}}^{-1}(\Dn) \label{eqn:defn:An}
\eeqa
\end{definition}

Figure~\ref{fig:AchieveRate} gives an example of a pulled back interval.  Notice that in this example, $\Dn$ has length $1-\epsilon$ and contains $\tW_{n+1}$, while $\An$ has length nearly zero and contains $W \equiv \tW_1$.
\begin{figure}[htbp]
  \centering
  \begin{overpic}[width=0.7\columnwidth]{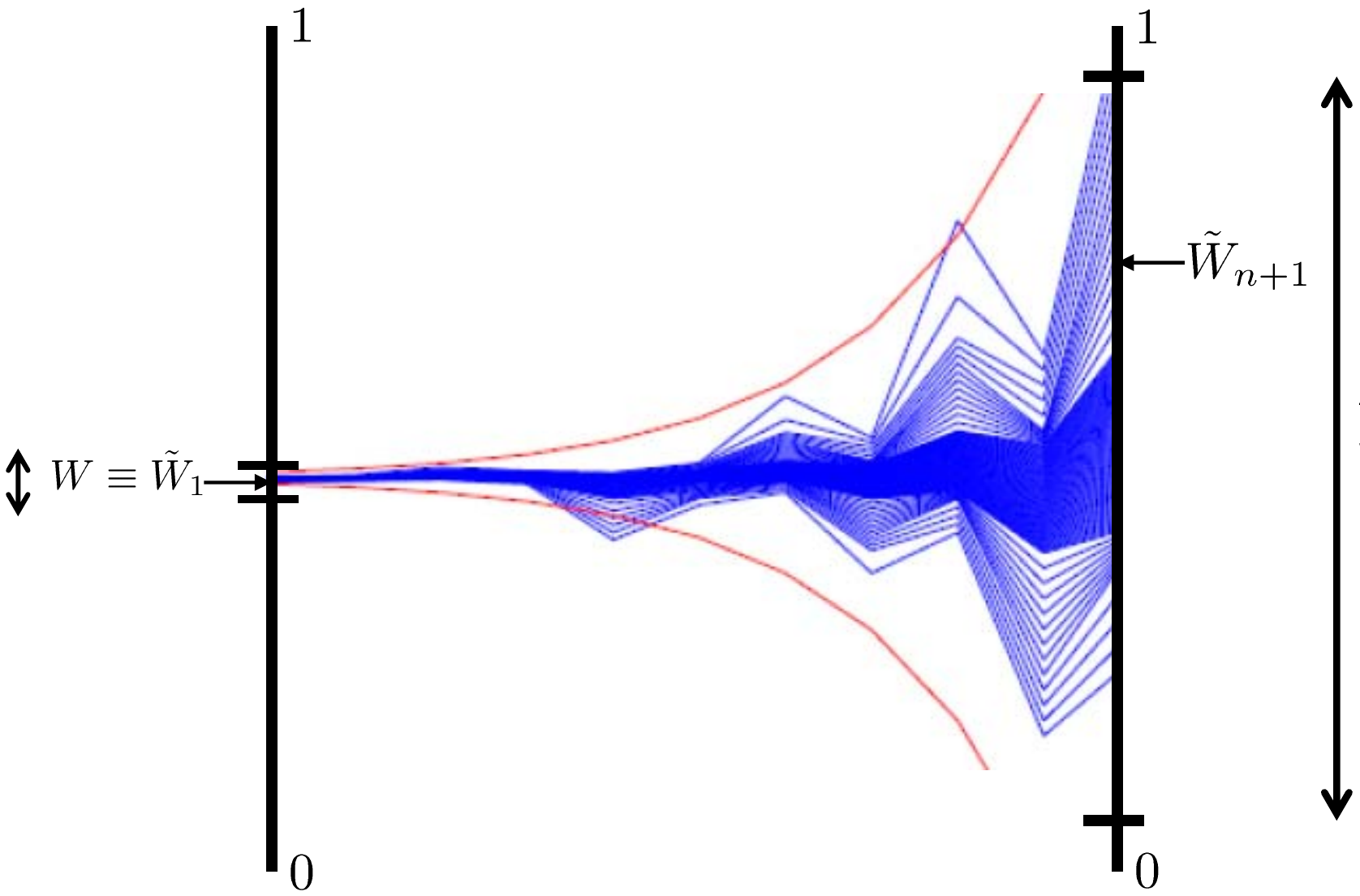}
   \put(-9,29){\small $\An$}
   \put(84,6){\tiny $\frac{1}{2}\epsilon$}
  \put(83,57){\tiny $1-\frac{1}{2}\epsilon$}
  \put(100,28.5){\small $\Dn$}
  \end{overpic}
  \caption{Mapping from the boundary points $\parenth{\frac{1}{2}\epsilon,1-\frac{1}{2}\epsilon}$ of $\Dn$ on the $\tW_{n+1}$ axis to the boundary points $\parenth{T_n(\frac{1}{2}\epsilon),T_n(1-\frac{1}{2}\epsilon)}$ of $\An$ on the $\tW_1$ axis. }
  \label{fig:AchieveRate}
\end{figure}

\newcommand{\Ain}{A_{i,n}^\epsilon}

\begin{lemma}\label{lemma:rate_sequence}
Defining 
\beqa
R_n &\triangleq& \frac{1}{n} \log \frac{\bpost_n(\An)}{\bpost_0(\An)} \label{eqn:lemma:rate_sequence:a}\\
\Ain &\triangleq& S_{Y_{1:i-1}}(\An)  \label{eqn:lemma:rate_sequence:b} \\
Z_{i,n} &\triangleq& \log \frac{P_{W|Y}(\Ain|Y_i)}{P_W(\Ain)} \label{eqn:lemma:rate_sequence:c}
\eeqa
the rate sequence $R_n$ can be equivalently represented as
\begin{align}
  R_n = \frac{1}{n} \sum_{i=1}^n Z_{i,n} \label{eqn:defn:Rn}.
\end{align}
\end{lemma}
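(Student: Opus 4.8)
The plan is to establish a telescoping identity for $\log\bigl(\bpost_n(\An)/\bpost_0(\An)\bigr)$ and then identify each increment with $Z_{i,n}$ using the independence and stationarity structure of PM schemes from Lemma~\ref{lemma:PMscheme:independence} and Theorem~\ref{theorem:PM:properties}; this is the set-valued analogue of the computation in the proof of Theorem~\ref{theorem:PM:properties}(4), with evaluations of the posterior on the pulled-back intervals replacing Radon--Nikodym derivatives at the point $\tW_n$. From the definition \eqref{eqn:lemma:rate_sequence:a},
\[
 n R_n = \log\bpost_n(\An)-\log\bpost_0(\An)=\sum_{i=1}^{n}\bigl(\log\bpost_i(\An)-\log\bpost_{i-1}(\An)\bigr),
\]
so it suffices to show $\log\bpost_i(\An)-\log\bpost_{i-1}(\An)=Z_{i,n}$ for each $i$.

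The key step is to transport the posterior on $W\equiv\tW_1$ evaluated at $\An$ into a posterior on $\tW_i$ evaluated at $\Ain$. Recall from \eqref{eqn:WinvertibleWithWn} that $\tW_i=S_{Y_{1:i-1}}(W)$ with $S_{Y_{1:i-1}}$ invertible and $\cF^Y_{1:i-1}$-measurable, and from \eqref{eqn:lemma:rate_sequence:b} that $\Ain=S_{Y_{1:i-1}}(\An)$, i.e.\ $\An=S_{Y_{1:i-1}}^{-1}(\Ain)$; pushing the measure $\bpost_i$ forward by $S_{Y_{1:i-1}}$ therefore gives a version of the conditional law of $\tW_i$ given $\cFYi$, so $\bpost_i(\An)=\bprob{\tW_i\in\Ain\mid\cFYi}$, and likewise $\bpost_{i-1}(\An)=\bprob{\tW_i\in\Ain\mid\cF^Y_{1:i-1}}$. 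Now I invoke Lemma~\ref{lemma:PMscheme:independence}: equation \eqref{eqn:lemma:PMscheme:independence:b} (at index $i$) gives $\bprob{\tW_i\in\cdot\mid\cFYi}=\bprob{\tW_i\in\cdot\mid Y_i}$, and equation \eqref{eqn:lemma:PMscheme:independence} (at index $i$, i.e.\ $n+1=i$) gives $\bprob{\tW_i\in\cdot\mid\cF^Y_{1:i-1}}=P_W$, both as $\bP$-a.s.\ equalities of regular conditional distributions. Since $(\tW_n,Y_n)_{n\ge1}$ is $\bP$-stationary (Theorem~\ref{theorem:PM:properties}), $P_{\tW_i\mid Y_i=y}=P_{\tW_1\mid Y_1=y}=P_{W|Y}(\cdot\mid y)$, so
\[
 \bpost_i(\An)=P_{W|Y}(\Ain\mid Y_i),\qquad \bpost_{i-1}(\An)=P_W(\Ain),
\]
and hence $\log\bpost_i(\An)-\log\bpost_{i-1}(\An)=\log\frac{P_{W|Y}(\Ain\mid Y_i)}{P_W(\Ain)}=Z_{i,n}$ by \eqref{eqn:lemma:rate_sequence:c}. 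Summing over $i$, and using $A_{1,n}=\An$ (an empty composition of maps), yields \eqref{eqn:defn:Rn}.

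The one point requiring genuine care — and the only real obstacle — is that $\An$ and the $\Ain$ are \emph{random} Borel sets: unwinding the definitions, $\Ain=S_{Y_{i:n}}^{-1}\bigl(\tau^\epsilon(\tW_{n+1})\bigr)$, which depends on $\tW_{n+1}$ (hence on $W$) and on $Y_{i},\dots,Y_{n}$, and the true $W$ always lies in $\An$. Thus $\bpost_i(\An)$ must be read as the value of the realized random measure $\bpost_i(\omega)$ at the realized set $\An(\omega)$, and the change of variable and the two applications of Lemma~\ref{lemma:PMscheme:independence} above must be carried out $\omega$ by $\omega$. This is legitimate because $(\omega,w)\mapsto\indicatorvbl{w\in\An(\omega)}$ is jointly measurable (the $\An$, $\Ain$ are built from the measurable maps $S_{Y_j}^{-1}$ and $\tau^\epsilon$), so each $\bpost_i(\An)$ is an honest random variable, and since the kernels in question agree $\bP$-a.s.\ as measures on $\cW$, the a.s.\ identities transfer to their values at the sets $\An(\omega),\Ain(\omega)$. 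Finally one notes the logarithms are well-defined $\bP$-a.s.: $\bpost_{i-1}(\An)=P_W(\Ain)=\bnu(\Ain)>0$, because $\Ain$ is a nonempty open subset of $\cW$ — the image of the nonempty open convex set $\tau^\epsilon(\tW_{n+1})$ under the diffeomorphism $S_{Y_i}^{-1}\circ\cdots\circ S_{Y_n}^{-1}$ (Corollary~\ref{corollary:unique:S}) — and $\bnu$ has full support on the open set $\cW$; so every denominator is strictly positive, the identity holds in $[-\infty,+\infty)$, and $Z_{i,n}=-\infty$ exactly when $P_{W|Y}(\Ain\mid Y_i)=0$.
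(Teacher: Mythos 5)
Your proof is correct and follows essentially the same route as the paper: telescope $\log\bigl(\bpost_n(\An)/\bpost_0(\An)\bigr)$, transfer each increment from $\tW_1$ at $\An$ to $\tW_i$ at $\Ain$ via the invertible, $\cF^Y_{1:i-1}$-measurable map $S_{Y_{1:i-1}}$ from \eqref{eqn:WinvertibleWithWn}, and then apply both parts of Lemma~\ref{lemma:PMscheme:independence} together with stationarity from Theorem~\ref{theorem:PM:properties}. Your added care about the sets $\An,\Ain$ being random and about strict positivity of the denominators is a welcome refinement of, not a departure from, the paper's argument.
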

\begin{proof}
Note that clearly from \eqref{eqn:lemma:rate_sequence:a}:
\beqa
R_n = \frac{1}{n} \sum_{i=1}^n \log \frac{\bpost_i(\An)}{\bpost_{i-1}(\An)}.
\eeqa
For any $A \in \cF_{\cW}$, and any $1 \leq i \leq n$,  note that:
\beqa
 \frac{\bpost_i(A)}{\bpost_{i-1}(A)} &=& \frac{\bprob{\tW_1 \in A | \cFYi}}{\bprob{\tW_1 \in A | \cFYip}} \nonumber \\
  &=& \frac{\bprob{\tW_i \in S_{Y_{1:i-1}} (A) | \cFYi}}{\bprob{\tW_i \in S_{Y_{1:i-1}} (A) | \cFYip}} \label{eqn:proof:rate_sequence:a} \\ 
  &=&  \frac{P_{\tW_i|Y_i}\parenth{S_{Y_{1:i-1}} (A)|Y_i}}{P_{\tW_i}\parenth{S_{Y_{1:i-1}} (A)}}   \label{eqn:proof:rate_sequence:b} \\
  &=& \frac{P_{\tW|Y}\parenth{S_{Y_{1:i-1}} (A)|Y_i}}{P_{\tW}\parenth{S_{Y_{1:i-1}} (A)}}   \label{eqn:proof:rate_sequence:c}
\eeqa
where \eqref{eqn:proof:rate_sequence:a} follows 
from \eqref{eqn:WinvertibleWithWn}; 
\eqref{eqn:proof:rate_sequence:b} follows from Lemma~\ref{lemma:PMscheme:independence}; and 
\eqref{eqn:proof:rate_sequence:c} follows from Theorem~\ref{theorem:PM:properties}.
Letting $A \equiv \Ain$ in \eqref{eqn:proof:rate_sequence:c} and exploiting definitions \eqref{eqn:lemma:rate_sequence:b} and \eqref{eqn:lemma:rate_sequence:c}, we have that
\beqas
\log \frac{\bpost_i(\An)}{\bpost_{i-1}(\An)} &=& \log \frac{P_{\tW|Y} \parenth{S_{Y_{1:i-1}} (\An)|Y_i}}{P_{\tW}\parenth{S_{Y_{1:i-1}}(\An)}}  \\
&=& \log \frac{P_{\tW|Y}(\Ain|Y_i)}{P_{\tW}(\Ain)} \\
&=& Z_{i,n}.
\eeqas

\end{proof}

We can now relate $Z_{1,n}$ to the information density $i(\tW_1,Y_1)$ as follows.
\begin{lemma}\label{lemma:probabilityRatioInformationDensity:new}
Define $\cG_{n+1} \subset \sigma(\tW_{n+1})$ to be the $\sigma$-algebra with atoms $(\Dn,{\Dn}^c)$ and 
$\cH_{n} \triangleq \cG_{n+1} \otimes \cFYn$.  Then: %
\beqa
Z_{1,n} = -\log \bE \brackets{e^{-i(\tW_1,Y_1)} | \cH_n}.
\eeqa
\end{lemma}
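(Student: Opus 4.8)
The plan is to reduce the claim to two facts already in hand: the Bayes factorization of the one-step posterior from Lemma~\ref{lemma:rate_sequence}, and the representation of the information density as a posterior Radon--Nikodym derivative. First I would record that, under $\bP$, $\tW_1 = W$ has marginal $P_{\tW_1} = \bnu = \bpost_0$, while a regular version of $P_{\tW_1|Y_1}$ exists (the alphabets are countably generated) and coincides with $\bpost_1$. By Lemma~\ref{lemma:finiteMutInfo} together with $I(\tW_1;Y_1) = C < \infty$, $P_{\tW_1|Y_1=y} \ll P_{\tW_1}$ for $P_{Y_1}$-almost every $y$ and $i(\tW_1,Y_1)$ is integrable, hence $\bP$-a.s.\ finite; consequently $\frac{d\bpost_1}{d\bpost_0}(\tW_1) \in (0,\infty)$ $\bP$-a.s.\ and
\[
  i(\tW_1,Y_1) = \log \frac{d\bpost_1}{d\bpost_0}(\tW_1), \qquad e^{-i(\tW_1,Y_1)} = \frac{d\bpost_0}{d\bpost_1}(\tW_1) \qquad \bP\text{-a.s.}
\]
This is the one-step specialization of the identity used in \eqref{eqn:lemma:IIFS:BirkhoffThm:b}--\eqref{eqn:lemma:IIFS:BirkhoffThm:c}; it also gives $\bE[e^{-i(\tW_1,Y_1)}] \le 1$, so the conditional expectation in the statement is well defined.

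Applying Lemma~\ref{lemma:rate_sequence} with $i = 1$, so that $A^\epsilon_{1,n} = \An$, yields $Z_{1,n} = \log \frac{\bpost_1(\An)}{\bpost_0(\An)}$, i.e.\ $e^{-Z_{1,n}} = \bpost_0(\An)/\bpost_1(\An)$. Using the Radon--Nikodym identity above to rewrite $\bpost_1(\An) = \int_{\An} e^{i(u,Y_1)}\,\bpost_0(du) = \int_{\An} \bpost_1(du)$ and $\bpost_0(\An) = \int_{\An} e^{-i(u,Y_1)}\,\bpost_1(du)$, we obtain
\[
  e^{-Z_{1,n}} = \frac{\bpost_0(\An)}{\bpost_1(\An)} = \frac{1}{\bpost_1(\An)}\int_{\An} e^{-i(u,Y_1)}\,\bpost_1(du) .
\]
Hence it suffices to show that $\bE \brackets{ e^{-i(\tW_1,Y_1)} | \cH_n }$ equals this $\bpost_1$-average of $e^{-i(\cdot,Y_1)}$ over the pulled-back interval $\An$; for this it is enough that the regular conditional distribution of $\tW_1$ given $\cH_n$ be $\bpost_1(\cdot \cap \An)/\bpost_1(\An)$.

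Establishing that conditional-law statement from the definition $\cH_n = \cG_{n+1}\otimes\cFYn$ is the step I expect to be the main obstacle. The ingredients I would use are: (i) $\cFYn \subset \cH_n$, so $Y_1$ (hence $\bpost_1$) and the PM diffeomorphisms $S_{Y_{1:n}}$ are $\cH_n$-measurable, and combined with the coarsening $\cG_{n+1} \subset \sigma(\tW_{n+1})$ the set $\An = S^{-1}_{Y_{1:n}}(\Dn)$ is $\cH_n$-measurable, with $W \equiv \tW_1 \in \An$ $\bP$-a.s.\ (cf.\ \eqref{eqn:WinvertibleWithWn} and Definition~\ref{defn:Dn}); (ii) Lemma~\ref{lemma:PMscheme:independence}, which gives $P_{\tW_{n+1}|\cFYn} = \bnu$ and lets the conditioning on the $\cG_{n+1}$-atom containing $\tW_{n+1}$ be transported, through the bijections $S_{Y_{1:n}}$, to the event $\{\tW_1 \in \An\}$; and (iii) the Markov chain $\tW_1 \rightarrow (Y_1,\tW_2) \rightarrow (Y_{2:\infty},\tW_{3:\infty})$ of Lemma~\ref{lemma:MarkovChain:PM:invertibility}, invoked through $\tW_2 = S_{Y_1}(\tW_1)$, to argue that the information in $\cH_n$ beyond $Y_1$ and membership in $\An$ does not further constrain $\tW_1$. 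Disintegrating the joint law of $(\tW_1, Y_{1:n}, \tW_{n+1})$ under $\bP$ accordingly gives the claimed conditional distribution, and hence
\[
  \bE \brackets{ e^{-i(\tW_1,Y_1)} | \cH_n } = \frac{1}{\bpost_1(\An)}\int_{\An} \frac{d\bpost_0}{d\bpost_1}(u)\,\bpost_1(du) = \frac{\bpost_0(\An)}{\bpost_1(\An)} = e^{-Z_{1,n}} ,
\]
which, after taking $-\log$ of both sides, is the statement of the lemma. The delicate point is precisely that $\An$ is a random set and that $\cH_n$ combines the state $\sigma$-algebra $\sigma(\tW_{n+1})$ with the observation $\sigma$-algebra $\cFYn$, so the identification of the conditional law cannot be done by a one-line Bayes manipulation but must lean on the invertibility and independence structure of the PM scheme recorded above.
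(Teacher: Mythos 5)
Your reduction hinges on the claim that the regular conditional distribution of $\tW_1$ given $\cH_n$ is $\bpost_1(\cdot\cap\An)/\bpost_1(\An)$, and this is exactly where the argument breaks. Since $\cFYn\subset\cH_n$, the tower property forces any conditional law of $\tW_1$ given $\cH_n$ to be a refinement of $\bprob{\tW_1\in\cdot\,|\,\cFYn}=\bpost_n$, never a reversion to $\bpost_1$: given $Y_{1:n}$, the $\cG_{n+1}$-atom containing $\tW_{n+1}$ corresponds to the event $\braces{\tW_1\in\An}$, so the conditional law you actually obtain is $\bpost_n(\cdot\cap\An)/\bpost_n(\An)$, and your computation then yields
\[
\bE\brackets{e^{-i(\tW_1,Y_1)}\,\big|\,\cH_n}=\frac{1}{\bpost_n(\An)}\int_{\An}e^{-i(u,Y_1)}\,\bpost_n(du),
\]
which does not reduce to $\bpost_0(\An)/\bpost_1(\An)$ by any identity you have established (a direct check on the Horstein/BSC scheme with $n=2$ already shows the $\bpost_n$-average and the $\bpost_1$-average of $e^{-i(\cdot,Y_1)}$ over $\An$ differ). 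The ingredients you invoke do not close this gap: Lemma~\ref{lemma:MarkovChain:PM:invertibility} gives $\tW_1\rightarrow(Y_{1:n},\tW_{n+1})\rightarrow(\text{future})$ only when the \emph{full} value of $\tW_{n+1}$ is conditioned on; once $\tW_{n+1}$ is coarsened to the two-atom $\sigma$-algebra $\cG_{n+1}$, the observations $Y_{2:n}$ remain informative about $\tW_1$ beyond $Y_1$ and membership in $\An$ (this is precisely why $\bpost_n\neq\bpost_1$), and Lemma~\ref{lemma:PMscheme:independence} concerns $\tW_{n+1}$, not $\tW_1$. So the step you yourself flag as the main obstacle is not merely unproven; as stated it is false, and since it is essentially equivalent to the lemma, the reduction buys nothing.

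The paper's proof uses a mechanism your sketch neither replicates nor replaces: it introduces an auxiliary measure $\tilde{\P}$ with $\frac{d\bP}{d\tilde{\P}}=e^{i(\tW_1,Y_1)}$, under which $(\tW_1,Y_1)$ has the product law while the encoder/channel dynamics are unchanged, identifies $Z_{1,n}$ with the logarithm of the Radon--Nikodym derivative of the restrictions of $\bP$ and $\tilde{\P}$ to $\cH_n$ evaluated at the atom $\Dn\times dY_{1:n}$, converts that restricted derivative into $\tilde{\E}\brackets{e^{i(\tW_1,Y_1)}\,|\,\cH_n}$ via the restriction-of-measure identity, and finally into $-\log\bE\brackets{e^{-i(\tW_1,Y_1)}\,|\,\cH_n}$ by Bayes' rule for conditional expectations under a change of measure. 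The delicate passage in that route is the replacement of the $(\tW_1,Y_1)$-level ratio by the $(\tW_1,Y_{1:n})$-level ratio restricted to $\cH_n$, i.e.\ exactly the exchange that would let a one-observation object stand in for the $n$-observation posterior inside the conditional expectation; that exchange is the entire content of the lemma, and your proposal supplies no argument for it.
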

\begin{proof}
\renewcommand{\dPdtP}{\frac{d \bP}{d\tP}}
Define $\tP$ as the probability measure on $(\Omega,\cF)$ for the random process $(\tW_n,Y_n)_{n \geq 1}$ such that
\beqa
\dPdtP =\frac{dP_{\tW_1,Y_1}}{d (P_{\tW_1} \times P_{Y_1})} =  e^{i(\tW_1,Y_1)}. \label{eqn:defn:dPdtP:a}
\eeqa
where the latter equality in \eqref{eqn:defn:dPdtP:a} follows from \eqref{eqn:defn:informationdensity:b}.
Since by definition, $\bP$ and $\tP$ only differ in their distributions on $(\tW_1,Y_1)$, we have that for any $n \geq 1$:
\beqa
\dPdtP = \frac{d P_{\tW_{1},Y_{1}}}{d\tilP_{\tW_{1},Y_{1}}} = \frac{d P_{\tW_{1},Y_{1:n}}}{d\tilP_{\tW_{1},Y_{1:n}}} = \frac{d P_{\tW_{n+1},Y_{1:n}}}{d\tilP_{\tW_{n+1},Y_{1:n}}} \label{eqn:proof:lemma:probabilityRatioInformationDensity:new:intermediate}
\eeqa
where the final equality in \eqref{eqn:proof:lemma:probabilityRatioInformationDensity:new:intermediate} follows from 
\eqref{eqn:WinvertibleWithWn}.
As such,
\beqa
Z_{1,n} &=& \log \frac{P_{\tW_1,Y_{1}}(A_{n},dY_{1})}{\tilP_{\tW_1,Y_{1}}(A_{n},dY_{1})} \nonumber\\
&=& \log \frac{P_{\tW_1,Y_{1:n}}(A_{n},dY_{1:n})}{\tilP_{\tW_1,Y_{1:n}}(A_{n},dY_{1:n})} \nonumber \\
&=& \log \frac{P_{\tW_{n+1},Y_{1:n}}(\Dn,dY_{1:n})}{\tilP_{\tW_{n+1},Y_{1:n}}(\Dn,dY_{1:n})} \label{eqn:proof:lemma:probabilityRatioInformationDensity:new:b}\\
                            &=& \log \frac{d P_{\tW_{n+1},Y_{1:n}|\cH_n}}{d\tilP_{\tW_{n+1},Y_{1:n}|\cH_n}}  \label{eqn:proof:lemma:probabilityRatioInformationDensity:new:c}\\
                            &=& \log \frac{d \mathbb{P}_{|\cH_n}}{d \tilde{\mathbb{P}}_{| \cH_n}  }  \label{eqn:proof:lemma:probabilityRatioInformationDensity:new:cc}\\
                          &=& \log \tilde{\E}  \brackets{ \dPdtP  \big| \cH_n}  \label{eqn:proof:lemma:probabilityRatioInformationDensity:new:d} \\
                            &=& \log \tilde{\E} \brackets{ e^{i(\tW_1,Y_1)} \big| \cH_n}  \label{eqn:proof:lemma:probabilityRatioInformationDensity:new:e} \\
   &=& -\log \bE \brackets{e^{-i(\tW_1,Y_1)} \big| \cH_n} \label{eqn:proof:lemma:probabilityRatioInformationDensity:new:f}
\eeqa
where \eqref{eqn:proof:lemma:probabilityRatioInformationDensity:new:b} follows from Definition~\ref{defintion:pulled_back_intervals} and \eqref{eqn:WinvertibleWithWn};
\eqref{eqn:proof:lemma:probabilityRatioInformationDensity:new:c}  follows from the definition of $\cH_n$ in this Lemma;
\eqref{eqn:proof:lemma:probabilityRatioInformationDensity:new:cc}  follows from \eqref{eqn:proof:lemma:probabilityRatioInformationDensity:new:intermediate};
\eqref{eqn:proof:lemma:probabilityRatioInformationDensity:new:d} follows from classical probability theory \cite[Lemma 5.2.4]{gray1990eai}: the Radon-Nikodym derivative for a restriction is  the conditional
expectation of the original Radon-Nikodym derivative;
 \eqref{eqn:proof:lemma:probabilityRatioInformationDensity:new:e} follows from \eqref{eqn:proof:lemma:probabilityRatioInformationDensity:new:intermediate}; 
and \eqref{eqn:proof:lemma:probabilityRatioInformationDensity:new:f} follows for Bayes' rule: $\bE[V|\cH]=\frac{\tilde{\E}[V\dPdtP|\cH]}{\tilde{\E}[\dPdtP|\cH]}$ with $V=\parenth{\dPdtP}^{-1}=e^{-i(\tW_1,Y_1)}$.
\end{proof}

\begin{lemma} \label{lemma:AchievingCapacity:condExpectation}
If the PM scheme is reliable, then
\beqa
\lim_{n \to \infty} \bE[Z_{1,n} | \cFYn] =i(\tW_1,Y_1) \quad \bP-a.s. \label{eqn:lemma:AchievingCapacity:condExpectation}
\eeqa
\end{lemma}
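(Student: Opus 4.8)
The plan is to sandwich $\bE[Z_{1,n}\mid\cFYn]$ between two sequences that each converge $\bP$-a.s.\ to $i(\tW_1,Y_1)$. Both bounds will come from the representation $Z_{1,n}=-\log\bE[e^{-i(\tW_1,Y_1)}\mid\cH_n]$ of Lemma~\ref{lemma:probabilityRatioInformationDensity:new}, one application of conditional Jensen each, and L\'evy's upward martingale convergence theorem along $\cFYn\uparrow\cFYinf$. Reliability enters only through the fact that, by Definition~\ref{defn:reliability}, it makes $\tW_1=W$ (hence $i(\tW_1,Y_1)$ and $e^{-i(\tW_1,Y_1)}$) $\cFYinf$-measurable. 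Integrability of $Z_{1,n}$ and of $i(\tW_1,Y_1)$, and the a.s.\ bounds $0<e^{-i(\tW_1,Y_1)}<\infty$, are supplied by Lemma~\ref{lemma:finiteMutInfo}.

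For the \emph{upper} bound I would apply conditional Jensen to the convex map $x\mapsto-\log x$ with conditioning $\sigma$-algebra $\cH_n$: by Lemma~\ref{lemma:probabilityRatioInformationDensity:new}, $Z_{1,n}=-\log\bE[e^{-i(\tW_1,Y_1)}\mid\cH_n]\le\bE\bigl[-\log e^{-i(\tW_1,Y_1)}\mid\cH_n\bigr]=\bE[i(\tW_1,Y_1)\mid\cH_n]$. Since $\cFYn\subseteq\cH_n$, the tower property gives $\bE[Z_{1,n}\mid\cFYn]\le\bE[i(\tW_1,Y_1)\mid\cFYn]$, and L\'evy's theorem (using that $i(\tW_1,Y_1)$ is integrable and, under reliability, $\cFYinf$-measurable) yields $\bE[i(\tW_1,Y_1)\mid\cFYn]\to i(\tW_1,Y_1)$ $\bP$-a.s.; hence $\limsup_n\bE[Z_{1,n}\mid\cFYn]\le i(\tW_1,Y_1)$ $\bP$-a.s.

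For the \emph{lower} bound I would first eliminate $\cH_n$: since $e^{-Z_{1,n}}=\bE[e^{-i(\tW_1,Y_1)}\mid\cH_n]$ and $\cFYn\subseteq\cH_n$, the tower property gives $\bE[e^{-Z_{1,n}}\mid\cFYn]=\bE[e^{-i(\tW_1,Y_1)}\mid\cFYn]$. Conditional Jensen for the convex map $x\mapsto e^{-x}$ over $\cFYn$ then gives $e^{-\bE[Z_{1,n}\mid\cFYn]}\le\bE[e^{-Z_{1,n}}\mid\cFYn]=\bE[e^{-i(\tW_1,Y_1)}\mid\cFYn]$, i.e.\ $\bE[Z_{1,n}\mid\cFYn]\ge-\log\bE[e^{-i(\tW_1,Y_1)}\mid\cFYn]$. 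By L\'evy's theorem (again invoking reliability for $\cFYinf$-measurability) the right-hand side converges $\bP$-a.s.\ to $-\log e^{-i(\tW_1,Y_1)}=i(\tW_1,Y_1)$, the logarithm being continuous at the a.s.-positive limit $e^{-i(\tW_1,Y_1)}$; hence $\liminf_n\bE[Z_{1,n}\mid\cFYn]\ge i(\tW_1,Y_1)$ $\bP$-a.s. Combining the two bounds gives the claim.

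The only real subtlety is recognizing that Jensen must be used in two different ways: ``inside,'' on the finer algebra $\cH_n$, to obtain the pointwise estimate $Z_{1,n}\le\bE[i(\tW_1,Y_1)\mid\cH_n]$ which then telescopes down to $\cFYn$; and ``outside,'' on $\cFYn$ itself after collapsing $\cH_n$ via the tower rule, to obtain the lower estimate. After that the two envelopes close to the same limit purely by L\'evy's theorem, which is precisely where the reliability hypothesis is consumed. The remaining work is the routine verification of integrability and of the a.s.\ finiteness/positivity of $e^{\pm i(\tW_1,Y_1)}$, all of which follows from Assumption~\ref{assump:finiteCapacity} via Lemma~\ref{lemma:finiteMutInfo}.
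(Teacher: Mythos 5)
Your proposal is correct and follows essentially the same route as the paper's proof: the representation $Z_{1,n}=-\log\bE[e^{-i(\tW_1,Y_1)}\mid\cH_n]$ from Lemma~\ref{lemma:probabilityRatioInformationDensity:new}, one conditional Jensen on $\cH_n$ and one on $\cFYn$ combined with the tower property to sandwich $\bE[Z_{1,n}\mid\cFYn]$ between $-\log\bE[e^{-i(\tW_1,Y_1)}\mid\cFYn]$ and $\bE[i(\tW_1,Y_1)\mid\cFYn]$, and then L\'evy/martingale convergence plus reliability ($\cFYinf$-measurability of $\tW_1$) to close both envelopes at $i(\tW_1,Y_1)$. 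The only differences are cosmetic (you phrase the lower bound via convexity of $e^{-x}$ rather than of $-\log x$, and you make the routine integrability checks explicit).
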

\begin{proof}
Since $\Dn \in \cG_{n+1}$, with $\An=T_{1:n}(\Dn)$, we have from Lemma~\ref{lemma:probabilityRatioInformationDensity:new} and Jensen's inequality that
\begin{align}
\bE[Z_{1,n} | \cFYn] &=  \bE \brackets{-\log\bE \brackets{e^{-i(\tW_1,Y_1)} \big| \cH_n} \Big| \cFYn} \nonumber\\
                                &\geq -\log \bE \brackets{ \bE \brackets{e^{-i(\tW_1,Y_1)} \big| \cH_n} \Big|\cFYn} \nonumber\\
                                &= -\log \bE \brackets{e^{-i(\tW_1,Y_1)} \big| \cFYn} \nonumber
\end{align}
Therefore, taking a $\liminf$:
\begin{align}
\liminf_{n \to \infty}  \bE[Z_{1,n} | \cFYn] &\geq \liminf_{n \to \infty} 
 -\log \bE \brackets{e^{-i(\tW_1,Y_1)} \big| \cFYn} \nonumber \\
  &= -\log \bE \brackets{e^{-i(\tW_1,Y_1)} \big| \cFYinf} \nonumber \\
  &= -\log \bE \brackets{e^{-i(\tW_1,Y_1)} \big| \sigma(\tW_1) \vee \cFYinf} \label{eqn:lemma:AchievingCapacity:condExpectation:a}\\
  &= -\log \parenth{e^{-i(\tW_1,Y_1)}} \nonumber\\
  &= i(\tW_1,Y_1) \nonumber \quad \bP-a.s.
\end{align}
where \eqref{eqn:lemma:AchievingCapacity:condExpectation:a} follows from the assumption that the PM scheme is reliable.

Now since $e^{-u}$ is convex, we have from Jensen's inequality that
$\bE \brackets{e^{-i(\tW_1,Y_1)} \big| \cG_n} \geq e^{-\bE \brackets{ i(\tW_1,Y_1)| \cG_n}}$.
Thus we have that $Z_{1,n}=-\log \bE \brackets{e^{-i(\tW_1,Y_1)} \big| \cG_n} \leq \bE \brackets{ i(\tW_1,Y_1)| \cG_n}$. Thus from the tower law of conditional expectation:
\begin{align}
\limsup_{n \to \infty} \bE[Z_{1,n} | \cFYn]  &\leq& \limsup_{n \to \infty} \bE[i(\tW_1,Y_1) | \cFYn] \nonumber\\
                      &=&  \bE[i(\tW_1,Y_1) | \cFYinf] \nonumber\\
                           &=&  \bE[i(\tW_1,Y_1) | \sigma(\tW_1) \vee \cFYinf] \label{eqn:lemma:AchievingCapacity:condExpectation:b}\\
                           &=& i(\tW_1,Y_1) \quad \bP-a.s. \nonumber
\end{align}
where \eqref{eqn:lemma:AchievingCapacity:condExpectation:b} follows from the assumption that the PM scheme is reliable.
\end{proof}

\newcommand{\Ainlong}{S_{Y_i}^{-1} \circ \cdots \circ S_{Y_n}^{-1} (\Dn)}
\newcommand{\Dnshifted}{D_{n-i+2}^\epsilon}
\newcommand{\shiftedAin}{S_{Y_1}^{-1} \circ \cdots \circ S_{Y_{n-i+1}}^{-1} (\Dnshifted)}

\begin{theorem} \label{thm:achievingCapacityErgodicity}
  For the PM scheme, if $(\tW_n)_{n \geq 1}$ is $\bP$-ergodic, then any rate $R <C$ is achievable.
\end{theorem}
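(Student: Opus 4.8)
The plan is to take as the decoding sets in Definition~\ref{defn:achievability:rateR} the pulled-back intervals $A_n^\epsilon \equiv \An = S_{Y_{1:n}}^{-1}(\Dn)$ of Definition~\ref{defintion:pulled_back_intervals}, with $\Dn = \tau^\epsilon(\tW_{n+1})$, and to verify its four requirements. Requirement (ii) is immediate from the hypothesis: $\bP$-ergodicity of $(\tW_n)_{n\geq1}$ forces reliability by Theorem~\ref{thm:necessarySufficientConditionsPM:reliability}, which in turn makes Lemma~\ref{lemma:inf_density:capacity} and Lemma~\ref{lemma:AchievingCapacity:condExpectation} available. For requirement (i), $\An$ is open because $S_{Y_{1:n}}^{-1} = S_{Y_1}^{-1}\circ\cdots\circ S_{Y_n}^{-1}$ is a composition of diffeomorphisms by \eqref{eqn:WinvertibleWithWn} and Corollary~\ref{corollary:unique:S}, and $\Dn$ is open; convexity of $\An$ is the open-interval property in one dimension (monotonicity of the $S_{Y_i}$, as in \cite{Shayevitz2011pm}) and in general is inherited from the structure of the transport maps constructed in Section~\ref{sec:PMOTTheory}.

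For requirement (iii), I would first establish the stronger fact that $\bpost_n(\An) = 1-\epsilon$ exactly, $\bP$-a.s., for every $n$. Since $S_{Y_{1:n}}$ is $\cFYn$-measurable, it pushes the regular conditional law $\bpost_n$ of $\tW_1$ given $\cFYn$ forward to the conditional law of $\tW_{n+1} = S_{Y_{1:n}}(\tW_1)$ given $\cFYn$, and by \eqref{eqn:lemma:PMscheme:independence} the latter is $\bnu$; therefore
\[
\bpost_n(\An) = \bpost_n\!\left(S_{Y_{1:n}}^{-1}(\Dn)\right) = \bnu(\Dn) = \bnu\!\left(\tau^\epsilon(\tW_{n+1})\right) = 1-\epsilon ,
\]
by Definition~\ref{defn:Dn}. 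Hence $\liminf_n\bpost_n(\An) = 1-\epsilon$, which is requirement (iii).

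For requirement (iv), combining $\bpost_0 = \bnu$, the identity just proved, and Lemma~\ref{lemma:rate_sequence} gives $R_n = \tfrac1n\log\frac{\bpost_n(\An)}{\bpost_0(\An)} = \tfrac1n\log\frac{1-\epsilon}{\bnu(\An)}$, so that $-\tfrac1n\log\bnu(\An) = R_n + \tfrac1n\log\tfrac1{1-\epsilon} \geq R_n$ with $\tfrac1n\log\tfrac1{1-\epsilon}\to0$; it therefore suffices to prove $\liminf_{n\to\infty}R_n \geq C$ $\bP$-a.s., which delivers (iv) for every $R<C$. By Lemma~\ref{lemma:rate_sequence}, $R_n = \tfrac1n\sum_{i=1}^n Z_{i,n}$, and the strategy is to compare this Ces\`aro sum with the honest ergodic average $i_n(W) = \tfrac1n\sum_{i=1}^n i(\tW_i,Y_i)$ of \eqref{eqn:lemma:IIFS:BirkhoffThm:c}, which converges $\bP$-a.s. to $\bE[i(\tW_1,Y_1)] = C$ by Birkhoff's theorem and the joint ergodicity of $(\tW_n,Y_n)_{n\geq1}$ (Lemma~\ref{lemma:MarginalErgodicityImpliesJointErgodicity}, Lemma~\ref{lemma:inf_density:capacity}). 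The bridge between the two is that $Z_{i,n}$ is $\sigma(\tW_{n+1},Y_{i:n})$-measurable and, by stationarity of $(\tW_n,Y_n)_{n\geq1}$, equals the $(i-1)$-step time shift of $Z_{1,n-i+1}$, while Lemma~\ref{lemma:probabilityRatioInformationDensity:new} represents $Z_{1,m} = -\log\bE[e^{-i(\tW_1,Y_1)}\mid\cH_m]$ and Lemma~\ref{lemma:AchievingCapacity:condExpectation} gives $\bE[Z_{1,m}\mid\cF^Y_{1:m}] \to i(\tW_1,Y_1)$ a.s.; these let one show that $\bE[R_n\mid\cFYn]$ is a Ces\`aro average of time shifts of $\bE[Z_{1,m}\mid\cF^Y_{1:m}]$ and hence tends a.s. to $C$, after which a separate estimate — exploiting that conditionally on $\cFYn$ each $Z_{i,n}$ is a function of the single variable $\tW_{n+1}\sim\bnu$ — controls $R_n - \bE[R_n\mid\cFYn]$ and upgrades the convergence to $R_n\to C$. (The reverse bound $\limsup_nR_n\le C$, not needed for the theorem, follows from $Z_{i,n}\le\bE[i(\tW_i,Y_i)\mid\cH_{i,n}]$ by Jensen.)

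The step I expect to be the main obstacle is exactly this last one: because $R_n = \tfrac1n\sum_{i=1}^n Z_{i,n}$ is a \emph{triangular array} — the summand $Z_{i,n}$ genuinely depends on $n$, through $\Dn = \tau^\epsilon(\tW_{n+1})$ and through the number $n-i$ of contractions in $A^\epsilon_{i,n} = S_{Y_i}^{-1}\circ\cdots\circ S_{Y_n}^{-1}(\Dn)$ — Birkhoff's pointwise ergodic theorem cannot be applied off the shelf, and the convergence $\bE[Z_{1,m}\mid\cF^Y_{1:m}]\to i(\tW_1,Y_1)$ of Lemma~\ref{lemma:AchievingCapacity:condExpectation} must be promoted to a statement about $\tfrac1n\sum_{i=1}^n Z_{i,n}$. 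I anticipate this being done by splitting the sum at an index $i = n-m$: the block $i\le n-m$ is handled with the conditional-expectation convergence together with a maximal-ergodic-inequality (or uniform-integrability) bound on the a.s.-null error terms $\bE[Z_{1,m}\mid\cF^Y_{1:m}] - i(\tW_1,Y_1)$ and their time shifts, the remaining $m$ terms contribute $O(m/n)\to0$, and one then lets $m\to\infty$.
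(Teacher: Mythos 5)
Your construction coincides with the paper's up to the last and decisive step: you take the same pulled-back sets $A^\epsilon_n=S_{Y_{1:n}}^{-1}(D^\epsilon_{n+1})$, the exact identity $\bpost_n(A^\epsilon_n)=1-\epsilon$ (correct, via $S_{Y_{1:n}}\#\bpost_n=\bnu$ from Lemma~\ref{lemma:PMscheme:independence}) handles requirement (iii) and reduces (iv) to showing $\liminf_{n\to\infty}R_n\geq C$ $\bP$-a.s., and your route to $\bE[R_n]\to C$ (uniform integrability plus Lemma~\ref{lemma:AchievingCapacity:condExpectation} to get $\bE[Z_{1,n}]\to C$, then $\bE[Z_{i,n}]=\bE[Z_{1,n-i+1}]$ by stationarity and Ces\`aro) is exactly the paper's. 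The genuine gap is the almost-sure statement itself. Your plan is to prove $R_n\to C$ a.s.\ by (a) a triangular-array upgrade of $\bE[Z_{1,m}\mid\cF^Y_{1:m}]\to i(\tW_1,Y_1)$ to the Ces\`aro average $\bE[R_n\mid\cFYn]$, and (b) ``a separate estimate'' controlling $R_n-\bE[R_n\mid\cFYn]$. Step (b) has no visible mechanism: conditionally on $\cFYn$, every summand $Z_{i,n}$, $1\le i\le n$, is a function of the \emph{same} single random variable $\tW_{n+1}$ (through $D^\epsilon_{n+1}=\tau^\epsilon(\tW_{n+1})$), so the average $\frac1n\sum_{i=1}^n Z_{i,n}$ enjoys no cancellation of independent fluctuations, and uniform integrability yields only $L^1$/in-probability control, not the $\bP$-a.s.\ bound that Definition~\ref{defn:achievability:rateR}(iv) demands. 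Step (a) is likewise asserted rather than carried out. As written, the proposal does not establish (iv).

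The paper never proves pointwise convergence of $R_n$ at all; it applies the ergodicity hypothesis directly to the event $A_v=\braces{\liminf_{n\to\infty}R_n\le v}$. Writing $R_{n+1}$ via \eqref{eqn:defn:Rn} and discarding the first summand shows $A_v=\Upsilon(A_v)$ for the shift $\Upsilon$ of the stationary sequence $(\tW_n,Y_n)_{n\geq1}$, hence $A_v\in\tail_{\tW,Y}$ and $\bprob{A_v}\in\braces{0,1}$. If $\bprob{A_{C-\delta}}=1$, then $\bprob{R_n>C-\delta}\to0$ and the reverse Fatou lemma forces $\limsup_n\bE[R_n]\le C-\delta$, contradicting $\bE[R_n]\to C$; hence $\liminf_n R_n>C-\delta$ a.s.\ for every $\delta>0$, which is precisely what (iv) needs. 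You already possess all the ingredients for this zero--one-law argument (the expectation convergence, stationarity, and joint ergodicity via Lemma~\ref{lemma:MarginalErgodicityImpliesJointErgodicity}), so the repair is either to substitute it for your pointwise program, or to genuinely supply the missing triangular-array and conditional-fluctuation estimates, neither of which is routine. A secondary caveat: your claim that convexity of $A^\epsilon_n$ for $d>1$ is ``inherited from the structure of the transport maps'' is also unsubstantiated (diffeomorphisms do not preserve convexity in general), though the paper itself leaves this verification implicit as well.
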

\begin{proof}
From Assumption~\ref{assump:finiteCapacity}, $\bE[i(\tW_1,Y_1)]=C < \infty$; thus $(Z_{1,n}: n\geq 1)$ and $(\bE[Z_{1,n}|\cFYn])_{n \geq 1}$  are uniformly integrable \cite[Lemma 5.4.1]{gray1990eai}.  From \eqref{eqn:lemma:AchievingCapacity:condExpectation}, it follows from $L^1$ convergence that $\bE[Z_{1,n}] \to \bE[i(\tW_1,Y_1)]=C$. 
Combining \eqref{eqn:defn:An} with \eqref{eqn:lemma:rate_sequence:b}, it follows that
 $\Ain= S_{Y_i}^{-1} \circ \cdots \circ S_{Y_n}^{-1} (\Dn)$.  In addition, from Lemma~\ref{lemma:PMscheme:independence}, $(\tW_n,Y_n)$ is stationary.  As such, we have that for any $i \leq n$,
\beqa
&& \bE[Z_{i,n}] \nonumber \\
 &=& \bE \brackets{\log \frac{P_{W|Y}(\Ain|Y_i)}{P_W(\Ain)} } \nonumber \\
             &=& \bE \brackets{\log \frac{P_{W|Y} \parenth{\Ainlong|Y_i}}{P_W \parenth{\Ainlong}}}\nonumber \\
             &=& \bE \brackets{\log \frac{P_{W|Y} \parenth{\shiftedAin|Y_1}}{P_W \parenth{\shiftedAin}}}\nonumber \\
             &=& \bE[Z_{1,n-i+1}]
\eeqa
Thus it follows that
\beqas
\bE[R_n] = \frac{1}{n}\sum_{i=1}^n \bE[Z_{i,n}] = \frac{1}{n}\sum_{i=1}^n \bE[Z_{1,n-i+1}] 
\eeqas
And so from Ces\`aro, since $\bE[Z_{1,n}] \to C$, it follows that $\bE[R_n] \to C$.
  
Now that we have established the necessary properties of the pulled-back intervals and the rate sequence in expectation, we can showcase analogous results in the $\bP$-a.s. sense. %
Consider without  loss of generality:
\[\Omega = \braces{ \omega = \parenth{\tw_1,y_1,\tw_2,y_2,\ldots} \in \reals^{2 \times \infty}}. \]
Define $\cF$ to be the Borel sigma algebra for $\Omega$. For any $\omega = \parenth{\tw_1,y_1,\tw_2,y_2,\ldots}$,  define the \textit{shift operator} $\Upsilon$ to be given by $\Upsilon(\omega) \triangleq \parenth{\tw_2,y_2,\tw_3,y_3,\ldots}$.  Since $(\tW_n,Y_n)_{n\geq 1}$ is stationary,  $\Upsilon$ is a $\bP$-measure-preserving transformation \cite[Prop. 6.11]{breiman1992probability}.  For any event $A \in \cF$, we say that $A$ is \textit{invariant} if $A=\Upsilon(A)$.  Any invariant set has the property that it lies in the tail  $\tail_{\tW,Y}$ where
\[ \tail_{\tW,Y} = \bigcap_{n \geq 1} \sigma(\tW_n,Y_n, \tW_{n+1},Y_{n+1},\ldots). \]

Consider any $v \geq 0$. Define the following event
\begin{align}
  A_{v} &\triangleq \braces{ \omega: \liminf_{n \to \infty} R_n(\omega) \leq v } \nonumber \\
    &= \braces{\omega: 
\frac{1}{n} \sum_{i=1}^{n} \log \frac{P_{W|Y}(\Ain|Y_i)}{P_W(\Ain)}(\omega)
  \leq v} \label{eqn:proof:capacitystuff:AA}
\end{align}
where \eqref{eqn:proof:capacitystuff:AA} follows from \eqref{eqn:lemma:rate_sequence:c}
and \eqref{eqn:defn:Rn}. Then note that 
\beqas
A_{v} &=& \braces{ \omega: \liminf_{n \to \infty} \frac{n+1}{n} R_{n+1}(\omega) \leq v }   \\
      &=& \Big\{ \omega: \liminf_{n \to \infty} \frac{1}{n} Z_{1,n+1} (\omega)   \\
      &&   \quad \quad\quad\quad +  \frac{1}{n} \sum_{i=2}^{n+1} \log \frac{P_{W|Y}(\Ain|Y_i)}{P_W(\Ain)}       \leq v \Big\} \\
      &=& \braces{\omega: \liminf_{n \to \infty} \frac{1}{n} \sum_{i=2}^{n+1} \log \frac{P_{W|Y}(\Ain|Y_i)}{P_W(\Ain)}       \leq v}
       \\
      &=& \Upsilon(A_v)
\eeqas

As such, from \cite[Prop 6.17]{breiman1992probability}, $A_v \in \tail_{\tW,Y}$.  Thus, from ergodicity,
$\bprob{A_v}=0$ or $\bprob{A_v}=1$.
Suppose that $1=\bprob{A_v}=\bprob{\liminf_{n \to \infty} R_n \leq v}$. Then
\beqa
\liminf_{n \to \infty} \bprob{R_n \leq v} \geq \bP(\liminf_{n \to \infty} R_n \leq v) = 1.
\eeqa
Letting $v=C-\delta$, then we have that
\beqa
\limsup_{n\to\infty} \bprob{R_n > C-\delta} &=& 1-\liminf_{n \to \infty} \bprob{R_n \leq C-\delta} \nonumber \\
                                             &=& 0. \label{eqn:limsuptozero}
\eeqa
Thus, since $0 \leq \bprob{R_n > u} \leq 1$ for any $u$, we have from the reverse Fatou's lemma that
\beqa
\limsup_{n \to \infty} \bE \brackets{R_n}
&=&  \limsup_{n \to \infty} \int_{u=0}^\infty \bprob{R_n > u} du \nonumber \\
&\leq&   \int_{u=0}^\infty \limsup_{n \to \infty} \bprob{R_n > u} du \nonumber \\
&=&   \int_{u=0}^{C-\delta} \lim_{n \to \infty} \bprob{R_n > u} du \label{eqn:limsup:upperbound} \\
&\leq& C-\delta \nonumber \\
&<& C. \nonumber 
\eeqa
where \eqref{eqn:limsup:upperbound} follows from \eqref{eqn:limsuptozero}.
But that is a contradiction because we have already shown that $\lim_{n \to \infty} \bE \brackets{R_n} =C$. Thus it must be that $\bprob{\liminf_{n \to \infty} R_n  \leq C-\delta}=0$, or equivalently that $\bprob{\liminf_{n \to \infty} R_n  > C-\delta}=1$.  This holds for any $\delta > 0$ and so as $\delta \to 0$, we have that
\[\bprob{\liminf_{n \to \infty} R_n \geq C}=1. \]

\end{proof}

We can now summarize all of the above in the main theorem of the paper:
\begin{theorem}\label{theorem:equivalenceThreeConditionsPM:reliable:ergodic:capacity}
The following conditions are equivalent:
\benum
\item The PM scheme is reliable.
\item $(\tW_n)_{n \geq 1}$ is $\bP$-ergodic.
\item The PM scheme achieves any rate below capacity.
\eenum
\end{theorem}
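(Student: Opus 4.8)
The plan is to establish the three-way equivalence as a short cycle $(1)\Rightarrow(2)\Rightarrow(3)\Rightarrow(1)$, since every arrow is either one of the two main results already proved or an immediate consequence of a definition. Nothing new needs to be built here; the theorem is the "assembly statement" that packages the ergodicity dichotomy, and the only real content is making sure the pieces line up.

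For $(1)\Leftrightarrow(2)$ I would simply invoke Theorem~\ref{thm:necessarySufficientConditionsPM:reliability}, which already states the full biconditional: the PM scheme is reliable if and only if $(\tW_n)_{n\geq 1}$ is $\bP$-ergodic. This in particular gives both $(1)\Rightarrow(2)$ and $(2)\Rightarrow(1)$.

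For $(2)\Rightarrow(3)$ I would invoke Theorem~\ref{thm:achievingCapacityErgodicity}: $\bP$-ergodicity of $(\tW_n)_{n\geq 1}$ implies that every rate $R<C$ is achievable in the sense of Definition~\ref{defn:achievability:rateR}. For the reader's benefit I would recall, in one line, why the pulled-back intervals $\An=S_{Y_{1:n}}^{-1}(\Dn)$ of Definition~\ref{defintion:pulled_back_intervals} are the witnesses $A_n^\epsilon$ demanded there: condition (iii) holds with equality because $\bpost_n(\An)=\bprob{\tW_{n+1}\in\Dn\mid\cFYn}=\bnu(\Dn)=1-\epsilon$ by Lemma~\ref{lemma:PMscheme:independence}, and condition (iv) holds because $-\tfrac1n\log\bnu(\An)=R_n-\tfrac1n\log(1-\epsilon)$ while $\liminf_n R_n\geq C$ $\bP$-a.s. by Theorem~\ref{thm:achievingCapacityErgodicity}, so the $\liminf$ exceeds any $R<C$.

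For $(3)\Rightarrow(1)$ I would read it straight off the definition of achievability: reliability of $\encpolicy$ is literally item (ii) of Definition~\ref{defn:achievability:rateR}, so the PM scheme achieving a single rate $R<C$ (and at least one such $R$ exists whenever $C>0$) already forces reliability; the degenerate case $C=0$ can be disposed of by the same observation applied to rate $R=0$, whose conditions (i)--(iv) still include the reliability requirement with (iv) automatic. Chaining the three arrows closes the loop, so all three statements coincide. The genuine obstacles are entirely upstream — the martingale-convergence and tail-triviality arguments behind Theorem~\ref{thm:necessarySufficientConditionsPM:reliability}, and the transfer of the Ces\`aro limit $\bE[R_n]\to C$ to the almost-sure bound $\liminf_n R_n\geq C$ via shift-invariance of $\{\liminf_n R_n\leq v\}$ and a reverse-Fatou contradiction behind Theorem~\ref{thm:achievingCapacityErgodicity} — both of which we take as given, so the present proof itself is essentially bookkeeping.
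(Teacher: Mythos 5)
Your proposal is correct and follows the paper's proof exactly: the paper likewise dispatches $(1)\Leftrightarrow(2)$ by Theorem~\ref{thm:necessarySufficientConditionsPM:reliability}, $(2)\Rightarrow(3)$ by Theorem~\ref{thm:achievingCapacityErgodicity}, and $(3)\Rightarrow(1)$ by property (ii) of Definition~\ref{defn:achievability:rateR}, so the theorem is indeed pure assembly. One small caution about your optional gloss: the identity $\bpost_n(A^\epsilon_n)=\bar{\nu}\bigl(D^\epsilon_{n+1}\bigr)=1-\epsilon$ is not immediate from Lemma~\ref{lemma:PMscheme:independence}, since $D^\epsilon_{n+1}=\tau^\epsilon(\tW_{n+1})$ is a random set depending on $\tW_{n+1}$ (not $\cFYn$-measurable), but this is immaterial here because you invoke Theorem~\ref{thm:achievingCapacityErgodicity} as a black box for the implication $(2)\Rightarrow(3)$.
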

\begin{proof}
$(1) \Leftrightarrow (2)$: Theorem~\ref{thm:necessarySufficientConditionsPM:reliability};
$(2) \Rightarrow (3)$: Theorem~\ref{thm:achievingCapacityErgodicity};
$(3) \Rightarrow (1)$: Definition~\ref{defn:achievability:rateR} property (ii).
\end{proof}

\section{Applications and Examples}\label{sec:applications}

In this section, we utilize OTT to demonstrate the construction of a nonlinear
diffeomorphic map $S$ for PM schemes in arbitrarily high dimensions.

\begin{figure*}[t]
  \centering
  \begin{overpic}[width=\textwidth]{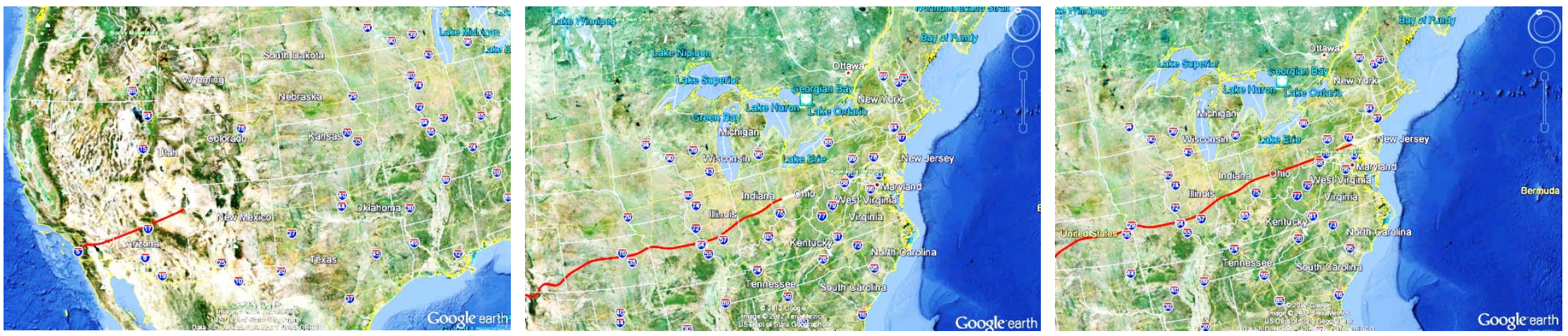}
  \put(1,-1){\tiny(a)}
  \put(34,-1){\tiny(b)}
  \put(68,-1){\tiny(c)}
  \end{overpic}
  \caption{Result of a brain-computer interface experiment. A smooth path (red)
  is specified sequentially using the PM scheme within the context of a
  brain-machine interface for the BSC shown in Example \ref{eg:BscPath}.}
  \label{fig:googleearth}
\end{figure*}

\subsection{One-Dimensional Posterior Matching Schemes}

\begin{example}[Original PM Scheme] \label{invCDF}
    When $\cW=(0,1)$ we now show how we can recover the posterior matching
    scheme by Shayevitz \& Feder in \cite{Shayevitz2011pm} using optimal transport theory.
    It is well-known \cite{villani2009optimal} if $c(w,v)=h(u-v)$ for some strictly convex $h: \reals \to \reals$, then
     the optimal map for $\OTTcost(P,Q,c)$ where $P \in \probSimplex{(0,1)}$ and $Q \in \probSimplex{(0,1)}$ is uniform $(0,1)$, $S^*$ is the map $S^*(u)=F(u)$ where $F$ is the CDF associated with $P$.  Within the case of the PM scheme, $P \equiv P_{W|Y=y}$ and $Q = P_W$, the uniform distribution on $(0,1)$.  As such, $S_y^*(u)=F_{W|Y_1=y}(u)$ which is exactly the PM scheme from \cite{Shayevitz2011pm}.
\end{example}

    \begin{remark}
      Note that clearly, $S_y(u) = 1-F_{W|Y_1=y}(u)$ is also a posterior matching scheme. This means that there
      are many maps that induce a posterior matching scheme. This
      special case was also discussed in \cite[Example 3.2.14]{rachev1998mass}.
    \end{remark}

\begin{example}[Horstein]
  As an example, for $\cW=[0,1]$, for a binary symmetric channel (BSC) with
  $\cY=\cX=\braces{0,1}$ and crossover probability $p$,
  \beqa
  \bP(Y=y|X=x) &=& \begin{cases} p, \quad y \neq x \nonumber\\ 1-p, \quad y=x
          \end{cases}
  \eeqa
  we have that the Horstein scheme \cite{horstein1963stu}:
  \begin{subequations} \label{eqn:HorsteinScheme}
  \begin{align}
  X_{n+1}&=\phi(\tW_{n+1})=\left\{
            \begin{array}{lll}
              0, & \tW_{n+1}\in[0,\frac{1}{2}] \\ 1, & \tW_{n+1}
              \in(\frac{1}{2},1]
            \end{array}
          \right.  \\
     &= \begin{array}{lll} 0, & W < m_n \equiv \text{median} \parenth{ \bpost_n}
              \\ 1, & W \geq m_n \equiv \text{median} \parenth{ \bpost_n}
            \end{array}
  \end{align}
  \end{subequations}
  is a posterior matching scheme that achieves capacity (see \cite{Shayevitz2011pm}). %
\end{example}

\begin{example}[BSC and Brain-Computer Interfaces]\label{eg:BscPath}
  This algorithm was originally implemented in \cite{omar2010feedback} and was
  used to specify a smooth path that is in one-to-one correspondence (via
  arithmetic coding) with a point $W \in (0,1)$.  The computer takes
  observations $Y_{1:n}$ to compute the posterior $\bpost_n$ and specifies a query
  point $m_n$ to the human as the median of $\bpost_n$.  The human specifies
  $X_n=0$ or $X_n=1$ in response to a query point, as given by
  \eqref{eqn:HorsteinScheme}.   The human user of a brain-computer interface can
  utilize EEG motor imagery to provide a series of binary inputs, imagine left
  ($X_{n+1}=0$) or imagine right ($X_{n+1}=1$). Assuming that the channel from
  human brain to EEG measurements is input-symmetric, we can model any EEG
  classification system as a binary symmetric channel (BSC) with outputs
  $Y_{n+1}=0$ or $Y_{n+1}=1$, and the process continues. The comparison between
  $W$ and $m_n$ performed by the human can be done visually by the decoder
  displaying $m_n$  as an ordered sequence and the user performing a
  lexicographic comparison with Cover's enumerative source coding \cite{cover1973enumerative}.  In this case, the first
  point where $W$ and $m_n$ deviate resulting in a counter-clockwise direction
  from $m_n$ to $W$ means that $W< m_n$, and vice versa for $W \geq m_n$.  This
  has the effect of the computer to sequentially attaining increasing confidence
  about longer sub-paths of the message $W$.  Figure~\ref{fig:googleearth}
  represents a simulation of this paradigm, overlaid on Google Earth, to
  demonstrate its feasibility of use in real-world scenarios.
\end{example}

\subsection{The Optimal Symmetric Brenier Map for Higher Dimensional Problems}\label{subsec:symbren}
We now provide an example for the Brenier map, which is the optimal solution to
$\OTTcost(P,Q, I_{d \times d})$.

We have previously demonstrated a BCI paradigm on a one-dimensional alphabet in
\cite{omar2010feedback}, and in Example~\ref{eg:BscPath}. To generalize,
consider a BCI with which the subject wishes to ``zoom in'' onto a point in 2D
space, e.g. a picture or a map \cite{tantiongloc2016information}. We treat this as a scenario where
$\dim(\cW)=\dim(\cX)=2$, and more specifically $\cW=[0,1]^2$, $\cX = \cY =
\braces{0,1}^2$. We consider a scenario where a BCI can allow for a human to
signal $X_n$ in one of four categories  \cite{schlogl2005characterization}. For
example, suppose the subject wants to zoom in onto Shannon's face in Figure
\ref{fig:shannon} (left panel, indicated by $\tW_1$) taken from \cite{shannonMouse}.
Suppose we record the subject's neural signals to extract his/her motor intent,
which is restricted to specifying one of the four quadrants of $\cW$ as
partitioned by the red cross in Figure~\ref{fig:shannon}:
\begin{align*}
  X_n&=\phi(\tW_n)\\
  &=\left\{
            \begin{array}{lll}
              (0,0), & \tW_n[1]\in(0,\frac{1}{2}], &\tW_n[2]\in(0,\frac{1}{2}] \\
              (0,1), & \tW_n[1]\in (0,\frac{1}{2}], &\tW_n[2]\in(\frac{1}{2},1) \\
              (1, 0), & \tW_n[1]\in(\frac{1}{2},1], &\tW_n[2]\in(0,\frac{1}{2}] \\
              (1, 1) & \tW_n[1]\in(\frac{1}{2},1], &\tW_n[2]\in(\frac{1}{2},1)
            \end{array}
          \right.
\end{align*}
The motor intent $x_n$ are input to a quadratic symmetric channel (QSC),
modeling $y$  as the output of a classifier based upon neural recordings, with
conditional probability:
\beqas
\bP(Y=y|X=x) &=& \begin{cases}
					\frac{p}{3}, \quad y \in \braces{0,1}^2 \setminus \braces{x} \nonumber\\
					1-p, \quad   y=x
				\end{cases}.
\eeqas
In this setting, the posterior distribution $P_{W|Y=y}$ is piece-wise constant
over the four quadrants $\braces{\phi^{-1}(x): x \in \braces{0,1}^2}$.  
Suppose that
$y_1=x_1$, give by the ordered pair $(0,1)$. By applying the Brenier map $S_{y_1}$ to the original Figure
(e.g. Figure~\ref{fig:shannon}, left panel), we have transformed it into the one
in Figure~\ref{fig:shannon}, right panel. It is clear that Shannon's face has
been enlarged and zoomed in onto in areas of higher posterior probability.
\begin{figure}[t]
  \centering
  \includegraphics[width=0.98\columnwidth]{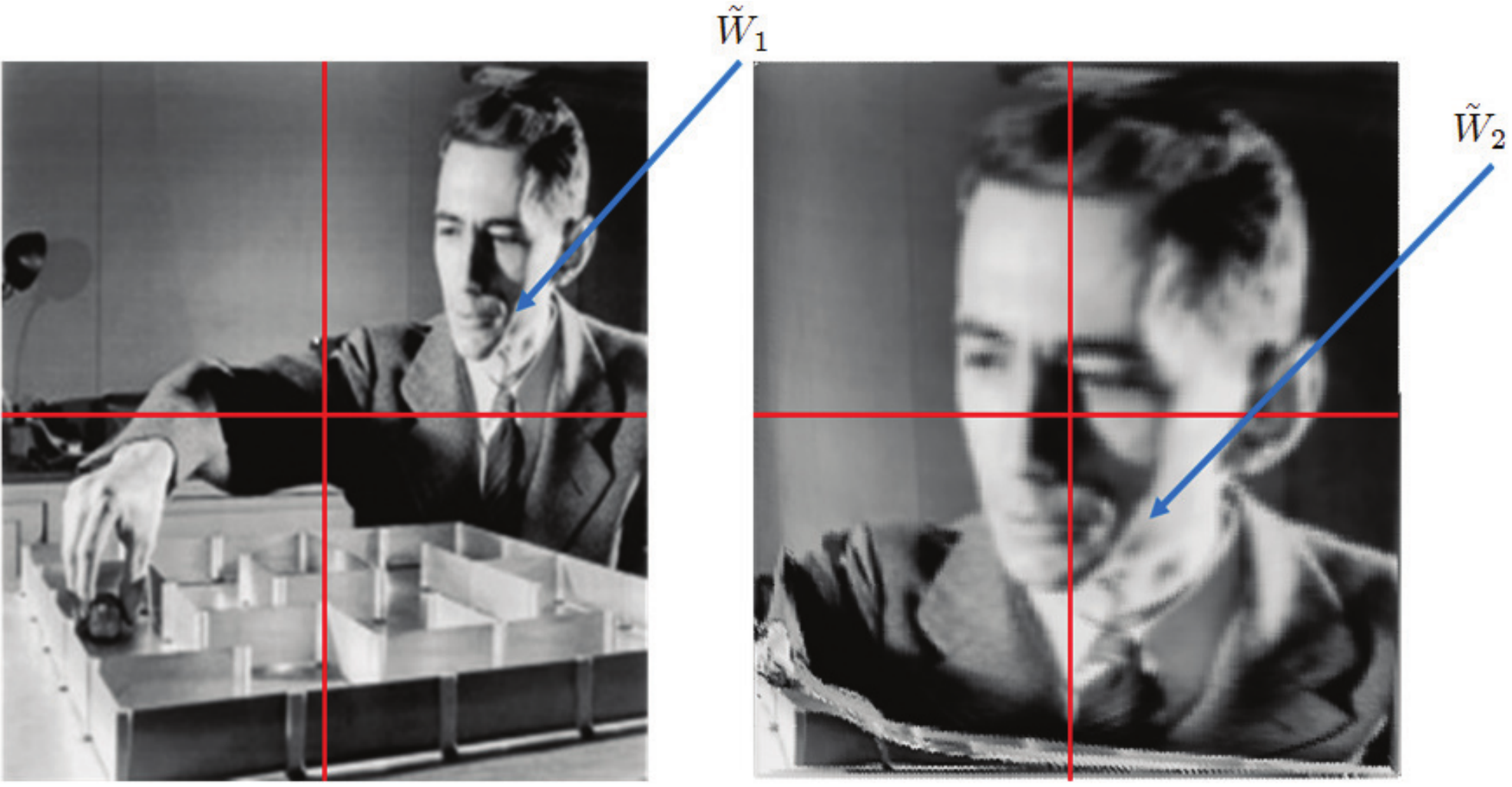}
  \caption{The Brenier optimal transport  posterior matching scheme, zooming in
  on a point in a picture of Claude Shannon and his mouse.   Left: Original picture of Shannon and his mouse from \cite{shannonMouse}.  Right: optimally computed `zoomed' in picture after application of $S_{y_1}$, where $y_1=1$ pertaining to the top right quadrant. }
  \label{fig:shannon}
\end{figure}

\subsection{The Knothe-Rosenblatt Map for Higher Dimensional Problems} \label{subsec:knhd}
Assume $\cW \subset \R^d$, and for any $w \in \cW$, the $k$-th component of $w$
is denoted by $w[k]$, and likewise for $v$. We now consider the 
problem $\OTTcost(P_{W|Y=y},P_W, c_M)$ for $M \succ 0$ to find a map $S_y$ for which $S_y \# P_{W|Y=y} = P_W$ in Corollary~\ref{corollary:unique:S}.  Assume that $M \equiv
M_\epsilon$ is given by
\beqa
M_\epsilon = \brackets{\begin{array}{c c c c}
               \alpha_\epsilon[1] & 0 & \ldots & 0 \\
               0 & \alpha_\epsilon[2] & \ldots  & 0 \\
               \ldots & \ldots & \ldots & \ldots \\
               0 & 0 & \ldots  & \alpha_\epsilon[d]
               \end{array}
               } \label{eqn:KnotheRosenblatt:Mepsilon}
\eeqa
\begin{lemma} \label{lemma:KnotheRosenblatt}
For $\cW=[0,1]^d$, suppose 
\beqa
\lim_{\epsilon \to 0} \frac{\alpha_\epsilon[k+1]}{\alpha_\epsilon[k]} = 0, \quad k=1,\ldots,d-1. \label{eqn:ratio:alpha}
 \eeqa
Then the optimal solution to $\OTTcost(P_{W|Y=y},P_W, c_{M_\epsilon})$ converges as $\epsilon \to 0$ to a Knothe-Rosenblatt map.  The limiting PM scheme becomes
\begin{subequations}\label{eqn:KR-PM}
\begin{align}
w_{n+1}[1] &= F_{W_n[1]|Y_n=y_n}(w_n[1]) \\
w_{n+1}[2] &= F_{W_n[2]|Y_n=y_n,W_{n+1}[1]=w_{n+1}[1]}(w_n[2]) \\
\ldots      & \ldots   \nonumber
\end{align}
\end{subequations}
\end{lemma}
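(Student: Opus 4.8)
The plan is to recognize $\OTTcost(P_{W|Y=y},P_W,c_{M_\epsilon})$ as an anisotropically weighted quadratic optimal transport problem and to invoke the classical ``Knothe-to-Brenier'' continuation method of Carlier, Galichon, and Santambrogio (see also \cite{villani2003topics,villani2009optimal}). First I would note that, since $M_\epsilon$ in \eqref{eqn:KnotheRosenblatt:Mepsilon} is diagonal, the cost splits as $c_{M_\epsilon}(u,v)=\sum_{k=1}^d \alpha_\epsilon[k]\,(u[k]-v[k])^2$, and that the change of variables $u\mapsto B_\epsilon u$ with $B_\epsilon=\mathrm{diag}(\sqrt{\alpha_\epsilon[1]},\dots,\sqrt{\alpha_\epsilon[d]})$ (the reduction used in the Generalized Brenier Theorem) turns it into the standard quadratic cost $c_I$ between the rescaled measures $(B_\epsilon)\#P_{W|Y=y}$ and $(B_\epsilon)\#P_W$. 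Hence, by Corollary~\ref{corollary:unique:S}, for each fixed $\epsilon>0$ there is a unique optimal map $S_y^\epsilon$, a diffeomorphism, with $S_y^\epsilon\#P_{W|Y=y}=P_W$; here one uses that $P_{W|Y=y}\ll\mu$ (Lemma~\ref{lemma:finiteMutInfo}) and that $P_W=\bnu$ is uniform on the bounded cube $\cW=[0,1]^d$, so both measures are absolutely continuous with finite second moments.

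The heart of the argument is the limit $\epsilon\to0$. Hypothesis \eqref{eqn:ratio:alpha}, $\alpha_\epsilon[k+1]/\alpha_\epsilon[k]\to0$, is exactly the ordering-of-scales condition under which the maps $S_y^\epsilon$ converge (in $L^2(P_{W|Y=y})$, and in particular $P_{W|Y=y}$-almost everywhere along a subsequence, hence fully since the limit is unique) to the Knothe--Rosenblatt rearrangement $S_y^{\mathrm{KR}}$ of $P_{W|Y=y}$ onto $P_W$; the continuation argument only uses the hierarchy of the weights, not the special normalization $\alpha_\epsilon[k]=t^{k-1}$, so the general matrices $M_\epsilon$ are covered. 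I would then make the limiting map explicit: because the target $P_W$ is uniform on $[0,1]^d$, every conditional law of $P_W$ of the $k$-th coordinate given the first $k-1$ is uniform on $[0,1]$, so the triangular structure of $S_y^{\mathrm{KR}}$ collapses to successive conditional CDFs,
\[
S_y^{\mathrm{KR}}(w)[1]=F_{W[1]|Y=y}(w[1]),\qquad
S_y^{\mathrm{KR}}(w)[k]=F_{W[k]|Y=y,W[1:k-1]=w[1:k-1]}(w[k]),\quad k\ge 2,
\]
where the conditional CDFs are computed under $(\Omega,\cF,\bP)$ with $P_W=\bnu$ and the channel of the PM scheme. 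Substituting $S_{Y_n}=S_{Y_n}^{\mathrm{KR}}$ into the recursion $\tW_{n+1}=S_{Y_n}(\tW_n)$, using the stationarity of $(\tW_n,Y_n)$ from Lemma~\ref{lemma:PMscheme:independence} to replace $F_{W[k]|Y=y}$ by $F_{W_n[k]|Y_n=y_n}$, and observing that conditioning on $\tW_n[1:k-1]=w_n[1:k-1]$ is the same event as conditioning on $\tW_{n+1}[1:k-1]=w_{n+1}[1:k-1]$ (the first $k-1$ components of $S_y^{\mathrm{KR}}$ act triangularly and strictly monotonically), yields exactly \eqref{eqn:KR-PM}. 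Finally I would check that the limit is still a legitimate PM scheme: $S_y^{\mathrm{KR}}\#P_{W|Y=y}=P_W$ by construction, and each conditional CDF is strictly increasing on the support of the corresponding conditional density, so $S_y^{\mathrm{KR}}$ is invertible $P_{W|Y=y}$-a.e., which is what \eqref{eqn:defn:PM-compatible:a} and the later use of $S_y^{-1}$ in \eqref{eqn:WinvertibleWithWn} require.

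The step I expect to be the main obstacle is the convergence result itself: carefully verifying the hypotheses of the Carlier--Galichon--Santambrogio theorem in the present setting (absolute continuity and the moment/support bounds on $P_{W|Y=y}$, which hold because $\cW$ is bounded, and a mode of convergence strong enough to identify the $\epsilon\to0$ limit with the explicit triangular map above), and confirming that general anisotropic weights satisfying only the ratio condition \eqref{eqn:ratio:alpha}---rather than a one-parameter power law---still fall within the scope of that result. A secondary technical point is regularity: unlike each $S_y^\epsilon$, the limiting $S_y^{\mathrm{KR}}$ need not be a diffeomorphism, so I would only claim invertibility $P_{W|Y=y}$-almost everywhere and note, as above, that this weaker statement already suffices for all subsequent uses in the paper.
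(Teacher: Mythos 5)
Your proposal is correct and follows essentially the same route as the paper: the paper's proof simply cites the Knothe-to-Brenier continuation result (the Carlier--Galichon--Santambrogio/Santambrogio work you invoke) for the convergence under the ratio condition \eqref{eqn:ratio:alpha}, and Rosenblatt's 1952 construction for the explicit successive-conditional-CDF form when the target is uniform. Your additional checks (absolute continuity of $P_{W|Y=y}$, the triangular substitution into the PM recursion, and almost-everywhere invertibility of the limit) are sound elaborations of details the paper leaves implicit.
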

\begin{proof}
  That the limit in \eqref{eqn:ratio:alpha} renders the Knothe-Rosenblatt map
  follows from \cite{santambrogio2010knothe}.  That the Knothe-Rosenblatt map for
  the scenario where $P\equiv P_{W|Y=y}$ and $Q \equiv P_W$ is given by \eqref{eqn:KR-PM} is a 
  direct application of \cite{rosenblatt1952remarks}.
\end{proof}
In this case, $M_\epsilon$ has increasing weights $\alpha_\epsilon[k]$ in $k$.   As such,
preference is given more to certain axes of the message point as compared to the
other. As such, the selection of the lopsided matrix $M_\epsilon$ in
$\OTTcost(P_{W|Y=y},P_W, M_\epsilon)$ provides a rational design methodology
to elicit PM schemes with unequal error protection \cite{borade2009unequal,nakiboglu2013bit},
\subsection{Multi-Antenna Communication as Multivariate Gaussian Channels} \label{sec:Gaussian}

In this section we will take two routes to solve for the optimal mapping in an
example case of a multi-antenna additive Gaussian noise with causal feedback.
We consider the case where the covariance matrices of the noise and the capacity
achieving inputs are not necessarily the identity matrix. We will find optimal
couplings with respect to the symmetric Brenier cost and the Knothe-Rosenblatt
cost, and show that in both cases, the schemes achieve capacity - although in
very different ways.  The former scheme will be shown to be the
multi-dimensional analogue to the `innovations' scheme by Schalkwijk and Kailath
\cite{schalkwijk1966csaOne}, while the latter can be interpreted from a
`onion-peeling' or `successive cancellation' perspective \cite{zhang2007successive}.

Suppose we have a multi-antenna communication problem with feedback
(Figure~\ref{fig:MIMO-feedback}).  As such, we model $\cX = \cY = \reals^d$,
where $d$ is the number of transmit antennas and also the number of receive
antennas.  Naturally, to develop a PM scheme where $\cX=\cW$ satisfying Assumption~\ref{assumption:cW}, it is desirable for
$\Phi: \cW \mapsto \cX$ to be an invertible map for which $\dim(\cW)=\dim(\cX)$.
As such, we model $\cW = [0,1]^d$. We model the Gaussian noise as $Z_i \sim
\mathscr{N}(0, \Sigma_N) \in \reals^d$ where $\Sigma_N$ is not necessarily
diagonal.  The received signal is given by $Y_i = X_i + Z_i$.

\begin{figure}[ht]
  \centering
  \begin{overpic}[width=0.7\columnwidth]{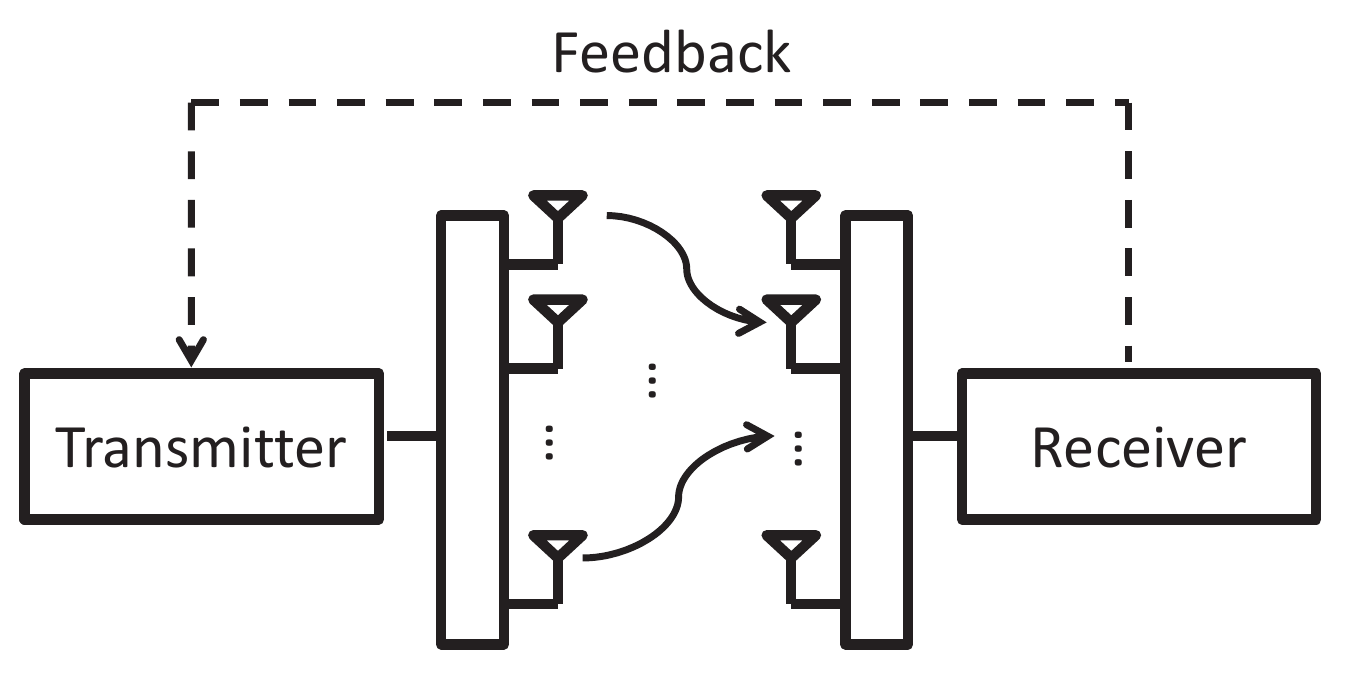}
  \end{overpic}
  \caption{Feedback  communication over a Gaussian MIMO channel with feedback.  $\dim(\cX) > 1$.}
  \label{fig:MIMO-feedback}
\end{figure}

Under an average power constraint, the optimal input distribution is given by $\px = \mathscr{N}(0,
\Sigma_X)$ for some $\Sigma_X$.  Note that $P_{X_1|Y_1} =\mathscr{N}(\E(X_1|Y_1),
\Sigma_{X|Y})$.  It can be directly shown \cite[Prop 3.4.4]{hajek2009rp} that
 $\E(X_1|Y_1) = \Sigma_X
\Sigma^{-1}_Y Y_1$ and $\Sigma_{X|Y} = \Sigma_X - \Sigma_X
\Sigma^{-1}_Y \Sigma_X$.
From Corollary~\ref{cor:PM-compatible-phi-invertible}, it follows that
it is our desire to construct a PM-compatible map $\bar{S}_y \deq \phi \circ S_y \phi^{-1}$ such that
$\bar{S}_y \# P_{X_1|Y_1=y} = P_{X}^*$.

\subsubsection{The Optimal Brenier Map for Higher Dimensional Problems}
Here we consider finding a diffeomorphism
$\bar{S}_{y}: \reals^d \to \reals^d$ that solves $\OTTcost(P_{X_1|Y_1=y},P_X^*, I_{d \times d})$.
\cite[Theorem 3.4.1]{rachev1998mass} provides the solution:
\begin{eqnarray}
  X_{n+1} &=& \bar{S}_{Y_n}(X_n) = \alpha (X_n - \Sigma_X \Sigma^{-1}_Y Y_n)
  \label{eqn:snr} \\
  \alpha &=& \Sigma^{\frac{1}{2}}_{X} \left( \Sigma^{\frac{1}{2}}_{X}
  \Sigma_{X|Y} \Sigma^{\frac{1}{2}}_{X} \right)^{-\frac{1}{2}}
  \Sigma^{\frac{1}{2}}_{X} \nonumber
\end{eqnarray}
Using MMSE estimation theory and Corollary~\ref{cor:PM-compatible-phi-invertible}, we can alternatively find the symmetric Brenier map by first positing that
\begin{eqnarray}
  X_{n+1} = \alpha(X_n - \beta Y_n). \label{eqn:pmguess}
\end{eqnarray}
with unknown $\alpha$ and $\beta$. $\beta = \Sigma_X
\Sigma^{-1}_Y$ ensures that $X_{n+1}$ is independent of $Y_n$ (by joint Gaussianity and MMSE estimation), so we only need
to find an invertible $\alpha$. Because all variables are jointly Gaussian, and
since both sides of \eqref{eqn:pmguess} have zero-means, we can simply operate
on covariance matrices.
\begin{equation*}
  \Sigma_X = \alpha(\Sigma_{X|Y})\alpha^T = \alpha(\Sigma_X - \Sigma_X \Sigma^{-1}_Y \Sigma_X)\alpha^T  %
\end{equation*}
It can be verified that this leads to the same linear algebra problem
encountered in the OTT formulation solved by Olkin and Pukelsheim
\cite{olkin1982distance}:
\[ \alpha = \Sigma^{\frac{1}{2}}_{X} \left( \Sigma^{\frac{1}{2}}_{X}
\Sigma_{X|Y} \Sigma^{\frac{1}{2}}_{X} \right)^{-\frac{1}{2}}
\Sigma^{\frac{1}{2}}_{X}. \]
As such, it follows that the optimal Brenier map is the $d$-dimensional
Schalkwijk-Kailath scheme \cite{schalkwijk1966csaOne}. This $d$-dimensional
scheme was also used to prove fundamental limits of control over noisy channels
in \cite{Elia04}.
\begin{remark}
  It can be easily verified that the 1-dimensional case of \cite[eqn. 22]{Shayevitz2011pm} that
  $X_{n+1} = \sqrt{1+\text{SNR}}(X_n - \frac{\text{SNR}}{1+\text{SNR}} Y_n)$ is a
  natural derivation of the results \eqref{eqn:snr} in higher
  dimension spaces.
\end{remark}

\subsubsection{The Optimal KR map and Successive Cancellation}
Now consider the parameterized family $(M_\epsilon)_{\epsilon > 0}$ of positive definite matrices given by
\eqref{eqn:KnotheRosenblatt:Mepsilon}.  Then these tend to the Knothe-Rosenblatt couplings, and analogous to Lemma~\ref{lemma:KnotheRosenblatt}, the optimal map in the two-dimensional Gaussian case is given by:
\beqa
X_{n+1}[1] &=& \beta_1 \parenth{X_n[1] - \E\brackets{X_n[1]|Y_n}} \nonumber\\
X_{n+1}[2] &=& \beta_2 \parenth{X_n[2]-\E\brackets{X_n[2]|Y_n}} +\beta_3 X_{n+1}[1] \nonumber
\eeqa
See  \cite{rosenblatt1952remarks} for the $\beta_1,\beta_2,\beta_3$ constants.
This can be naturally extended for arbitrary $d > 2$. Note that the essence of
this scheme is successive cancellation \cite{zhang2007successive} used to decode corner points in a multiple-access problem \cite{CoverThomas06}: first decode the first dimension of $W$,
followed by the second given knowledge of the first, etc.  Here, we are using
the chain rule to expand $I(W;Y) = I(W[1];Y) + I(W[2];Y|W[1])$.  Thus, this has
the potential to be applicable to unequal error protection scenarios where the
first dimension of $W$ has more important information than the second.

\section{Discussion and Conclusion} \label{sec:discussionConclusion}
In this paper, with the aid of optimal transport theory, we  generalized the
notion of posterior matching for message point feedback communication problems
from the $(0,1)$ interval to $d$-dimensional Euclidean space.  In addition, we
developed notions of reliability and achievability from an almost-sure
perspective and subsequently used classical probabilistic analysis methods to
establish \emph{succinct} necessary and sufficient conditions on when both
reliability and achieving capacity occur: Birkhoff ergodicity of the $(\tW_n)_{n
\geq 1}$ random process at the encoder.   The applications included
multi-antenna communications, and brain-computer interfaces.

Multiple theorems and lemmas established connections to modern applications
including intrinsic methods for stability of the nonlinear filter in hidden
Markov models and optimal transport theory.  It may be fruitful to further
cross-fertilize ideas between information theory and these areas.  For instance,
determining when the $(\tW_n)_{n \geq 1}$ random process is Birkhoff ergodic is
a natural next question.  In general, especially for continuous channels, this
may be very challenging.  Leveraging recent progress in optimal transport theory
may enable the design of certain cost functions which guarantee the optimal
$S_y^*$ results in ergodicity. Additionally, the surprising all-or-nothing
nature between reliably communicating under PM schemes and the Birkhoff
ergodicity of $(\tW_n)_{n \geq 1}$ could suggest new ways to transform questions
about the ergodicity of a random process, into questions about appropriate
channels under which they can be reliably communicated.

The use of optimal transport theory for posterior matching to construct a map to
transform a sample from the posterior distribution on the message point $W$ to
the prior distribution on $W$ is \emph{dual} to optimal transport methods
developed for Bayesian inference, where the objective is to construct a map that
transforms a sample from the prior distribution on the latent variable into a
sample from the posterior \cite{el2012bayesian}.   As such, this implies that
for the same pair of posterior and prior, solving for a map for one problem
simultaneously solves for the other (with taking an inverse).  Recently
developed convex optimization methods for Bayesian inference with optimal maps
\cite{kim2014dynamic,MesaKimColeman15ISIT,mesa2018NC} thus may possibly
leveraged for construction of maps within the context of posterior matching.

Future work could explore the Knothe-Rosenblatt optimal transport construction
and its possible connection to ``onion-peeling'' successive cancellation
decoding \cite{zhang2007successive}, as well as unequal error protection
\cite{nakiboglu2013bit}.  The fact that ergodicity is the if and only if
condition for message point recovery and optimal convergence rates, suggests
there might be further opportunities to use this type of mathematics to study
optimization over stochastic dynamical systems problems in areas such as team
decision theory (e.g. control over noisy channels
\cite{Elia04,ardestanizadeh2012control,kulkarni2015optimizer}) and statistical
physics \cite{mitter2005information}.

Additionally, future work could focus on further developing the use of this
framework for use in interacting systems of humans and machines beyond
traditional brain-computer interfaces. For example, in the field of interactive
reinforcement learning, some research has focus on systems in which a computer
attempts to learn an optimal control policy from a human where sequential
corrective action is being given by a human
\cite{knox2008tamer,vien2013learning,mericcli2010complementary}. This
generalization of the posterior matching scheme complements previous research
\cite{castro2009human,tsiligkaridis2014collaborative} into developing the
necessary theory to maximize the efficiency of such systems. 
\section*{Acknowledgments}
The authors thank Sean Meyn, Ramon Van Handel, Ofer Shayevitz, Meir Feder, Tara Javidi, Young-Han Kim, Maxim Raginsky, and Robert Gray for useful discussions.

\bibliographystyle{IEEEtran}
\bibliography{GMCPosteriorMatchingReferences}

\end{document}